\newcommand{\nfrac}{\nicefrac}
\newcommand{\defeq}{\stackrel{\textup{def}}{=}}
\long\def\symbolfootnote[#1]#2{\begingroup%
\def\thefootnote{\fnsymbol{footnote}}\footnote[#1]{#2}\endgroup}
\newcommand{\CL}{\mbox{\footnotesize ${\mathcal L}$}}
\newcommand{\CP}{\mbox{${\mathcal P}$}}
\newcommand{\R}{\mathbb{R}}
\newcommand{\vv}{\mbox{\boldmath $v$}}
\newcommand{\yy}{\mbox{\boldmath $y$}}
\newcommand{\zz}{\mbox{\boldmath $z$}}
\newcommand{\pp}{{\mbox{\boldmath $p$}}}
\newcommand{\qq}{\mbox{\boldmath $q$}}
\newcommand{\rr}{\mbox{\boldmath $r$}}
\newcommand{\xx}{\mbox{\boldmath $x$}}
\DeclareMathOperator*{\argmax}{arg\,max}
\newcommand{\ZZ}{{\mbox{$\mathbb Z$}}}
\newcommand{\QQ}{{\mbox{$\mathbb Q$}}}
\renewcommand{\R}{{\mbox{$\mathbb R$}}}
\newcommand{\Real}{{\mbox{$\mathbb R$}}}
\newcommand{\bb}{\mbox{\boldmath $b$}}
\newcommand{\cc}{\mbox{\boldmath $c$}}
\newcommand{\uu}{\mbox{\boldmath $u$}}
\renewcommand{\ll}{\mbox{\boldmath $\lambda$}}
\renewcommand{\l}{\mbox{$\lambda$}}
\renewcommand{\b}{\mbox{$\beta$}}
\newcommand{\bbeta}{\mbox{\boldmath $\beta$}}
\newcommand{\lcp}{\mbox{LCP}}
\newcommand{\tx}{\mbox{$\tilde{x}$}}
\newcommand{\ty}{\mbox{$\tilde{y}$}}
\newcommand{\tA}{\mbox{$\tilde{A}$}}
\newcommand{\tB}{\mbox{$\tilde{B}$}}
\renewcommand{\qq}{{\mbox{\boldmath $q$}}}
\newcommand{\ee}{{\mbox{\boldmath $e$}}}
\newcommand{\txx}{\mbox{\boldmath $\tilde{x}$}}
\newcommand{\tzz}{\mbox{\boldmath $\tilde{z}$}}
\newcommand{\tyy}{\mbox{\boldmath $\tilde{y}$}}
\newcommand{\ones}{\mbox{\boldmath $1$}}
\newcommand{\zeros}{\mbox{\boldmath $0$}}
\newcommand{\D}{{\mbox{$\Delta$}}}
\newcommand{\dpo}{\mbox{$\Delta^+_1$}}
\newcommand{\dmo}{\mbox{$\Delta^-_1$}}
\newcommand{\dpt}{\mbox{$\Delta^+_2$}}
\newcommand{\dmt}{\mbox{$\Delta^-_2$}}
\newcommand{\Djpo}{\mbox{$\Delta^{j+}_1$}}
\newcommand{\Djmo}{\mbox{$\Delta^{j-}_1$}}
\newcommand{\Djpt}{\mbox{$\Delta^{j+}_2$}}
\newcommand{\Djmt}{\mbox{$\Delta^{j-}_2$}}
\def\CP{{\mathcal P}}
\def\CP{{\mathcal P}}
\date{}
\title{Constant Rank Bimatrix Games are PPAD-hard}
\author{\large Ruta Mehta\thanks{Supported by NSF Grant CCF-1216019. 
}\\
\footnotesize{College of Computing, Georgia Institute of Technology.}\\
\footnotesize{Email: rmehta@cc.gatech.edu}}
\begin{document}
\maketitle

\thispagestyle{empty}

\begin{abstract}
The rank of a bimatrix game $(A,B)$ is defined as $rank(A+B)$.  Computing a Nash equilibrium (NE) of a rank-$0$, i.e.,
zero-sum game is equivalent to linear programming (von Neumann'28, Dantzig'51).  In 2005, Kannan and Theobald gave an FPTAS
for constant rank games, and asked if there exists a polynomial time algorithm to compute an exact NE. Adsul et al. (2011)
answered this question affirmatively for rank-$1$ games, leaving rank-$2$ and beyond unresolved.

In this paper we show that NE computation in games with rank $\ge3$, is PPAD-hard, settling a decade long open problem.
Interestingly, this is the first instance that a problem with an FPTAS turns out to be PPAD-hard.  Our reduction bypasses
graphical games and game gadgets, and provides a simpler proof of PPAD-hardness for NE computation in bimatrix games. In
addition, we get: 

\begin{itemize}
\item An equivalence between $2D$-Linear-FIXP and PPAD, improving a result by
Etessami and Yannakakis (2007) on equivalence between Linear-FIXP and PPAD.
\item NE computation in a bimatrix game with convex set of Nash equilibria is as hard as solving a simple stochastic game
\cite{condon}.
\item Computing a symmetric NE of a symmetric bimatrix game with rank $\ge 6$ is PPAD-hard.
\item Computing a $\frac{1}{poly(n)}$-approximate fixed-point of a (Linear-FIXP) piecewise-linear function is PPAD-hard.
\end{itemize}

The status of rank-$2$ games remains unresolved.

\end{abstract}
\newpage
\setcounter{page}{1}

\section{Introduction}
Two player, finite, non-cooperative games constitute the most simple and fundamental model within game theory
\cite{myerson}, and have been studied extensively for their computational and structural properties.  Such a game can be
represented by two payoff matrices $(A,B)$, one for each player, and therefore are also known as bimatrix games. Von Neumann
(1928) showed that in games where one player's loss is the other player's gain ($B=-A$, zero-sum), the min-max strategies
are stable \cite{neumann}. This turned out to be equivalent to linear programming (LP) \cite{dantzig,adler_zerosum} and
therefore polynomial-time computable. In 1950, John Nash \cite{nash} extended this notion to formulate an equilibrium
concept, and proved its existence for finite multi-player games.  It has since been named Nash equilibrium (NE) and is
perhaps the most important and well-studied solution concept in game theory. 

The classical Lemke-Howson algorithm (1964) \cite{LH}, to compute Nash equilibrium in general bimatrix games, performs very
well in practice. However it may take exponential time in the worst case \cite{sav}. 
Other methods that followed \cite{lemke,ET1} are also similar in nature \cite{ann,GPS}, and a complexity theoretic study of
the problem was called for. Henceforth, by $2$-Nash we mean computing a Nash equilibrium of a bimatrix game.

The complexity class NP is not applicable for $2$-Nash, because an equilibrium is guaranteed to exist \cite{nash}.  However,
computing a special kind of NE, for numerous special properties, has been shown to be NP-complete \cite{GZ,CS}.  In 1994
Papadimitriou introduced complexity class PPAD \cite{papa}, {\em Polynomial Parity Argument for Directed} graph, for
problems with path following argument for existence, like Sperner's lemma \cite{sperner}. He showed that $2$-Nash, among
many other problems, is in PPAD. After more than a decade, the problem was shown to be PPAD-hard in a remarkable series of
works \cite{DGP,CDT}.  Chen et. al. \cite{CDT} showed that even $\frac{1}{poly(n)}$-approximation of $2$-Nash is PPAD-hard,
{\em i.e.}, if there is a fully polynomial-time approximation scheme (FPTAS) for $2$-Nash then PPAD=P.  This was followed by
PPAD-hardness results for special classes of bimatrix games, like sparse games \cite{CDT_sparse} and win-lose games
\cite{winlose}, and their approximation were also shown to be PPAD-hard.

On the positive side, polynomial-time algorithms were developed for many special classes of games; 
see Section \ref{sec.rw} for an overview of previous results. 
Among these, one of the most significant is the class of constant rank games defined by
Kannan and Theobald (2005) - rank of game $(A,B)$ is defined as $rank(A+B)$.  They gave an FPTAS for constant rank
games,\footnote{$O(\nfrac{L}{\epsilon})^k poly(n)$ time algorithm to compute an $\epsilon$-approximate Nash equilibrium in a
rank-$k$ $n \times n$ game of bit size $L$.} and asked if there is an efficient algorithm to compute an exact NE in these
games. Note that, rank-$0$ are zero-sum games, and therefore are polynomial-time solvable.  For rank-$1$ games, Adsul et.
al. \cite{AGMS} gave a polynomial time algorithm, by reducing the problem to $1$-dimensional fixed-point, however rank-$2$
and beyond remained unresolved. 

In this paper we show that NE computation in games with rank $\ge3$ is PPAD-hard, settling a decade long open problem. Since
there is an FPTAS for constant rank games, this result comes as a surprise, because until now whenever a problem, in games
or markets, was shown to be PPAD-hard, so was its approximation (i.e., no FPTAS unless PPAD=P)
\cite{CDT,CDT_sparse,winlose,Chen.plc,kintali}.

To obtain the result, we reduce $2D$-Brouwer, a two dimensional discrete fixed point problem which is known to be PPAD-hard
\cite{CD}, to a rank-$3$ game. The reduction is done in two steps. First we reduce $2D$-Brouwer to $2D$-Linear-FIXP;
Linear-FIXP \cite{EY07} is a class of fixed-point problems with polynomial piecewise-linear functions, and $kD$-Linear-FIXP
is its subclass consisting of $k$-dimensional fixed-point problems. 
In the second step, we reduce an instance of $2D$-Linear-FIXP to a rank-$3$ bimatrix game, such that a
linear function of Nash equilibrium strategies of the resulting game gives fixed-points of the $2D$-Linear-FIXP instance.

Our reduction completely bypasses the machinery of graphical games and game gadgets, central to the previous approaches, and
instead exploits relations between LPs, linear complementarity problems (LCPs) and bimatrix games.  Such a conceptual leap
seems to be necessary to show hardness of constant rank games, since the game gadgets used previously inherently give rise
to higher rank games. Our approach also provides a simpler proof for PPAD-hardness of $2$-Nash, and may be of
independent interest 
to show hardness for other problems, and to understand connections between
parameterized LPs and bimatrix games. We can achieve further simplification by avoiding even the parameterized LP, but the
resulting game turns out to be of high rank.

Apart from the hardness of constant rank games, a number of results follow as corollaries from our reduction.  The first
step shows PPAD-hardness of $2D$-Linear-FIXP and thereby improves the equivalence result Linear-FIXP = PPAD of Etessami and
Yannakakis to $2D$-Linear-FIXP = PPAD. This also implies $2D$-Linear-FIXP = Linear-FIXP; in other words the class of
Linear-FIXP remains unchanged even when functions are restricted to two dimensions. Since, an instance of 
$1D$-Linear-FIXP can be solved in polynomial time using binary search, 
our result establishes a {\em dichotomy} between $1D$ and $kD$, $k\ge2$ Linear-FIXP
problems; the former are in P and the latter are PPAD-complete.

Our approach can be extended to reduce $kD$-Brouwer to $kD$-Linear-FIXP to rank-$(k+1)$ games, where the reduction from
$kD$-Linear-FIXP to rank-$(k+1)$ games (almost) preserves the number of solutions. Using this, together
with a result from \cite{EY07}, we show that bimatrix games with convex set of NE are no easier. In fact they are as hard as solving
simple stochastic games, which are known to be in NP $\cap$ coNP \cite{condon}, however despite significant efforts its exact
complexity remains open \cite{condon_alg,AM}. Further, we can show that computing weak\footnote{Vector $\xx$ is a {\em weak}
$\epsilon$-approximate fixed-point of function $f$ if $\|\xx-f(\xx)\|_\infty \le \epsilon$} $\frac{1}{poly(n)}$-approximate
fixed-point of a function in Linear-FIXP is also PPAD-hard. It will be interesting to see if this can be
extended to show hardness of approximation in $2$-Nash.

Since NE computation in a rank-$k$ game can be reduced to computing symmetric NE of a symmetric game with rank-$2k$ 
\cite{agt.ch2}, we get that computing symmetric Nash equilibria in symmetric games with rank $\ge 6$ is PPAD-hard.  Again
computing symmetric NE in symmetric rank-$0$ games can be solved using LP, and for rank-$1$ games recently Mehta et.  al.
\cite{MVY} gave a polynomial-time algorithm. This leaves the status of symmetric games with rank between $2$ and $5$
unresolved. Also the status of rank-$2$ bimatrix games remains unresolved.

\subsection{Overview of the Reduction}
In this section we explain the main ideas behind the reductions: from $2D$-Brouwer to $2D$-Linear-FIXP, and then to 
rank-$3$ game. We start with a brief description of $2D$-Brouwer and Linear-FIXP problems. 

$2D$-Brouwer is a class of $2$-dimensional discrete fixed-point problems, known to be PPAD-hard \cite{CD2D}.  An instance of
$2D$-Brouwer consists of a grid $G_n=\{0,\dots,2^n-1\}\times\{0,\dots,2^n-1\}$ and a valid coloring function
$g:G_n\rightarrow\{0,1,2\}$ which satisfies some boundary conditions, and thereby ensures existence of a trichromatic unit
square in the grid.\footnote{This is similar to the Sperner's lemma} The problem is to find one such trichromatic square
(see Section \ref{sec.2db} for details).  Function $g$ is specified by a Boolean circuit $C^b$ with $2n$ input bits; $n$
bits to represent each of the two co-ordinate of a grid point.\footnote{We use super-script $b$ to differentiate Boolean
circuits from Linear-FIXP circuits that will follow.}

Linear-FIXP \cite{EY07} is a class of fixed-point problems with polynomial piecewise-linear functions.  A function $F:[0,\
1]^n$ $\rightarrow[0,\ 1]^n$ in Linear-FIXP is defined by a circuit, say $C$, with $n$ real inputs and outputs, and
$\{\max,+,*\zeta\}$ operations, where $*\zeta$ is multiplication by a rational constant (see Section \ref{sec.lf} for
details).  Such a function has rational fixed-points of size polynomial in the input size \cite{EY07}.  We denote the class
of $k$-dimensional fixed-point problems in Linear-FIXP by $kD$-Linear-FIXP. 

Given circuit $C^b$ of a $2D$-Brouwer instance, in Section \ref{sec.2dlf} we construct a $2D$-Linear-FIXP circuit $C$ such
that all the fixed-points of the function $F$ defined by $C$ are in trichromatic unit squares of the grid $G_n$.  It is easy
to simulate $C^b$ in $C$ by replacing $\land$, $\lor$ and $\lnot$ with $\min, \max$ and $(1-x)$ respectively, if input to
this simulation is guaranteed to be Boolean. To guarantee this, we need to extract
bit representation of $\lfloor \pp \rfloor$, for a
$\pp \in [0,\ 2^n-1]^2$.  Since, {\em floor} is a discontinuous function it can not be simulated using Linear-FIXP circuit, whose
operations can generate only continuous functions. 
However, we design a bit extraction gadget which does the job for almost all the points, except those that are close to the
boundary of unit squares of $G_n$. Finally, using a sampling lemma similar to that of \cite{CDT} we ensure that the
fixed-points of the function defined by the resulting circuit $C$ are always in trichromatic unit squares of the grid $G_n$,
and we get,

\begin{theorem}[Informal]\label{thm1}
Computing a fixed-point of a Linear-FIXP instance with $k$ inputs and $k$ outputs, with $k>1$, is PPAD-hard.
In other words, $2D$-Linear-FIXP=PPAD.
\end{theorem}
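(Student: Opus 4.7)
The plan is to reduce the PPAD-hard problem $2D$-Brouwer to $2D$-Linear-FIXP. Given the Boolean circuit $C^b$ describing the coloring $g: G_n \to \{0,1,2\}$ of the grid $G_n = \{0,\ldots,2^n-1\}^2$, I would build a Linear-FIXP circuit $C$ whose associated function $F:[0,1]^2 \to [0,1]^2$ (after rescaling, equivalently a function on $[0,2^n-1]^2$) has the property that every fixed point of $F$ lies inside a trichromatic unit square of $G_n$. Given such an $F$, any fixed point returned by a $2D$-Linear-FIXP oracle immediately solves the original $2D$-Brouwer instance.

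The simulation of $C^b$ itself inside $C$ is the easy part: replace each gate $\land, \lor, \lnot$ by $\min, \max, 1-(\cdot)$, which agree with the Boolean semantics on $\{0,1\}$-valued inputs. The delicate step is feeding $C^b$ with the bit representation of $\lfloor \pp \rfloor$ for the continuous input $\pp \in [0,2^n-1]^2$. Since $\{\max,+,*\zeta\}$ generate only continuous functions, the discontinuous floor cannot be simulated exactly. I would therefore design a piecewise-linear \emph{bit-extraction gadget}, using ramp functions built from $\max$ and affine shifts, that returns the correct bits of $\lfloor p \rfloor$ whenever the coordinate $p$ is at distance at least some small $\delta$ from any integer, and that interpolates linearly in the narrow ``forbidden'' strips of width $\delta$ around each integer boundary. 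The most significant bit is extracted first by a suitable ramp on $[0,2^n-1]$, then the residual is rescaled and the construction recurses down to the least significant bit.

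Next I would design $F$ so that fixed points force all three colors to be present. Assign to each color $c \in \{0,1,2\}$ a displacement vector $\pdelta_c \in \R^2$ with $\pdelta_0+\pdelta_1+\pdelta_2 = \zeros$ and no two of them parallel; set $F(\pp) = \pp + \alpha \cdot \pdelta_{g(\lfloor \pp \rfloor)}$ for a small $\alpha$, clipped into the domain. A point in a monochromatic or bichromatic neighborhood has nonzero net displacement, so fixed points can only occur where all three colors appear in one unit cell; the boundary conditions of $2D$-Brouwer rule out fixed points on the outer frame. To neutralize the forbidden strips created by bit extraction, I would apply a sampling/averaging argument in the spirit of Chen--Deng--Teng \cite{CDT}: evaluate the displacement at a constant number of offset copies $\pp + \pt_i$ with carefully chosen $\pt_i$ and average, so that at any putative fixed point at least one sample lies firmly inside each relevant cell and the extracted coloring is meaningful. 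The main obstacle is this calibration step, namely choosing $\delta$, $\alpha$, and the sample offsets so that the forbidden strips never produce spurious fixed points while the whole construction still fits into a polynomial-size Linear-FIXP circuit; once this is settled, the $2D$-Linear-FIXP $\supseteq$ PPAD direction follows, and the reverse inclusion is immediate from $\mathrm{Linear\text{-}FIXP} = \mathrm{PPAD}$ of \cite{EY07}.
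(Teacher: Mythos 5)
Your proposal is essentially the paper's own proof: reduce $2D$-Brouwer to $2D$-Linear-FIXP via a ramp-based bit-extraction gadget valid off thin strips near integer boundaries, simulate $C^b$ with $\min/\max/(1-x)$, assign each color a displacement vector (the paper uses $\ee^0=(-1,-1)$, $\ee^1=(1,0)$, $\ee^2=(0,1)$, summing to zero), and neutralize the poorly-positioned strips by a CDT-style sampling/averaging lemma so that a zero net displacement forces a trichromatic cell while the boundary coloring rules out spurious fixed points. The one ``main obstacle'' you defer --- calibrating the strip width, sample offsets, and step size --- is exactly what the paper's Lemma~\ref{lem.be} and Lemma~\ref{lem.sampling} (with $L=2^l>16$, $16$ samples offset by $(j-1)(1/L,1/L)$, and clipping of each sampled increment to $[-1,1]$) carry out, so no new idea is missing.
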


Etessami and Yannakakis \cite{EY07} showed that Linear-FIXP $=$ PPAD. Theorem \ref{thm1} improves this to $2D$-Linear-FIXP =
PPAD, and in turn we get Linear-FIXP = $2D$-Linear-FIXP, {\em i.e.,} 
fixed-point problems with polynomial piecewise-linear functions in constant (two) dimension are as hard as those in
$n$-dimension.

Next, we reduce the fixed-point computation of a $kD$-Linear-FIXP instance to Nash equilibrium computation in a rank-$(k+1)$
game (see Section \ref{sec.r3}). Let $\ll=(\l_1,\dots,\l_k)$ denote the $k$ inputs of circuit $C$ of the given
$kD$-Linear-FIXP instance.  First we replace circuit $C$ by a parameterized linear program $LP(\ll)$, so that circuit
evaluation for a given input is same as solving the LP. 

This is done as follows: There is an ordering among $\max$ gates since $C$ forms a DAG. Suppose $x_i$ captures the output of
the $i^{th}$ $\max$ gate.  Since the $+$ and $*\zeta$ operations of circuit $C$ generates only linear expressions, for
$x_j=\max\{L,R\}$, $L$ and $R$ both are linear expression in $x_1,\dots,x_{j-1}$ and $\ll.$ Further, this $\max$ operation
is equivalent to $x_j\ge L, x_j\ge R, (x_j-L)(x_j-R)=0$. The first two linear conditions define the feasible region of
LP, where $x_j$s are variables and $\l_i$s are parameters. Note that, r.h.s. of the constraints of LP is parameterized by
$(\l_1,\dots,\l_k)$, and the constraint matrix is lower triangular. 
Using this property we show that $\exists \cc$ such that for all $\ll$, $\min: \cc^T\xx$ over this
feasible region will ensure the quadratic constraints as well for each $\max$ gate. This gives the $LP(\ll)$ which can
replace circuit $C$.

Since, primal-dual feasibility, and complementary slackness characterizes solutions of an LP, LP is a special case of linear
complementarity problem (LCP). Using this connection for $LP(\ll)$, we construct an $\lcp_C$ whose solutions exactly capture
the fixed point of the given $kD$-Linear-FIXP instance (Section \ref{sec.lcp}). Further, the matrix of the LCP turns out to
be off-block-diagonal, with the two blocks in off-diagonal adding up to a rank-$k$ matrix. Finally, using the fact that the
LCP capturing Nash equilibria of a bimatrix game also has a off-block-diagonal matrix, we construct a bimatrix game, whose Nash
equilibria are in one-to-one correspondence with the solutions of $\lcp_C$. The rank of the resulting game turns out to be
$(k+1)$, and one of its payoff matrix is upper-triangular. 

\begin{theorem}[Informal]\label{thm1}
Nash equilibrium computation in bimatrix games with rank $\ge 3$ is PPAD-hard, even when one of the payoff matrix is lower/upper triangular.
\end{theorem}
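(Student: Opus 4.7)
The plan is to chain the preceding informal theorem (PPAD-hardness of $2D$-Linear-FIXP) with a polynomial reduction that converts any $2D$-Linear-FIXP circuit $C$ on inputs $\ll=(\l_1,\l_2)$ into a bimatrix game $(A,B)$ of rank exactly $3$ with $B$ upper triangular, such that a simple linear projection of any Nash equilibrium of $(A,B)$ yields a fixed point of the function $F$ defined by $C$. The extension to rank $\ge 3$ is then immediate by padding with strictly dominated rows/columns that leave the NE set unchanged and add $0$ to $\mbox{rank}(A+B)$. The core reduction passes through a parameterized LP and then an LCP.

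For the first bridge, I would topologically order the $\max$ gates of $C$ as $x_1,\dots,x_m$; because $+$ and $*\zeta$ produce only linear expressions, each gate has the form $x_j=\max\{L_j,R_j\}$ with $L_j,R_j$ linear in $x_1,\dots,x_{j-1}$ and $\ll$. Replacing each $\max$ by the two inequalities $x_j\ge L_j$ and $x_j\ge R_j$ gives a feasible region whose constraint matrix is lower triangular and whose right-hand side is linear in $\ll$. The delicate step is to choose a single objective $\cc^T\xx$, independent of $\ll$, whose minimum enforces $(x_j-L_j)(x_j-R_j)=0$ at every $\max$ gate uniformly for $\ll\in[0,1]^2$. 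Exploiting the lower-triangular order, I would build $\cc$ from the output back toward the inputs with geometrically decaying weights, so that any slack at gate $j$ dominates all cost contributions from gates $j+1,\dots,m$ and is therefore suboptimal; this yields $LP(\ll)$ whose unique optimum equals $C(\ll)$.

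For the second bridge, write the KKT conditions of $LP(\ll)$ as an LCP $\zz\ge\zero,\ \MM\zz+\qq(\ll)\ge\zero,\ \zz^T(\MM\zz+\qq(\ll))=0$, where $\MM$ is off-block-diagonal (primal vs.\ dual) and the sum of its two off-diagonal blocks has rank $k=2$ because $\ll$ enters $LP(\ll)$ in exactly two parameter slots. Imposing the fixed-point condition by identifying the two designated output coordinates of $C$ with $(\l_1,\l_2)$ produces $\lcp_C$ whose solutions are exactly the fixed points of $F$. To turn $\lcp_C$ into a bimatrix game, I would use that NE of $(A,B)$ are solutions of an LCP whose matrix is itself off-block-diagonal with blocks $A$ and $B^T$; placing $\MM$ in these blocks (after standard positivity shifts, which are rank-$0$) and folding $\qq(\ll)$ into the payoffs via the identification $\ll\leftrightarrow$ output (a rank-$1$ update) yields $(A,B)$ with $\mbox{rank}(A+B)=k+1=3$, while upper triangularity of $B$ is inherited directly from the lower-triangular constraint matrix of $LP(\ll)$.

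The main obstacle is the objective-construction step: a single $\cc$ must force complementary slackness at every $\max$ gate simultaneously for every $\ll\in[0,1]^2$, so the geometrically decaying weights have to be quantified against the largest numerators and denominators appearing in $C$ to rule out dual-degenerate alternative optima. A secondary subtlety is that the rank budget of $3$ is tight, leaving no slack in the LCP-to-game passage: the positivity shift must be rank-$0$ and the $\ll\leftrightarrow$ output identification strictly rank-$1$. Both obstacles are controlled by the same lower-triangular structure induced by the topological order on $\max$ gates, which I expect to be the essential enabler of the proof.
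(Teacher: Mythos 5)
Your plan matches the paper's reduction essentially step for step: topologically ordering the $\max$ gates (after normalizing each so one operand is zero) to obtain a lower-triangular parameterized LP, building a cost vector by a backward recursion so that slack at any gate $j$ cannot be profitably traded against gates $j+1,\dots,m$, identifying the two designated outputs with $(\l_1,\l_2)$ to obtain a self-contained LCP, and embedding that LCP as the best-response conditions of a bimatrix game with one extra strategy absorbing the right-hand side $\bb$, giving $\mathrm{rank}(\tilde A+\tilde B)\le k+1$. The only slips are cosmetic: the paper's cost weights \emph{grow} (not decay) as you recurse backward from the output gate, and it is the first payoff matrix $\tilde A$ (built from $H^T$) that comes out upper-triangular while $\tilde B$ carries the rank-$k$ correction; also, the paper never performs a positivity shift, instead proving directly (via semi-monotonicity of the LCP matrix) that every NE has the auxiliary strategy played with positive probability, which is how the tight rank budget is preserved.
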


Theorem \ref{thm1}, together with the reduction from $2$-Nash to symmetric
 $2$-Nash \cite{agt.ch2}, implies that computing symmetric NE of a symmetric game with rank $\ge 6$ is PPAD-hard.
Further, this gives a simpler proof of PPAD-hardness of $2$-Nash, {\em i.e.,} without using the graphical games and game
gadgets.  

In Section \ref{sec.simple} we further simplify this proof by avoiding the parameterized LPs as well, where we
first construct a symmetric game whose symmetric NE are in one-to-one correspondence with the fixed-points.  As consequences
we get that Nash equilibrium computation in bimatrix games with convex set of Nash equilibria (Corollary \ref{cor.ssg}), and
computing a unique symmetric NE of a symmetric game (Corollary \ref{cor.ssg_sym}), both are as hard as solving a simple
stochastic game, since the latter reduces to finding a unique fixed-point of a Linear-FIXP problem \cite{EY07}.

In Section \ref{sec.approx} we extend the first step of the reduction, to reduce $kD$-Brouwer to $kD$-Linear-FIXP.  We show
that when an instance of $kD$-Brouwer with a $k$-dimensional grid $\{0,\dots,2^n-1\}^k$ is reduced to an instance of
$kD$-Linear-FIXP, not only exact fixed-points but also all the $\frac{1}{2^npoly(k)}$-approximate fixed-points are in
panchromatic unit cube of the grid. Chen et al. \cite{CDT} showed that a class of $kD$-Brouwer with $n=3$ is PPAD-hard,
where $k$ is an input parameter and not a constant. Therefore, we get that $\frac{1}{poly(\CL)}$-approximation of
Linear-FIXP is PPAD-hard (Theorem \ref{thm.approx}), where $\CL$ is the size of the input instance.

It will be interesting to extend this result to bimatrix games using the reduction of Section \ref{sec.simple}, and thereby
getting a simpler proof of inapproximability in 2-Nash as well. 
Importantly, our work leaves the status of rank-$2$ games, and symmetric games with rank between $2$ and $5$, unresolved. 

\subsection{Related Work}\label{sec.rw}
Efficient algorithms have been designed for many special classes of bimatrix games.  Lipton et. al. \cite{lip} gave a
pseudo-polynomial time algorithm, which remains the best known bound till now.  In addition, they gave a polynomial time
algorithm for games where $\max\{rank(A), rank(B)\}$ is a constant. Later Garg et. al. \cite{GJM} improved it to
$\min\{rank(A),rank(B)\}$ being constant. Note that, these classes are restrictive and do not capture even all of zero-sum
games.  For random games, B\'{a}r\'{a}ny et. al. \cite{BVV} showed that there exists a NE with support size $2$ with
$O(1-\nfrac{1}{\log n})$ probability, and using this gave an algorithm which is efficient with high probability.  A game is
called win-lose game, if all the entries of $A$ and $B$ are either zero or one. Chen et. al. \cite{CDT_sparse} gave a
polynomial-time algorithm for win-lose sparse games, and Addario-Berry et. al. \cite{AOV} gave one for win-lose planar
games.

Many algorithms are designed to achieve constant factor approximation for $2$-Nash \cite{DMP,BBM,TS}; the best known factor
till now is $0.3393$ due to Tsaknakis and Spirakis \cite{TS}.  Although designing a polynomial time approximation scheme
(PTAS) remains open, PTASs were designed for special classes, like Daskalakis and Papadimitriou \cite{DP_ptas} gave one for
sparse games and games whose equilibria are guaranteed to have small-$O(1/n)$-values, and Alon et. al. \cite{ALSV} gave a
PTAS for games with rank-$(\log{n})$.

\section{Preliminaries}\label{sec.prel}
To show the hardness of rank-$3$ games, we start with $2D$-Brouwer, reduce it to Linear-FIXP and then to a bimatrix game.
In this section we discuss each of these problems separately.  First we describe a characterization of Nash equilibria in
bimatrix games, and the class of $2D$-Brouwer problems.  Both the problems are known to be PPAD-complete \cite{CDT,CD2D}.
Next, we describe Linear-FIXP, 
\cite{EY07}, and define a subclass called $kD$-Linear-FIXP.  
\medskip

\noindent{\bf Notations:} All the vectors are in bold-face letters, and are considered as column vectors. To denote a row
vector we use $\xx^T$. The $i^{th}$ coordinate of the vector $\xx$ is denoted by $x_i$. $\ones$ and $\zeros$ represent all
ones and all zeros vector respectively of appropriate dimension. We use $[n]$ to denote the set $\{1,\dots,n\}$.

\subsection{Bimatrix games and Nash equilibrium}\label{sec.bimatrix}
A bimatrix game is a two player game, each player having finitely many pure strategies (moves).  Let $S_i,\ i=1,2$ be the
set of strategies of player $i$, and let $m \defeq |S_1|$ and $n\defeq|S_2|$.  Then such a game can be represented by two
payoff matrices $A$ and $B$, each of $m\times n$ dimension. If the first player plays strategy $i$ and the second plays $j$,
then the payoff of the first player is $A_{ij}$ and that of the second player is $B_{ij}$.  Note that the rows of these
matrices correspond to the strategies of the first player and the columns to the strategies of second player. 

Players may randomize among their strategies; a randomized play is called a {\em mixed strategy}.
The set of mixed strategies for the first player is $X=\{\xx=(x_1,\dots,x_m)\ |\ \xx\ge
0, \sum_{i=1}^m x_i=1\}$, and for the second player is $Y=\{\yy=(y_1,\dots, y_n)\ |\ \yy\ge 0, \sum_{j=1}^n y_j=1\}$. 
By playing $(\xx,\yy) \in X \times Y$ we mean strategies are picked independently at random as per $\xx$ by the
first-player and as per $\yy$ by the second-player. 
Therefore the expected payoffs of the first-player and second-player are, respectively\[ \sum_{i,j} A_{ij} x_i y_j=\xx^TA\yy\ \ \ \
\mbox{ and }\ \ \ \ \sum_{i,j}B_{ij}x_iy_j = \xx^TB\yy\]

\begin{definition}{(Nash Equilibrium \cite{agt.bimatrix})}
A strategy profile is said to be a Nash equilibrium strategy profile (NESP) if no player achieves a better payoff by a
unilateral deviation \cite{nash}. Formally, $(\xx,\yy) \in X\times Y$ is a NESP iff $\forall \xx' \in X,\
\xx^TA\yy \geq \xx'^TA\yy$ and $\forall \yy' \in Y,\  \xx^TB\yy \geq \xx^TB\yy'$. 
\end{definition}

Given strategy $\yy$ for the second-player, the first-player gets $(A\yy)_k$ from her $k^{th}$ strategy. Clearly, her best
strategies are $\argmax_k (A\yy)_k$, and a mixed strategy fetches the maximum payoff only if she randomizes among her best
strategies. Similarly, given $\xx$ for the first-player, the second-player gets $(\xx^TB)_k$ from $k^{th}$ strategy, and same
conclusion applies. These can be equivalently stated as the following complementarity type conditions,

\[
\begin{array}{ll}
\forall i \in S_1,\hspace{.06in} x_i>0 \ \ \Rightarrow\ \ & (A\yy)_i =\max_{k \in S_1} (A\yy)_k\\
\forall j \in S_2,\hspace{.06in} y_j>0 \ \ \Rightarrow & (\xx^TB)_j = \max_{k \in S_2} (\xx^TB)_k
\end{array}
\]

The next lemma follows from the above discussion. 

\begin{lemma}\label{lem.nash}
Strategy profile $(\xx,\yy) \in X\times Y$ is a NE of game
$(A,B)$ if and only if the following holds, where $\pi_1$ and $\pi_2$ are scalars capturing respective payoffs at $(\xx,\yy)$. 
\[
\begin{array}{c}
\forall i \in S_1, (A\yy)_i \le \pi_1;\ \ \ \ x_i((A\yy)_i-\pi_1)=0\\
\forall j \in S_2, (\xx^TB)_j \le \pi_2;\ \ \ \ y_j((\xx^TB)_j-\pi_2)=0\\
\end{array}
\]
\end{lemma}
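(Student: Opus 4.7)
The plan is to prove both directions of the equivalence by unpacking the expected payoff as a convex combination, namely $\xx^T A \yy = \sum_i x_i (A\yy)_i$ and $\xx^T B \yy = \sum_j y_j (\xx^T B)_j$, and exploiting that $\xx,\yy$ are probability vectors. The statement is symmetric in the two players, so I would carry out the argument for the first-player side in detail and note that the second-player side follows identically with $B^T$ replacing $A$.

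For the forward direction I would set $\pi_1 \defeq \xx^T A \yy$ and first extract the inequality $(A\yy)_i \le \pi_1$: a unilateral deviation by the first player to her $i$-th pure strategy achieves payoff $(A\yy)_i$, and the NE condition forces this to be at most $\xx^T A \yy = \pi_1$. The complementarity condition $x_i((A\yy)_i - \pi_1)=0$ then follows from a weighted-average argument: since $\pi_1 = \sum_i x_i (A\yy)_i$ is a convex combination, with nonnegative weights summing to $1$, of quantities each at most $\pi_1$, equality is possible only if every term with strictly positive weight actually attains $\pi_1$.

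For the converse I would again define $\pi_1 \defeq \xx^T A \yy$ and use the assumed complementarity to collapse the sum $\xx^T A \yy = \sum_i x_i (A\yy)_i$ to $\pi_1 \sum_i x_i = \pi_1$, so the scalar $\pi_1$ in the hypothesis really is the first-player payoff at $(\xx,\yy)$. The NE inequality $\xx^T A \yy \ge \xx'^T A \yy$ for any $\xx' \in X$ then falls out by upper-bounding $\sum_i x'_i (A\yy)_i$ term-by-term using $(A\yy)_i \le \pi_1$ together with $\sum_i x'_i = 1$.

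There is no substantive obstacle: the lemma is essentially a repackaging of the displayed complementarity conditions immediately preceding it, with the scalars $\pi_1,\pi_2$ introduced so as to write the NE condition in the linear-inequality plus complementary-slackness form that matches an LCP, which is precisely what will be needed later in Section~\ref{sec.lcp}. The only point worth being careful about is that the same scalar $\pi_1$ simultaneously serves as a pointwise upper bound on every $(A\yy)_i$ and coincides with the mixed-strategy payoff $\xx^T A \yy$; but this equivalence is immediate from the convex-combination identity $\xx^T A \yy = \sum_i x_i (A\yy)_i$, which is the single computational observation underlying both directions.
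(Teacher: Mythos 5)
Your proof is correct and takes essentially the same route as the paper, which presents the lemma as an immediate consequence of the preceding best-response / complementarity discussion rather than giving a formal proof. Your convex-combination identity $\xx^T A \yy = \sum_i x_i (A\yy)_i$ is exactly the observation the paper's discussion appeals to implicitly; you have simply made both directions explicit.
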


Game $(A,B)$ is said to be symmetric if $B=A^T$. In a symmetric game the strategy sets of both the players are 
identical, i.e., $m=n$, $S_1=S_2$ and $X=Y$. We will use $n$, $S$ and $X$ to denote number of strategies, the strategy set
and the mixed strategy set respectively of the players in such a game. A Nash equilibrium profile $(\xx,\yy)\in X\times X$
is called {\em symmetric} if $\xx=\yy$. Note that at a symmetric strategy profile $(\xx,\xx)$ both the players get payoff
$\xx^TA\xx$. Using Lemma \ref{lem.nash} we get the following.

\begin{lemma}\label{lem.symnash}
Strategy profile $\xx \in X$ is a symmetric NE of game $(A,A^T)$, with payoff $\pi$ to both
players, if and only if, 
\[
\forall i \in S, (A\xx)_i \le \pi;\ \ \ \ x_i((A\xx)_i-\pi)=0
\]
\end{lemma}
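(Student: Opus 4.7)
The plan is to derive Lemma \ref{lem.symnash} as a direct specialization of Lemma \ref{lem.nash} to the symmetric setting with $B=A^T$ and $\xx=\yy$. The main observation is that when $B=A^T$, the quantity $(\xx^T B)_j = (\xx^T A^T)_j = (A\xx)_j$, so the two conditions in Lemma \ref{lem.nash} collapse into the same condition, and the two payoff scalars $\pi_1$ and $\pi_2$ coincide at a common value $\pi = \xx^T A \xx$.

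First I would handle the forward direction. Assume $\xx \in X$ is such that $(\xx,\xx)$ is a NESP of $(A,A^T)$. Set $\pi = \xx^T A \xx$. Applying Lemma \ref{lem.nash} to the second player's conditions (with $B = A^T$ and $\yy = \xx$) immediately yields $(A\xx)_i \le \pi_2$ and $x_i((A\xx)_i - \pi_2) = 0$ for every $i \in S$. It remains to verify $\pi_2 = \pi$; this follows by multiplying the complementary slackness condition by $x_i$ and summing over $i$, giving $\sum_i x_i (A\xx)_i = \pi_2 \sum_i x_i = \pi_2$, while the left side equals $\xx^T A \xx = \pi$.

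For the converse direction, assume $\xx \in X$ and $\pi$ satisfy the stated conditions. I would again sum $x_i((A\xx)_i-\pi)=0$ over $i$ to conclude $\xx^T A \xx = \pi$. Now verify the two conditions of Lemma \ref{lem.nash} for the profile $(\xx,\xx)$ with $\pi_1 = \pi_2 = \pi$: the second-player conditions $(\xx^T A^T)_j = (A\xx)_j \le \pi$ and $x_j((A\xx)_j - \pi) = 0$ are literally the hypothesis, and the first-player conditions $(A\xx)_i \le \pi$ and $x_i((A\xx)_i - \pi) = 0$ are also exactly the hypothesis. By Lemma \ref{lem.nash}, $(\xx,\xx)$ is a NESP of $(A,A^T)$, i.e., a symmetric NE.

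I do not expect any genuine obstacle here; this is a routine unpacking of definitions once one identifies $(\xx^T A^T)_j$ with $(A\xx)_j$. The only minor subtlety is justifying that $\pi$ in the lemma statement indeed equals the common payoff $\xx^T A \xx$, which falls out of the complementary slackness identity as above.
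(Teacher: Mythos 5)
Your proof is correct and is exactly the argument the paper has in mind: the paper simply states that the lemma follows from Lemma~\ref{lem.nash}, and your specialization to $B=A^T$, $\yy=\xx$, noting $(\xx^T A^T)_j=(A\xx)_j$ and $\pi_1=\pi_2=\xx^TA\xx$, is the natural unpacking. No issues.
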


The problem of computing such a Nash equilibrium strategy in bimatrix games is PPAD-complete \cite{CDT,DGP}. This also implies that
computing symmetric NE of a symmetric bimatrix game is PPAD-hard, because NE of game $(A,B)$ are in one-to-one correspondence with 
the symmetric NE of game
$(S,S^T)$ with $S=\left[\begin{array}{cc} 0 & A \\ B^T & 0\end{array}\right]$ \cite{agt.ch2}.

\subsection{2D-Brouwer}\label{sec.2db}
Let $G_{n}$ denote the two dimensional grid $\{0, \dots, 2^{n} -1\}\times \{0, \dots, 2^n-1\}$. A 3-coloring of $G_{n}$ is a
function $g$ from the vertices of $G_{n}$ to $\{0,1,2\}$. Function $g$ is said to be valid if for every vertex $(p_1, p_2)$ on the
boundary of $G_{n}$, we have
\[
\mbox{If }\ p_2=0\ \mbox{ then }\ g(\pp)=2,\ \ \mbox{ else if }\ p_2>0\ \&\ p_1=0\ \mbox{ then }\ g(\pp)=1,\ \ \mbox{ else } g(\pp)=0 
\]

Let $K_{\pp}$ denote the unit square with $\pp$ at the bottom left corner. Due to Sperner's Lemma it is known that for any
valid coloring of $g$ of $G_n$ there exists a vertex $\pp \in G_{n}$ such that vertices of $K_\pp$ have all the three colors -
trichromatic.
\medskip

\noindent{\bf 2D-Brouwer Mapping Circuit:}
Consider a Boolean circuit $C^b$ generating valid coloring on grid $G_{n}$.\footnote{We use the
definitions and terminology of \cite{CDT} to remain consistent.}
The circuit has $2n$ input bits, $n$ bits for each of the two integers representing a grid point, and $4$ output bits
$\D_1^+,\D_1^-,\D^+_2,\D^-_2$. It is a {\em valid Brouwer-mapping circuit} if the following is true:

\begin{itemize}
\item For every $\pp \in G_{n}$, the $4$ output bits of $C^b$ satisfies one of the following $3$ cases: 
\begin{itemize}
\item Case $0$: $i=1,2$, $\D^-_i=1$ and $\D^+_i=0$.
\item Case $i$, $i=1,2$: $\D^+_i=1$ and all the other $3$ bits are zero.
\end{itemize}
\item For every $\pp$ on the boundary of $G_{n}$, if $p_2=0$ then Case $2$ is satisfied, if $p_1=0$ and $p_2\neq 0$ then Case $1$ is
satisfied, and for the rest Case $0$ is satisfied. 
\end{itemize}

Such a circuit $C^b$ defines a valid color assignment $g_{C^b}: G_{n} \rightarrow \{0,1,2\}$ by setting $g_{C^b}(\pp)=i$, if
the output bits of $C^b$ evaluated at $\pp$ satisfy Case $i$.

\begin{definition}
[2D-Brouwer \cite{CD2D}] The input to the 2D-Brouwer consists of a valid Brouwer-mapping circuit $C^b$ that produces a valid
coloring on $G_{n}$. The problem is to find a point $\pp \in G_{k}$ such that $K_\pp$ is trichromatic. 
\end{definition}

The size of the given 2D-Brouwer problem is $size[C^b]$, which is \#input nodes + \#output nodes + \# gates.

The outputs of the circuit (defining a color), can also be mapped to incremental vector $(\D^+_1-\D^-_1, \D^+_2-\D^-_2)$. 
Let $\ee^i$ be the incremental vector corresponding to Case (color) $i$, 
then clearly, $\ee^0=(-1,-1), \ee^1=(1,0)$ and $\ee^2=(0,1)$. Define a discrete function $H$, such that
$H(\pp)=\pp+\ee^{g_{C^b}(\pp)}$. 
It is easy to see that if $C^b$ is a valid Brouwer-mapping circuit, then $H$ is $G_{n}\rightarrow G_{n}$, and vertices of a trichromatic
square $K_\pp$ goes in each of the $\ee^i$ direction under $H$.
Chen and Deng showed finding such a square is PPAD-hard \cite{CD2D}.

\subsection{Linear-FIXP}\label{sec.lf}
Etessami and Yannakakis \cite{EY07} defined the class FIXP to capture complexity of the exact fixed point problems with
algebraic solutions. An instance $I$ of FIXP consists of an algebraic circuit $C_I$ defining a function $F_I:[0,\ 1]^d
\rightarrow [0,\ 1]^d$, and the problem is to compute a fixed-point of $F_I$. The circuit is a finite representation of
function $F_I$ (like a formula), consisting of $\{\max, + , *\}$ operations, rational constants, and $d$ inputs and outputs.

The circuit $C_I$ is a sequence of gates $g_1,\dots,g_m$, where for $ i\in [d]$, $g_i:= \l_i$ is an input variable. For
$d< i\le d+r$, $g_i:=c_i \in \QQ$ is a rational constant, with numerator and denominator encoded in binary. For $i> d+r$ we
have $g_i=g_j \circ g_k$, where $j,k < i$ and the binary operator $\circ \in \{ \max, +, *\}$. The last $d$ gates are the
output gates.  Note that the circuit forms a directed acyclic graph (DAG), when gates are considered as nodes, and there is
an edge from $g_j$ and $g_k$ to $g_i$ if $g_i=g_j \circ g_k$. 
Since, circuit $C_I$ represents function $F_I$ it has to be the case that if we input $\ll \in [0,\ 1]^d$ to $C_I$ then all
the gates are well defined and the circuit outputs $C_I(\ll)=F_I(\ll)$ in $[0,\ 1]^d$.
We note that a circuit representing a problem in FIXP operates on real numbers, but 
the underlying model of computation is still the standard discrete Turing machine. 
In other words, an algorithm for FIXP problems is not allowed to do any computation on reals.

Let $*\zeta$ denote multiplication by a rational constant $\zeta\in \QQ$.  The Linear-FIXP is a subclass of FIXP where the
operations are restricted to $\circ \in \{max, +, *\zeta\}$.  A function defined by a Linear-FIXP circuit is polynomial
piecewise-linear, and all its fixed points are rational numbers of size $poly(L)$ \cite{EY07}, where $L$ is the total size
of the circuit which is \#inputs + \#gates + total size of the constants used in the circuit.  Etessami and Yannakakis
showed that PPAD = Linear-FIXP.  Next, we define a subclass of Linear-FIXP based on the number of inputs and outputs.

\begin{definition}
For a $k \ge 1$, an instance $I$ is in $kD$-Linear-FIXP if  
$F_I:[0,\ 1]^k \rightarrow [0,\ 1]^k$. i.e., $F_I$ is defined by a circuit with $k$ inputs and $k$ outputs. 
\end{definition}

Since fixed-point of a $1$-dimensional piecewise-linear function can be computed in polynomial time using a binary search, 
$kD$-Linear-FIXP is in $P$ for $k=1$. 
But for any constant $k>1$ it is not clear if the problem is in $P$ or it is hard.
In the next section, we show that the problem is PPAD-hard even for $k=2$.

\section{PPAD-hardness of 2D-Linear-FIXP}\label{sec.2dlf}
In this section we describe the construction of a Linear-FIXP circuit with two inputs and two outputs, from an instance of $2D$-Brouwer defined
by a Boolean Brouwer-mapping circuit. We show that the function defined by the resulting $2D$-Linear-FIXP circuit is such that all its
fixed-points are in trichromatic squares of the $2D$-Brouwer instance, 
thereby proving PPAD-hardness of $2D$-Liner-FIXP using \cite{CD2D}.

Let $C^b$ be the valid Brouwer-mapping circuit of a given 2D-Brouwer instance on grid $G_{n}$, and $H$ be the discrete function
defined by circuit $C^b$.  We construct a Linear-FIXP
circuit $C$ that computes a function $F:[0,\ 2^n-1]^2\rightarrow [0,\ 2^n-1]^2$, an extension of the discrete
function $H$. 

Recall that given a bit representation of a grid point $\pp \in G_{n}$, circuit $C^b$ outputs four bits $\D^+_1, \D^-_1,\D^+_2,\D^-_2$,
so that for $I=(\dpo-\dmo, \dpt - \dmt)$, $H(\pp)=\pp+I$. Similarly, for every non-grid point $\pp =(p_1,p_2) \in K_{\qq}$, we need to
compute an incremental vector based on the incremental vectors of the vertices of $K_{\qq}$. For this we need to extract the
integer parts of $p_1$ and $p_2$, i.e., compute $\lfloor p_1\rfloor$ and $\lfloor p_2\rfloor$, and then its bit representation. 
Since, {\em floor} is a discontinuous function, it can not be computed using Linear-FIXP operations, which are inherently continuous.
However, next we achieve this 
for the points not very near to the boundary of any cell. 

Recall that the operations allowed in a Linear-FIXP circuit are $\{max, +, *\zeta\}$. Clearly, $\{min, -\}$ can be
simulated using the allowed operations. Let $L >16 $ be a large integer with value being a power of $2$, and at most polynomial in
$size[C^b]$, {\em i.e.}, $L=2^l \le poly(size[C^b])$.
Consider the {\em ExtractBits} procedure of Table \ref{tab1}.

\begin{table}[!h]
\begin{center}
\begin{tabular}{|l|}\hline
ExtractBits(a)\\
$x\leftarrow a$ \\ 
{\bf for} i=n-1 {\bf to} 0 {\bf do} \\
\hspace{10pt} $b_{i} \leftarrow \min\{\max\{((x-2^i)*L^2)+1,0\},1\}$\\
\hspace{10pt} $x\leftarrow x-2^ib_{i}$\\
{\bf endfor}\\
Output the bit vector $\bb=(b_{n-1},\dots, b_0)$.\\\hline
\end{tabular}
\caption{Extract Bits of the Integer Part}
\label{tab1}
\end{center}
\vspace{-0.5cm}
\end{table}

\begin{definition}\label{def.well-poor}
We say that $a \in \Real_+$ is {\em poorly positioned} if for some integer $t \in \ZZ_+$, $a=t+\epsilon$, where
$1-\frac{1}{L^2} < \epsilon < 1$. A point $\pp \in \Real_+^2$ is said to be {\em poorly-positioned}, if any of its
coordinates is poorly positioned, otherwise it is called {\em well-positioned}.
\end{definition}

\begin{lemma}\label{lem.be}
Given a well-positioned number $a \in [0,\ 2^n)$, 
the vector $\bb=ExtractBits(a)$ is a bit representation of $\lfloor a\rfloor$.
\end{lemma}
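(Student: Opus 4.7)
The plan is to prove the lemma by (downward) induction on the loop index $i$, tracking exactly what $x$ looks like at the start of each iteration. Write $\lfloor a\rfloor = \sum_{i=0}^{n-1} c_i 2^i$ for the true binary expansion, and let $\{a\} \defeq a - \lfloor a\rfloor$ be the fractional part. Since $a$ is well-positioned, we have $\{a\}\in[0, 1-\nicefrac{1}{L^2}]$. The inductive claim I would carry is: when the loop is about to process index $i$, $x = \sum_{j\le i} c_j 2^j + \{a\}$. This holds at the start ($i=n-1$) because $x=a$, and the loop body will preserve it.

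The main step is to verify that one iteration sets $b_i = c_i$ and updates $x$ correctly, using well-positionedness. Consider the expression $\phi(x)\defeq ((x-2^i)L^2)+1$ before clipping. In the case $c_i=1$, by the inductive hypothesis $x \ge 2^i + \{a\} \ge 2^i$, so $\phi(x)\ge 1$, and the $\min\{\max\{\cdot,0\},1\}$ clipping yields $b_i=1$; the update then gives $x\leftarrow x-2^i = \sum_{j<i} c_j 2^j + \{a\}$, matching the inductive form for the next index. In the case $c_i=0$, the hypothesis gives
\[
x = \sum_{j<i} c_j 2^j + \{a\} \;\le\; (2^i-1) + \Bigl(1-\tfrac{1}{L^2}\Bigr) \;=\; 2^i-\tfrac{1}{L^2},
\]
so $\phi(x)\le 0$, the clipping yields $b_i=0$, and the update $x\leftarrow x$ again preserves the invariant. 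This is exactly where well-positionedness is essential: without the gap $\nicefrac{1}{L^2}$ between $\{a\}$ and $1$, the value $\phi(x)$ could fall strictly between $0$ and $1$ in the $c_i=0$ case and the clipping would produce a spurious non-integer bit.

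Finally, after the loop terminates (index $i=0$ just finished), the invariant says $x = \{a\}\in[0,1)$, and along the way the algorithm has output $b_i=c_i$ for every $i\in\{0,\dots,n-1\}$. Hence $\sum_{i=0}^{n-1} b_i 2^i = \lfloor a\rfloor$, as required. The only subtlety worth flagging is that the inductive step relies on $x$ staying well-positioned throughout: indeed, the update either subtracts an integer ($2^i$) from $x$ or leaves it unchanged, so the fractional part of $x$ remains $\{a\}$ for the entire run, which is what makes the single assumption ``$a$ is well-positioned'' propagate automatically through all $n$ iterations. Once this invariance is observed, the rest of the argument is the routine case analysis above.
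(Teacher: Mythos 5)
Your proof is correct and follows essentially the same route as the paper: downward induction on the loop index, using the well-positionedness gap $\nicefrac{1}{L^2}$ to show the pre-clipping expression is $\ge 1$ when the bit is $1$ and $\le 0$ when the bit is $0$. Your version just makes the loop invariant (the fractional part $\{a\}$ is preserved throughout) explicit, whereas the paper phrases the same inductive step by re-applying the hypothesis to $x = a - b_{n-1}2^{n-1}$.
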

\begin{proof}
Let $a=a'+\epsilon$, where $a' \in \ZZ_+$ and $0 \le \epsilon \le 1-\frac{1}{L^2}$.  We show that every $b_i$ is either $0$
or $1$, and is set correctly.  Proof is by induction. If $a'\ge 2^{n-1}$, then clearly, $(a-2^{n-1})*L^2+1 \ge 1$ and
$b_{n-1}$ will be one. If $a'< 2^{n-1}$, then $(a-2^{n-1})*L^2+1 \le (-1+\epsilon)*L^2 +1 \le (-1 +1 -\frac{1}{L^2})L^2 +1
\le 0$ and hence $b_{n-1}$ will be zero. In either case $x = a - b_{n-1} 2^{n-1}$ will satisfy the hypothesis, and we can
apply the same argument for bit $b_{n-2}$.
\end{proof}

Given a well positioned point $\pp \in K_\qq$, we can extract bit representations of each of the coordinates of $\qq$ 
due to Lemma \ref{lem.be}, and hence of all the vertices of $K_\qq$. Next task is to obtain each of their incremental
vectors by simulating circuit $C^b$ in Linear-FIXP. 
Circuit $C^b$ is a Boolean circuit with operations $\land,\lor$ and $\lnot$ and takes only Boolean input. 
These operations are easy to simulate in Linear-FIXP:
If $a, b \in \{0, 1\}$, then clearly $a \land b = \min\{a,b\}$, $a\lor b=\max\{a,b\}$ and $\lnot a = (1-a)$.  

Thus, if $\pp$ is well positioned, then incremental vectors of the vertices of $K_\qq$ can be computed using a Linear-FIXP
circuit. However, if $\pp$ is poorly-positioned, then Lemma \ref{lem.be} provides no guarantees and indeed the {\em
ExtractBits} procedure may produce vector $\bb$ with the value $b_i$s being anything in $[0,\ 1]$.
This is expected due to continuity property of Linear-FIXP operations.
Similar difficulty arises in the approaches of Daskalakis et al. \cite{DGP} and Chen et al. \cite{CDT}. Both resort to a
sampling argument, first proposed in \cite{DGP}, and later improved in \cite{CDT}. Next we describe a version of \cite{CDT}
argument.

Given a set of points $S=\{\pp^1,\dots, \pp^l\}$, let $I_w(S)$ and $I_p(S)$ denote the set of indices of the well and poorly
positioned points of $S$ respectively. Given $\pp \in \R^2_+$, let $\pi(\pp)
= \{\qq \ |\ q_1, q_2 \mbox{ are the largest}$ $\mbox{integers from } \{0,\dots,2^n-1\} \mbox{ s.t. } q_1$ $\le p_1 \mbox{ and } q_2\le
p_2\}$.  For $\ee^1=(1,0), \ee^2=(0,1)$ and $\ee^0=(-1,-1)$, let
$\zeta(\pp)=\ee^i$, where $i={g_{C^b}(\pi(\pp))}$.

\begin{lemma}\label{lem.sampling}
Given $\pp \in [0,\ 2^n-1]^2$, consider the set $S=\{\pp^1,\dots,\pp^{16}\}$ such that
\[
\pp^j = \pp + (j-1)(\frac{1}{L}, \frac{1}{L}),\ \ \ \ j \in [16]
\]

For each $j \in I_p(S)$, let $\rr^j \in \R^2$ be a vector with $\|\rr^j\|_\infty \le 1$.
And for each $j \in I_w(S)$, let $\rr^j=\zeta(\pp^j)$.
If $\| \sum_{j=1}^{16} \rr^j\|_\infty =0$ then $K_{\pi(\pp)}$ is trichromatic.
\end{lemma}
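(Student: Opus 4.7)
The plan is to argue in three stages: first bound how many of the $16$ diagonal samples can be poorly positioned, then observe that every sample lies inside the single unit cell $K_{\pi(\pp)}$ so its $\zeta$-value is driven by one of the four vertices of that cell, and finally use the cancellation hypothesis $\|\sum_j\rr^j\|_\infty=0$ to conclude that all three colors must occur among those vertices.

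For the first stage I would show $|I_p(S)|\le 2$. A sample $\pp^j$ is poorly positioned iff some coordinate lies in a thin interval of the form $(t+1-\tfrac{1}{L^2},\,t+1)$. Because consecutive samples differ by exactly $\tfrac{1}{L}$ in each coordinate and $\tfrac{1}{L}>\tfrac{1}{L^2}$, no two samples can share such an interval around the same integer $t$. Moreover, each coordinate sweeps a total range of length $\tfrac{15}{L}$, and the assumption $L>16$ gives $\tfrac{15}{L}+\tfrac{1}{L^2}<1$, which rules out two samples falling into thin intervals around two different integers as well. Hence at most one sample is poorly positioned in each coordinate, so $|I_p(S)|\le 2$ and $|I_w(S)|\ge 14$.

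For the second stage, let $\qq=\pi(\pp)$. Since $p_i^j\in[p_i,\,p_i+\tfrac{15}{L}]$ and $\tfrac{15}{L}<1$, we have $\lfloor p_i^j\rfloor\in\{q_i,q_i+1\}$ for $i=1,2$. Therefore $\pi(\pp^j)$ is always one of the four vertices of $K_\qq$, and for $j\in I_w(S)$ the vector $\zeta(\pp^j)=\ee^{g_{C^b}(\pi(\pp^j))}$ is indexed by the color of some vertex of $K_\qq$.

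The final stage is a counting argument. Let $n_i$ count the $j\in I_w(S)$ with $g_{C^b}(\pi(\pp^j))=i$, so $n_0+n_1+n_2\ge 14$, and using $\ee^0=(-1,-1)$, $\ee^1=(1,0)$, $\ee^2=(0,1)$,
\[
\sum_{j\in I_w(S)}\zeta(\pp^j)=(n_1-n_0,\ n_2-n_0).
\]
The poorly positioned contribution $R=\sum_{j\in I_p(S)}\rr^j$ satisfies $\|R\|_\infty\le 2$, so the hypothesis $\|\sum_j\rr^j\|_\infty=0$ forces $|n_1-n_0|\le 2$ and $|n_2-n_0|\le 2$. Substituting $n_1,n_2\le n_0+2$ into $n_0+n_1+n_2\ge 14$ gives $n_0\ge 4$, and then $n_1,n_2\ge n_0-2\ge 2$. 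In particular every color $i\in\{0,1,2\}$ occurs as $g_{C^b}(\pi(\pp^j))$ for some $j\in I_w(S)$, and since each such $\pi(\pp^j)$ is a vertex of $K_\qq$, the cell $K_\qq$ is trichromatic, as required. The only delicate piece is stage one, where the precise inequality $\tfrac{15}{L}+\tfrac{1}{L^2}<1$ (which needs $L\ge 16$) is what lets us rule out two poorly positioned samples sharing a coordinate; the rest is a clean bookkeeping exercise driven by the specific vectors $\ee^0,\ee^1,\ee^2$.
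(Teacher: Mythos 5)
Your proof is correct and follows essentially the same sampling argument as the paper: you bound $|I_p(S)|\le 2$, note that every well-positioned sample maps under $\pi$ to a vertex of $K_{\pi(\pp)}$, and conclude from $\|\sum_{j\in I_w(S)}\rr^j\|_\infty\le 2$ that all three colors must occur. Your final counting step (substituting $n_1,n_2\le n_0+2$ into $n_0+n_1+n_2\ge 14$ to force $n_0\ge 4$ and hence $n_1,n_2\ge 2$) is a cleaner, more direct equivalent of the paper's case analysis on $W_0, W_1, W_2$, and your justification of $|I_p(S)|\le 2$ is more explicit than the paper's, but the overall structure is the same.
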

\begin{proof}
Let $Q=\{\qq^j=\pi(\pp^j) \ |\ \pp^j \in S\}$. Since $\frac{16}{L} <<1$ the set crosses boundaries of cells at most twice.
In other words, for each $i=1,2$, there is at most one $j_i$ such
that $q_i^{j_i} = q_i^{j_i-1}+1$. Therefore, set $Q$ can have at most three elements, and they are part of the same square
which has to be $K_{\pi(\pp)}$. 

Further, since $\frac{1}{L^2}<<\frac{1}{L}<<1$, there can be at most two poorly-positioned points in $S$. So, we have
$|I_w(S)|\ge 14$.  Let $\rr^G = \sum_{j\in I_w(S)} \rr^j$, then we have
$\|\rr^G + \sum_{j \in I_p(S)} \rr^j\|_\infty = 0 \Rightarrow \|\rr^G\|_\infty \le \|\sum_{j \in I_p(S)} \rr^j\|_\infty \le 2$, 
because $|I_p(S)|\le 2$ and $\|r^j\|_\infty \le 1$ for each $k \in I_p(S)$.

Let $W_i$ be the number of indices of $I_w(S)$ with $\rr^j=\ee^i$. Using the above fact, we will show that $W_i\neq
0, i=0,1,2$, to prove the lemma.  

If $W_0=0$ then $W_i \ge 7$ for either $i=1$ or $i=2$. In that case, $r^G_i\ge 7$, a contradiction. If $W_t=0$
for $t=1 \mbox{ or } 2$,
then $W_0 < 3$ or else $r^G_t \ge 3$. Let $i^* = \argmax_{0\le i\le 2} W_i$, then clearly $W_{i^*} \ge 7$ and $i^*\neq 0$. Then,
$r^G_{i^*} \ge 7-2 = 5$, again a contradiction. 
\end{proof}

\begin{remark}
Note that, in Lemma \ref{lem.sampling}, it suffices to assume $\| \sum_{j=1}^{16} \rr^j\|_\infty < 1$ for $\pp$ to be in a trichromatic
square. We use this fact to derive inapproximability results in Section \ref{sec.approx}.
\end{remark}

Lemma \ref{lem.sampling} implies that even if point $\pp$ is poorly positioned, we can make sure that it forms a
fixed-point only when it is in a trichromatic square by sampling $16$ carefully chosen points near it. 
Next we describe a complete construction of the Linear-FIXP circuit $C$, and then show its correctness using Lemmas
\ref{lem.be} and \ref{lem.sampling}.

\begin{itemize}
\item[$S_1$.] Let $p_1$ and $p_2$ denote the two inputs of the Linear-FIXP circuit. These are any real number from $[0,\ 2^n-1]$.
Compute 16 points using the $+$ gates and rational constants: 
\[
\pp^i=\pp+(j-1)(\frac{1}{L},\frac{1}{L}),\ \ \ \ j \in [16]
\]
\item[$S_2$.] Call ExtractBits($\pp^j_t$), $t=1,2$ and $j \in [16]$, and let the output vector be $\bb^{j,t}$. 
\item[$S_3$.] For each $j \le [16]$, feed $b^{j,1}_0, \dots, b^{j,1}_{n-1},b^{j,2}_{0},\dots b^{j,2}_{n-1}$ to a 
simulation of circuit $C^b$, where $\lor$, $\land$ and $\lnot{x}$ are replaced with $\max$, $\min$ and $1-x$ respectively.
Note that, there are total of 16 simulations of circuit $C^b$.
Let $\Djpo, \Djmo,\Djpt,\Djmt$ be the output values of these.
\item[$S_4$.] For each $j \in [16]$, compute $r^j_1=\min\{\max\{\Djpo-\Djmo, -1\},1\}$ and $r^j_2=\min\{\max\{\Djpt-\Djmt,
-1\},1\}$.
\item[$S_5$.] Compute $r_1=\frac{1}{16}\sum_{j \in [16]} r^j_1$, and $r_2=\frac{1}{16}\sum_{j \in [16]} r^j_2$. 
\item[$S_6$.] Output $p'_1=\max\{\min\{p_1+r_1,2^n-1\},0\}$ and $p'_2=\max\{\min\{p_2+r_2,2^n-1\},0\}$.
\end{itemize}

The number of gates used in steps $S_1$, $S_4$, $S_5$ and $S_6$ of the above procedure are constant. We used $O(n)$ gates in
step $S_2$, and 16 times as many as the number of gates in $C^b$ in step $S_3$. Further, since value of $L$ is polynomial in $size[C^b]$,
the constants used in steps $S_1$, $S_2$ and $S_5$ are polynomial sized. Thus, the total size of the Linear-FIXP circuit $C$
constructed by the above procedure is polynomial in $size[C^b]$. Next we show that each of the fixed-points of function $F$
represented by circuit $C$ are in trichromatic squares of the grid $G_{n}$.

\begin{lemma}\label{lem.red1}
Every fixed point of $F$ is inside a trichromatic square of $G_n$.
\end{lemma}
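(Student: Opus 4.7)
The plan is to show that any fixed point $\pp=(p_1,p_2)$ of $F$ must lie in the open square $(0,2^n-1)^2$, which forces the averaged increment $(r_1,r_2)$ from step $S_5$ to equal $\zeros$, after which Lemma~\ref{lem.sampling} immediately yields the trichromaticity of $K_{\pi(\pp)}$.

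First I would analyze the 16 samples $\pp^1,\dots,\pp^{16}$ from step $S_1$. For any \emph{well-positioned} sample $\pp^j$ in the sense of Definition~\ref{def.well-poor}, Lemma~\ref{lem.be} guarantees that ExtractBits returns the true bit representations of $\lfloor p^j_1\rfloor$ and $\lfloor p^j_2\rfloor$, i.e., of the coordinates of $\pi(\pp^j)$. Since the $\{\min,\max,(1-x)\}$ gadgets used in $S_3$ agree with $\{\land,\lor,\lnot\}$ on Boolean inputs, the simulation then produces exactly the true outputs $(\Djpo,\Djmo,\Djpt,\Djmt)$ of $C^b$ at $\pi(\pp^j)$; after the $[-1,1]$-clamp in $S_4$ this gives $\rr^j=\zeta(\pp^j)$, matching the hypothesis of Lemma~\ref{lem.sampling}. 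For the remaining poorly-positioned samples, the intermediate bit values can be arbitrary in $[0,1]$, but the clamp in $S_4$ unconditionally forces $\|\rr^j\|_\infty\le 1$; and the same spacing calculation used inside Lemma~\ref{lem.sampling} (namely $15/L\ll 1$) shows that at most two of the sixteen samples can be poorly-positioned.

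Next I would rule out boundary fixed points. Suppose, for instance, that $p_2=0$. Then every sample satisfies $p^j_2=(j-1)/L\in[0,1)$, so $\pi(\pp^j)_2=0$, and validity of $C^b$ forces color $2$ (increment $(0,1)$) at every such grid point. Hence $r^j_2=1$ for each of the at least $14$ well-positioned samples, while the at most two poorly-positioned samples contribute at most $-1$ each to $16r_2$; so $r_2\ge(14-2)/16>0$, yielding $p'_2>0=p_2$, a contradiction. The cases $p_1=0$, $p_1=2^n-1$, and $p_2=2^n-1$ are entirely analogous, using the color-$1$ and color-$0$ boundary rules of $C^b$; for the upper boundaries the shifted samples still lie in $[0,2^n)$ and remain well-positioned for large enough $L$, so Lemma~\ref{lem.be} still applies and $\pi(\pp^j)$ is simply clipped to the top/right edge of $G_n$.

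With $\pp\in(0,2^n-1)^2$ and $|r_t|\le 1$, the two clips in $S_6$ must be inactive at a fixed point, since any active clip would place $p'_t$ on $\{0,2^n-1\}$ while $p_t$ is interior; hence $p_t+r_t=p_t$, giving $r_1=r_2=0$, equivalently $\|\sum_{j=1}^{16}\rr^j\|_\infty=0$. Invoking Lemma~\ref{lem.sampling} then completes the proof. The main obstacle is the boundary analysis: for each of the four boundary portions one has to check that the entire cluster of 16 sampled grid points $\pi(\pp^j)$ falls inside a single boundary region of $G_n$ whose color is pinned by the validity conditions on $C^b$, and that the forced color dominates up to two adversarial poorly-positioned contributions. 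Once this is established, the interior step and the application of Lemma~\ref{lem.sampling} are essentially immediate.
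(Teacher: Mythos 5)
Your overall strategy coincides with the paper's: show that an interior fixed point forces the averaged increment to vanish and then invoke Lemma~\ref{lem.sampling}, and rule out boundary fixed points via the validity conditions on $C^b$. The interior half, the use of Lemma~\ref{lem.be} for well-positioned samples, the clamp in $S_4$, and the ``at most two poorly-positioned samples'' bookkeeping are all correct and match the paper.

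The gap is in the boundary analysis, specifically in your closing claim that for each boundary portion ``the entire cluster of 16 sampled grid points $\pi(\pp^j)$ falls inside a single boundary region of $G_n$ whose color is pinned.'' That is false in general, and the cases you dismiss as ``entirely analogous'' to $p_2=0$ are exactly the ones where it fails. For instance, if $p_1=0$ and $0<p_2<1$, some samples have $\pi(\pp^j)=(0,0)$ (color $2$, increment $(0,1)$) while others have $\pi(\pp^j)=(0,1)$ (color $1$, increment $(1,0)$); if $p_1=2^n-1$ with $p_2$ just below an integer, the well-positioned samples split between color $0$ (increment $(-1,-1)$) and color $2$; if $p_2=2^n-1$ with $p_1$ just below $1$, they split between colors $0$ and $1$. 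So the single-forced-color computation you carry out for $p_2=0$ does not transfer verbatim. The repair --- which is what the paper actually does --- is a small counting argument: on each boundary portion only two of the three increments can occur among the at least $14$ well-positioned samples, so one of them dominates, and whichever does produces a strictly positive or strictly negative coordinate displacement that survives both the two adversarial poorly-positioned contributions and the clipping in $S_6$ (e.g., on the right edge either $r_2>0$ or $r_1<0$, and in either case $F(\pp)\neq\pp$). Without this two-color case analysis the boundary step of your proof is incomplete, though the fix is local and does not change the structure of the argument.
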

\begin{proof}
Let $\pp \in [0,\ 2^n-1]^2$ be a fixed point of $F$.  If $\pp \in (0,2^n-1)^2$ then for it to be a fixed point, the final
incremental vector $\rr$ has to be $(0,0)$.  Let $S=\{\pp^j\ |\ j \in [16]\}$. Due to Lemma \ref{lem.be}, we know that for
each $j \in I_w(S)$ we have $\rr^j=\zeta(\pp^j)$.  Further, due to step ($S_4$) for each $k \in I_p(S)$, $\|\rr^j\|_\infty
\le 1$.  Therefore, using the fact that $\rr=\frac{1}{16}\sum_{j=1}^{16} \rr^j$ and Lemma \ref{lem.sampling} it follows that
$K_{\pi(\pp)}$ is trichromatic.

For the remaining case, $\pp$ has to be on a boundary of the grid. Since $C^b$ is a valid circuit, vertices on the boundary
has specific incremental vectors: Let $\qq$ be such a vertex then if $q_2=0$ then $\zeta(\qq)=\ee^2=(0,1)$, else if $q_1=0$
then $\zeta(\qq)=\ee^1=(1,0)$, otherwise $\zeta(\qq)=\ee^0=(-1,-1)$.  Using this fact, and that $|I_w(S)|\ge 14$ (Lemma
\ref{lem.sampling}), next we show $\pp$ can not be a fixed point in that case.

If $p_2=0$, then for each $k \in I_w(S)$, $\rr^j=(0,1)$. Therefore, we have $r_2>0$ and in turn $p'_2>p_2$.
If $p_2>0$ and $p_1=2^n-1$, then for each $k \in I_w(S)$, $\rr^j$ is either $(0,1)$ or $(-1,-1)$, and one of them occurs at
least $7$ times. Therefore, either $r_2>0$ and in turn $p'_2>p_2$, or $r_1<0$ and in turn $p'_1<p_1$.

If $0<p_1<2^n-1$ and $p_2=2^n-1$, then for each $j \in I_w(S)$, $\rr^j$ is either $(1,0)$ or $(-1,-1)$. Therefore, we have
either $r_1>0$ and in turn $p'_1>p_1$, or $r_2<0$ and in turn $p'_2<p_2$.
If $p_1=0$ and $1\le p_2<2^n-1$, then for each $j \in I_w(S)$, $(r^j_1,r^j_2)=(1,0)$. Therefore, we have $r_1>0$ and in turn
$p'_1>p_1$. Further, if $p_1=0$ and $0 < p_2 < 1$, then by similar argument either $p'_1>p_1$ or $p'_2>p_2$.
\end{proof}

\begin{remark}
Note that every fixed-point of $F$ is in a trichromatic square whose vertices with the three colors form a right-angled triangle with
north-east oriented hypotenuse.
\end{remark}

It is easy to shrink the range of $F$ from $[0,\ 2^n-1]$ to $[0,\ 1]$. Consider a function $F':[0,\ 1]^2\rightarrow
[0,\ 1]^2$, such that $F'(\l_1,\l_2)=\frac{1}{2^n-1}F((2^n-1)\l_1,(2^n-1)\l_2)$, then clearly, $(\l_1,\l_2)$ is a
fixed-point of $F'$ if and only if $((2^n-1)\l_1,(2^n-1)\l_2))$ is a fixed-point of $F$. Thus we get the following
theorem using Lemma \ref{lem.red1} and the fact that $size[C]=poly(size[C^b])$.

\begin{theorem}\label{thm.kdlf}
The class of $kD$-Linear-FIXP with $k>1$ is PPAD-hard.
\end{theorem}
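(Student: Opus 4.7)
The plan is to assemble the pieces already established in the section and then bootstrap from $k=2$ to arbitrary $k>1$ by padding with trivial dummy coordinates.

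First, for $k=2$, I would argue as follows. Given a 2D-Brouwer instance specified by a valid Brouwer-mapping circuit $C^b$ on grid $G_n$, run the construction of steps $S_1$--$S_6$ to obtain the Linear-FIXP circuit $C$ computing $F : [0,\ 2^n-1]^2 \to [0,\ 2^n-1]^2$. A gate-count of each step, together with the fact that $L = 2^l$ with $l = O(\log size[C^b])$, shows that $size[C] = poly(size[C^b])$, so the reduction is polynomial-time. Now rescale via $F'(\lambda_1,\lambda_2) = \frac{1}{2^n-1} F((2^n-1)\lambda_1, (2^n-1)\lambda_2)$, a Linear-FIXP circuit over $[0,\ 1]^2$ of size $poly(size[C^b])$, whose fixed-points are in bijection with those of $F$. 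By Lemma \ref{lem.red1}, every fixed-point of $F$ (and hence of $F'$) lies in a trichromatic unit square of $G_n$; extracting $\pi((2^n-1)\lambda^*)$ therefore produces a valid 2D-Brouwer solution. Since 2D-Brouwer is PPAD-hard \cite{CD2D}, so is $2D$-Linear-FIXP.

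For arbitrary $k>1$, I would reduce $2D$-Linear-FIXP to $kD$-Linear-FIXP by a trivial padding. Given the circuit $C$ for $F' : [0,\ 1]^2 \to [0,\ 1]^2$ above, construct the $kD$-Linear-FIXP circuit $\widetilde{C}$ computing
\[
\widetilde{F}(\lambda_1,\dots,\lambda_k) = \bigl(F'(\lambda_1,\lambda_2),\ \lambda_3,\ \dots,\ \lambda_k\bigr),
\]
where the last $k-2$ outputs are just wired directly from the corresponding inputs (these pass-throughs are realizable using only $+$ and the constant $0$, which are legal Linear-FIXP operations). A point $(\lambda_1^*,\dots,\lambda_k^*)$ is a fixed-point of $\widetilde{F}$ iff $(\lambda_1^*,\lambda_2^*)$ is a fixed-point of $F'$, with the remaining coordinates arbitrary. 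Hence solving $kD$-Linear-FIXP on $\widetilde{C}$ immediately yields a solution to $2D$-Linear-FIXP on $C$, and $size[\widetilde{C}] = size[C] + O(k)$ is polynomial.

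I do not anticipate a real obstacle here: the heavy lifting — specifically the bit-extraction gadget (Lemma \ref{lem.be}), the sampling argument (Lemma \ref{lem.sampling}), the boundary analysis, and the trichromaticity conclusion (Lemma \ref{lem.red1}) — is already done above. The only things to check are the two bookkeeping items I have flagged: that the rescaling and the $k-2$ identity pass-throughs preserve the Linear-FIXP syntax and keep the circuit polynomial in $size[C^b]$, and that fixed-points transfer cleanly across the rescaling and the padding. Both are routine.
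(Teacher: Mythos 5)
Your proposal matches the paper's proof exactly for $k=2$: it invokes the $S_1$--$S_6$ construction, the rescaling $F'(\lambda_1,\lambda_2)=\frac{1}{2^n-1}F((2^n-1)\lambda_1,(2^n-1)\lambda_2)$, Lemma~\ref{lem.red1}, and the bound $size[C]=poly(size[C^b])$. The explicit padding argument for $k>2$ is a correct and welcome small addition over the paper's terser treatment, which states the theorem for all $k>1$ but only spells out the $k=2$ case.
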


\begin{remark}
All the known PPAD-hardness proofs for games, go through {\em generalized circuits} \cite{DGP,CDT}, which allows feedback-loops and
approximate computation for each operation.  
However, in \cite{DGP} and \cite{CDT} feedback-loops  
are used only to connect the "output nodes" to the "input nodes" to ensure that their values are almost same (approximate solution).
Further, each operation of generalized circuit can be simulated using $\{max, +, *\zeta\}$, and polynomial sized rational
numbers, and most of them with exact computation. The reduction discussed in this section can be obtained using these observations as
well from the previous approaches of reducing $2D$ or $3D$-Brouwer to generalized circuit.
\end{remark}

\section{Reduction: kD-Linear-FIXP to Rank-(k+1) Game}\label{sec.r3}
Given a $kD$-Linear-FIXP instance, with function $F:[0,\ 1]^k$ $\rightarrow [0,\ 1]^k$ represented by circuit $C$, in this
section we construct a rank-$(k+1)$ bimatrix game whose Nash equilibria are almost\footnote{Essentially, Nash equilibrium
strategies of the first player are in one-to-one correspondence with the fixed points} in one-to-one correspondence with the
fixed points of $F$. We do this in two steps. First we replace the circuit $C$, by a parametric linear program (LP) with
$k$-parameters, where inputs of circuit $C$ become parameters of the LP.  Given values of the $k$ inputs, we show that 
the $k$ outputs of the circuit $C$ are linear function of a solution of the LP. This defines a function $F^{lp}$ from $\R^k$
to $\R^k$, and we show that the
fixed points of $F^{lp}$ are in one to one correspondence with the fixed-points of $F$.  Later, we construct a rank-$(k+1)$
game using the LP and its dual, whose Nash equilibria exactly captures the fixed points of $F^{lp}$. 

\begin{remark}
Recall that linear programs are equivalent to zero-sum games \cite{dantzig,adler_zerosum}. However, the reductions from LP to
zero-sum games constructs a symmetric game, and require to compute a symmetric Nash equilibrium. There are no such
restrictions in our construction, however our reduction is not general enough and uses the fact that the parametric LP has
been 
constructed from a Linear-FIXP circuit. It will be interesting to reduce an LP to a non-symmetric zero-sum game, and also a
fixed-point problem with
parametric LP to a constant rank game in general.
\end{remark}

\subsection{Replacing Linear-FIXP circuit with a linear program}\label{sec.lp}
In this section we construct a parameterized linear program with $k$ parameters, which can replace a $kD$-Linear-FIXP circuit.
Let $C$ be a $kD$-Linear-FIXP circuit representing the function $F:[0,\ 1]^k\rightarrow [0,\ 1]^k$. 
Circuit $C$ being a Linear-FIXP circuit, it allows only three operations, namely
$\max, +$ and $*\zeta$ where $\zeta$ is a rational number, and it forms a DAG. 
The $size[C]$ is \# inputs +\# gates +\# total bit lengths of the constants
in the circuit. 

If $C$ is considered as a function from $\R^k$ to $\R^k$, then it is same as function $F$ on $[0,\ 1]^k$, but can be
anything outside this range and hence may have fixed-points outside $[0,\ 1]^k$ as well. 
To prevent this, we add two $\max$ gates for every output of the circuit, as follows:
Let $\tau_1,\dots,\tau_k$ be the $k$ outputs of circuit $C$. 
Without loss of generality (wlog), we will add two $\max$ gates for each $l \in [k]$ to ensure that each output value is in $[0,\ 1]$:
\begin{equation}\label{eq.m}
\max\{0,\min\{1,\tau_l\}\} = \max\{0,-1*\max\{-1,-1*\tau_l\}\} 
\end{equation}

The above transformation ensures that the output vector of $C$ is always in $[0,\ 1]^k$, 
and hence fixed-points of $C$ are
exactly the fixed-points of $F$. 
Next, we show that it is wlog to assume that one of the inputs of every $\max$ gate is zero. 

\begin{lemma}\label{lem.max0}
Given a circuit $C$, it can be transformed to an equivalent polynomial sized circuit where one of the inputs of every $\max$
gate is zero.  
\end{lemma}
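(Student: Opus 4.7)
The plan is to use the identity
\[
\max\{L,R\} \;=\; R + \max\{L-R,\,0\},
\]
which rewrites any binary $\max$ as a $\max$ with one input pinned to $0$, at the cost of some extra $+$ and $*\zeta$ gates. The full transformation processes $C$ gate by gate (in its natural topological order) and preserves the computed function pointwise, so the two circuits define the same function.

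Concretely, first I would introduce a single rational constant gate $g_0 \defeq 0$ (size $O(1)$) that can be reused globally. Then, for every $\max$-gate $g_i = \max\{g_j, g_k\}$ in $C$, I replace it with the following block of four gates: (i) $g_{i,1} \defeq (-1) * g_k$, using a $*\zeta$ gate with $\zeta=-1$; (ii) $g_{i,2} \defeq g_j + g_{i,1}$, computing $L-R$ via a $+$ gate; (iii) $g_{i,3} \defeq \max\{g_{i,2},\, g_0\}$, a $\max$ gate with one input equal to the global zero; and (iv) $g_i \defeq g_{i,3} + g_k$, a final $+$ gate. The output wire of the new $g_i$ plays exactly the role of the old $g_i$, so all downstream gates (including other $\max$ gates, which will themselves be rewritten in the same manner) feed from it unchanged.

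Since every rewrite is purely local and the allowed operations are $\{\max, +, *\zeta\}$, the new circuit is well-formed and lies in the same class; the numerical equivalence follows from the identity above, applied gate by gate by induction on the topological order. For size, each $\max$ gate is replaced by $4$ new gates plus reuse of the shared zero-constant, and the new rational constants introduced are $0$ and $-1$, both of bit length $O(1)$. Hence $size[C']=O(size[C])$, so the transformation is polynomial.

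The only minor subtlety—and what I would flag as the step to execute carefully rather than the main obstacle—is making sure the domain restriction $[0,1]^k$ is unchanged: since the rewriting is an exact algebraic identity valid on all of $\R$, the function computed by $C'$ agrees with the function computed by $C$ everywhere, and in particular on $[0,1]^k$. Thus fixed points are preserved, and the normalized circuit can be used interchangeably with $C$ in the subsequent LP construction.
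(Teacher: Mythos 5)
Your proof is correct and follows essentially the same route as the paper: both rewrite each $\max$ gate via the identity $\max\{a,b\}=a+\max\{0,\,b-a\}$, implemented locally with one $*(-1)$ gate, two $+$ gates, and a $\max$-with-zero gate, giving a constant blow-up per gate. The gate counts and the size bound match the paper's argument.
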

\begin{proof}
Consider a $\max$ gate, and let $a$ and $b$ be the inputs and $c$ be the output, then we
have $c=\max\{a,b\}$ which is equivalent to $c=\max\{0,b-a\}+a$. Therefore, we can transform circuit $C$ such that one input of every
max gate is $0$. This transformation requires $3$ extra gates per $\max$ gate, two $+$ and one $*\zeta$ where $\zeta=-1$. 
Clearly, the increase in the size of the circuit is polynomial.
\end{proof}

Wlog we assume that every $\max$ gate of circuit $C$ has exactly one non-trivial input, and the other input is
always zero (due to Lemma \ref{lem.max0}). Let $m$ be the number of $\max$ gates in $C$. Since $C$ is a DAG, there is an
ordering among the $\max$ gates, say $g_1,\dots,g_m$, such that if there is a path from $g_i$ to $g_j$ in $C$ then $i<j$;
ties are broken arbitrarily. 
Let $n=m-2k$ be the number of max gate in the original circuit, before the addition of (\ref{eq.m}) per output.
Let these be the first $n$ max gates.
Let the ordering be such that these are the first $n$ gates $g_1,\dots,g_n$. 
In (\ref{eq.m}) let $g_{n+2l-1}$ denote the inner $\max$ gate and $g_{n+2l}$ denote the outer one, then $k$
outputs of circuit are the outputs of gates $g_{n+2l},\  l \in [k]$.

Let the $k$ inputs of circuit $C$ be denoted by $\ll=(\l_1,\dots\l_k)$, and let $x_i$ capture the output of the $i^{th}$
$\max$ gate. 
Note that, $x_{n+2l},\ l \in [k]$ are the output of the circuit.
Except for the $\max$, rest of the two operations give rise to linear expressions in the
$\ll$ and $x_i$s of the previous $\max$ gates. We use this observation crucially in the rest of the construction. 

Note that, for each $i \in [m]$, the input of $g_i$ is a linear expression in
$x_1,\dots,x_{i-1},\l_1,\dots,\l_k$, with a constant term. We denote this expression by $L_i(x_1,\dots,x_{i-1},\ll)$, then
the following conditions exactly capture the operation of $g_i$.
\begin{equation}\label{eq.m1}
\forall i \in [m],\ \ \ x_i \ge 0,\ \ \ \ \ \ \ x_i \ge L_i(x_1,\dots,x_{i-1},\ll)
\end{equation}
\begin{equation}\label{eq.m2}
\forall i \in [m],\ \ \ x_i (x_i - L_i(x_1,\dots,x_{i-1},\ll))=0
\end{equation}

The next lemma follows by construction.

\begin{lemma}\label{lem.max}
Given $\ll \in \R^k$, $(\xx,\ll)$ satisfies (\ref{eq.m1}) and (\ref{eq.m2}) iff when $\ll$ is given as the input to circuit
$C$, the $i^{th}$ $\max$ gate
evaluates to $x_i$ for all $i \in[m]$. 
\end{lemma}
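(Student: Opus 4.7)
My plan is to prove the lemma by induction on the topological order of the $\max$ gates $g_1,\dots,g_m$, exploiting the key structural reduction from Lemma \ref{lem.max0} that each $\max$ gate has the form $x_i=\max\{0,L_i(x_1,\dots,x_{i-1},\ll)\}$ where $L_i$ is a linear expression depending only on strictly earlier quantities.

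The heart of the argument is a scalar equivalence that I would isolate as a preliminary observation: for any real numbers $y,z$, we have $y=\max\{0,z\}$ if and only if the three conditions $y\ge 0$, $y\ge z$, and $y(y-z)=0$ hold simultaneously. The forward direction is immediate since $y\in\{0,z\}$ in either branch of the $\max$. For the reverse direction, the quadratic equation forces $y\in\{0,z\}$; if $z>0$ then $y\ge z>0$ rules out $y=0$, giving $y=z=\max\{0,z\}$, and if $z\le 0$ then $y\ge 0\ge z$ rules out the possibility $y=z<0$, again giving $y=0=\max\{0,z\}$. This mirrors precisely the linear inequalities of (\ref{eq.m1}) and the complementary-slackness-type equation of (\ref{eq.m2}) for each index $i$, with $y=x_i$ and $z=L_i$.

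For the induction itself, the base case $i=1$ is immediate: $L_1$ is a function of $\ll$ alone (no prior $x_j$ exists), so $g_1$ evaluated at $\ll$ outputs exactly $\max\{0,L_1(\ll)\}$, which by the scalar observation is the unique value of $x_1$ compatible with (\ref{eq.m1}) and (\ref{eq.m2}). For the inductive step, assume that for every $j<i$, the coordinate $x_j$ satisfying (\ref{eq.m1}) and (\ref{eq.m2}) coincides with the value computed by $g_j$ on input $\ll$. Since $L_i$ depends only on $x_1,\dots,x_{i-1}$ and $\ll$, plugging in the induction hypothesis shows that $L_i(x_1,\dots,x_{i-1},\ll)$ is exactly the non-trivial input fed to $g_i$ under the circuit evaluation. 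Applying the scalar equivalence once more, the conditions on $x_i$ are equivalent to $x_i=\max\{0,L_i(x_1,\dots,x_{i-1},\ll)\}$, which is the value output by $g_i$. Both directions of the \emph{iff} are obtained simultaneously this way.

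There is no real obstacle here; the only subtle point is ensuring that the induction is carried out in the topological order fixed just before the lemma, so that $L_i$ indeed references only already-computed quantities. The WLOG assumption from Lemma \ref{lem.max0} that every $\max$ gate has $0$ as one explicit input is what lets us write the scalar step so cleanly; without it one would need the slightly longer equivalence for $y=\max\{a,b\}$ in terms of two inequalities and one quadratic, but the logical structure would be identical.
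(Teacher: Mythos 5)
Your proposal is correct and follows essentially the same route as the paper: an induction along the topological order of the $\max$ gates, using the fact that the two inequalities of (\ref{eq.m1}) together with the quadratic of (\ref{eq.m2}) are exactly equivalent to $x_i=\max\{0,L_i\}$. The only stylistic difference is that you isolate this scalar equivalence as a named preliminary observation and run both directions of the iff through the same induction, whereas the paper dispatches the reverse direction as immediate ``by construction'' and runs the induction only for the forward direction.
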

\begin{proof}
Reverse direction follows just by construction.
For the forward direction we will argue by induction. Suppose, $(\xx,\ll)$ satisfies (\ref{eq.m1}) and (\ref{eq.m2}). Then,
for $\ll$ as input to $C$, clearly $L_1$ evaluates to exactly the input of the (first $\max$) gate $g_1$. In that case,
(\ref{eq.m1}) forces that $x_1$ is at least as large as inputs of $g_1$, and (\ref{eq.m2}) forces that it equals one of the
input. Thus, $x_1$ captures output of $g_1$. Now, suppose this is true for first $k\ge 1$ $\max$ gates. Then for
$(k+1)^{th}$ $\max$ gate, again $L_{k+1}$ is exactly the input of $g_{k+1}$, and the lemma follows by the same argument.
\end{proof}

Constraints of (\ref{eq.m1}) gives a system of linear inequalities,
\begin{equation}\label{eq.1}
A\xx \ge \sum_{l=1}^k {\l_l \uu^l} + \bb,\ \ \ \ \ \xx \ge 0
\end{equation}
where, $\bb$ and $\uu^l,\ l \in [k]$ are $m$-dimensional rational vectors, and $A$ is an $m \times m$ lower-triangular
rational matrix with ones on the diagonal.
Once we plugin some values for $\l_1,\dots,\l_k$, (\ref{eq.1}) becomes a polyhedron in $\xx$. Let it be denoted by $\CP(\ll)$.
For any $\ll \in \R^k$ and $\xx \in \CP(\ll)$, vector $(\xx,\ll)$, satisfies (\ref{eq.m1}).

\begin{remark}\label{rem.1}
Enforcing (\ref{eq.m2}) requires quadratic complementarity-type constraints. Using this fact, in Section \ref{sec.simple} we give a
simplified proof of PPAD-harness of $2$-Nash and symmetric $2$-Nash (bypassing even the parameterized LPs). 
\end{remark}

Next, we construct a cost vector $\cc \in \R^m$, such that 
minimizing $\xx^T\cdot \cc$ over $\CP(\ll)$ will give a solution that, together with $\ll$, satisfies
(\ref{eq.m2}) as well.

\begin{table}[!h]
\begin{center}
\begin{tabular}{|l|}\hline
ConstructCost($A$)\\
$c_m \leftarrow 1$, $\b_m\leftarrow 1$\\
{\bf for} $i=m-1$ to $1$ {\bf do}\\
\hspace{15pt}$c_i\leftarrow \sum_{j>i} {|a_{ji}| \b_j} +1$, $\ \ \ \b_i \leftarrow c_i +  \sum_{j>i} {|a_{ji}| \b_j}$\\
{\bf endfor};\\
Output $\cc$\\\hline
\end{tabular}
\caption{Construction of the Cost Vector}\label{tab2}
\end{center}
\vspace{-0.4cm}
\end{table}

For $\cc=$ConstructCost($A$), consider the following parameterized LP and its dual.

\begin{equation}\label{eq.lp}
\begin{array}{ll}
LP(\ll): 
\begin{array}{l}
\min:\ \cc^T \cdot \xx \\
s.t.,\ \ \ A\xx \ge \sum_{l\in [k]} {\l_l \uu^l} + \bb \\ 
\ \ \ \ \ \ \ \ \ \xx \ge 0
\end{array}
&
\hspace{1.5cm}
DLP(\ll): 
\begin{array}{l}
\max:\ (\sum_{l\in [k]} {\l_l \uu^l} + \bb)^T\cdot \yy \\
s.t.,\ \ \ A^T\yy \le \cc \\
\ \ \ \ \ \ \ \ \ \yy \ge 0
\end{array}
\end{array}
\end{equation}

The complementary slackness requires that
solutions of $LP(\ll)$ and $DLP(\ll)$ satisfy (KKT conditions),
\begin{equation}\label{eq.cs}
\forall i \in [m],\ y_i(A\xx-\sum_{l\in [k]} \l_l\uu^l -\bb)_i=0,\ x_i(A^T\yy-\cc)_i=0
\end{equation}

\begin{lemma}\label{lem.xy}
Given $\ll \in \R^k$, $\xx$ is a solution of $LP(\ll)$ iff $(\xx,\ll)$ satisfies (\ref{eq.m1}) and (\ref{eq.m2}).
\end{lemma}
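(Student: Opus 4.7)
The plan is to prove Lemma~\ref{lem.xy} by LP duality. The constraint set of $LP(\ll)$ is precisely $\CP(\ll)$, i.e., exactly the inequalities (\ref{eq.m1}), so any $\xx$ satisfying (\ref{eq.m1}) is primal-feasible. Moreover, Lemma~\ref{lem.max} furnishes a unique $\xx^\star=\xx^\star(\ll)$ satisfying both (\ref{eq.m1}) and (\ref{eq.m2}), namely the vector obtained by evaluating circuit $C$ on input $\ll$ (circuit evaluation is deterministic). Hence the lemma reduces to showing that $\xx^\star$ is the unique optimum of $LP(\ll)$.

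To certify optimality I will construct a dual witness $\yy^\star$ using the quantities $c_i,\beta_i$ produced by ConstructCost (Table~\ref{tab2}), defined by backward recursion over $i=m,m-1,\dots,1$:
\begin{equation*}
y_i^\star \;=\; \begin{cases} c_i - \sum_{j>i} a_{ji}\, y_j^\star, & \text{if } x_i^\star > 0, \\ 0, & \text{if } x_i^\star = 0. \end{cases}
\end{equation*}
A backward induction on $i$, driven by the recurrences $c_i=\sum_{j>i}|a_{ji}|\beta_j+1$ and $\beta_i=c_i+\sum_{j>i}|a_{ji}|\beta_j$, will show $0\le y_i^\star\le \beta_i$. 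This gives dual feasibility $(A^T\yy^\star)_i=y_i^\star+\sum_{j>i}a_{ji}y_j^\star\le c_i$, with equality whenever $x_i^\star>0$ (by construction) and with strict slack $(A^T\yy^\star)_i\le c_i-1<c_i$ whenever $x_i^\star=0$. Primal complementary slackness is immediate from (\ref{eq.m2}): $y_i^\star>0$ implies $x_i^\star>0$, which forces $x_i^\star=L_i(x_1^\star,\dots,x_{i-1}^\star,\ll)$ and hence $(A\xx^\star)_i$ equals the corresponding right-hand side.

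With $(\xx^\star,\yy^\star)$ certified as primal-dual optimal, both directions of the lemma follow cleanly. The reverse direction is immediate: if $(\xx,\ll)$ satisfies (\ref{eq.m1}) and (\ref{eq.m2}), then $\xx=\xx^\star$ by Lemma~\ref{lem.max}, so $\xx$ is LP-optimal. For the forward direction, any primal optimum $\xx$ must satisfy complementary slackness against $\yy^\star$; the strict dual gap at indices with $x_i^\star=0$ forces $x_i=0$ there, while $y_i^\star>0$ at indices with $x_i^\star>0$ forces $(A\xx)_i$ to match the right-hand side. A forward induction on $i$, exploiting that $A$ is lower-triangular with unit diagonal, then propagates $x_i=x_i^\star$ through all coordinates, so (\ref{eq.m2}) holds automatically. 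The main technical step is the inductive bound $y_i^\star\le\beta_i$; this is exactly the invariant ConstructCost was engineered to maintain, since $c_i$ is chosen just large enough to dominate the propagated contribution $\sum_{j>i}|a_{ji}|\beta_j$ while leaving a strict slack of $1$.
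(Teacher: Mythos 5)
Your proof is correct and uses the same core machinery as the paper's: the ConstructCost recursion, the auxiliary vector $\bbeta$, the backward induction establishing $0\le y_i\le\beta_i$ and the resulting strict slack $c_i-1$ at indices with $x_i^\star=0$, and the triangular structure of $A$. Where you differ is in packaging: the paper handles the two directions somewhat independently (for ``$\Rightarrow$'' it takes an arbitrary dual solution $\yy$ corresponding to the given primal optimum $\xx$ and shows $x_i>0\Rightarrow y_i>0$ by the same induction; for ``$\Leftarrow$'' it builds exactly your $\yy^\star$), whereas you build the single dual witness $\yy^\star$ up front, certify $(\xx^\star,\yy^\star)$ as a primal-dual optimal pair, and then pin down any primal optimum by complementary slackness against this one $\yy^\star$. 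Your version has the small bonus of delivering uniqueness of the LP optimum as a byproduct (which the paper needs anyway to prove that $F^{lp}$ is well-defined in Lemma~\ref{lem.flp}), and the forward-propagation step $x_i=x_i^\star$ is in fact superfluous for the lemma as stated: once strict slack forces $x_i=0$ where $x_i^\star=0$, and $y_i^\star>0$ forces the primal constraint tight (i.e.\ $x_i=L_i$) where $x_i^\star>0$, condition (\ref{eq.m2}) already holds for $\xx$ without identifying it with $\xx^\star$.
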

\begin{proof}$(\Rightarrow)$
Let $\yy$ be the dual solution corresponding to $\xx$, i.e., $(\xx,\yy)$ satisfies (\ref{eq.cs}).
Since $\xx$ is a feasible point of $LP(\ll)$, clearly, $(\xx,\ll)$ satisfies (\ref{eq.m1}). For (\ref{eq.m2}),
it suffices to show that $\forall i \in [m], x_i>0 \Rightarrow y_i>0$, then the proof follows using (\ref{eq.cs}).

Let $\bbeta \in \R^m$ be the vector calculated in ConstructCost($A$) of Table \ref{tab2}.
We do the proof by induction, where we show that $\forall i \in [m]$, $y_i \le \b_i$, and $x_i >0 \Rightarrow y_i>0$.
Recall that $A$ is lower-triangular with ones on the diagonal. Therefore, $A^T$ is upper-triangular with ones on the diagonal.

Our base case is when $i=m$: If $x_m>0$, then due to (\ref{eq.cs}) we have $y_m=(A^T\yy)_m=c_m=1>0$. Further, 
$(A^T\yy)_m \le c_m \Rightarrow y_m \le 1$. Since $\b_m=1$ we get $y_m \le \b_m$. 

Now, let the hypothesis be true for $j > r$. 
For $r$ if $x_r > 0 $ then $(A^T\yy)_r=c_r\Rightarrow (A^T\yy)_r=y_r + \sum_{j>r}
a_{jr} y_j = c_r$ (due to (\ref{eq.cs})). Since, $\forall j > r, 0 \le y_j \le \b_j$ and $c_r = \sum_{j>r} |a_{jr}|\b_{j} + 1$, we
have $\sum_{j>r} a_{jr}y_j < c_r$.  Therefore, for the equality to hold we must have $y_r>0$. Further, $(A^T\yy)_r = y_r + \sum_{j>r}
a_{jr} y_j \le c_r \Rightarrow y_r \le c_r - \sum_{j>r} a_{jr} y_j \le c_r + \sum_{j>r} |a_{jr}|y_j \le c_r + \sum_{j>r} |a_{jr}|\b_j =
\b_r$. 

$(\Leftarrow)$ If $(\xx,\ll)$ satisfies (\ref{eq.m1}) and (\ref{eq.m2}) then clearly $\xx$ is feasible in $LP(\ll)$. 
Construct $\yy$, from $y_m$ to $y_1$ as follows: if $x_r=0$ then set $y_r=0$, else set $y_r=c_r - \sum_{j > r} a_{jr} y_j$.
It is easy to see that $\yy$ is feasible in $DLP(\ll)$, 
and it, together with $(\xx,\ll)$ satisfies (\ref{eq.cs}).
\end{proof}

Lemmas \ref{lem.max} and \ref{lem.xy} imply that $LP(\ll)$ simulates the circuit. Next, we show that the circuit can be replaced by
$LP(\ll)$ without affecting the fixed-points of $F$.
Consider function $F^{lp}:\R^k \rightarrow [0,\ 1]^k$, such that, 
\begin{equation}\label{eq.flp}
\ll \in \R^k,\ \ \ \ F^{lp}(\ll) = (x_{n+2l})_{l\in [k]},\ \  \mbox{ where } \xx=LP(\ll)
\end{equation}

We show that function $F^{lp}$ is well-defined, and its fixed-points are exactly the fixed-points of $F$. 

\begin{lemma}\label{lem.flp}
$F^{lp}$ is well defined, and $\ll \in \R^k$ is a fixed-point of $F^{lp}$ iff it is a fixed point of $F$.
\end{lemma}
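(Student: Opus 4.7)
The plan is to chain Lemmas~\ref{lem.max} and~\ref{lem.xy} into a dictionary: deterministic evaluation of circuit $C$ on input $\ll$ $\leftrightarrow$ vectors $\xx$ with $(\xx,\ll)$ satisfying (\ref{eq.m1}) and~(\ref{eq.m2}) $\leftrightarrow$ solutions of $LP(\ll)$. With this in hand, the equivalence of fixed points of $F^{lp}$ and $F$ becomes almost automatic, provided one takes care that $F^{lp}$ actually lands in $[0,1]^k$ (the domain of $F$), which is exactly what the clamping gates of equation~(\ref{eq.m}) were inserted to ensure.

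For well-definedness, fix any $\ll \in \R^k$. Running $C$ on input $\ll$ gives a unique value $x_i$ at each $\max$ gate; Lemma~\ref{lem.max} says $(\xx,\ll)$ satisfies (\ref{eq.m1}) and~(\ref{eq.m2}), and Lemma~\ref{lem.xy} then promotes this $\xx$ to a solution of $LP(\ll)$, so $LP(\ll)$ is feasible (and the optimum is attained since $\cc>0$ and $\xx\ge \zeros$). Conversely, any optimum $\xx'$ of $LP(\ll)$ satisfies (\ref{eq.m1}) and~(\ref{eq.m2}) by Lemma~\ref{lem.xy}, and Lemma~\ref{lem.max} forces $x'_i$ to equal the circuit value at the $i$-th $\max$ gate. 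Thus $\xx$ is uniquely determined by $\ll$, and in particular $F^{lp}(\ll) = (x_{n+2l})_{l\in[k]}$ is well-defined. The outer gates introduced in~(\ref{eq.m}) clamp each coordinate, so $F^{lp}(\ll) \in [0,1]^k$ for every $\ll \in \R^k$.

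For the fixed-point equivalence, suppose first $\ll = F^{lp}(\ll)$. Then $\ll \in [0,1]^k$ automatically (range of $F^{lp}$), so $F(\ll)$ is defined; by the dictionary above, the outputs of $C$ on input $\ll$ are precisely $F^{lp}(\ll)$, and since the clamping in~(\ref{eq.m}) acts as the identity on $[0,1]^k$ these outputs also equal $F(\ll)$. Hence $F(\ll)=\ll$. Conversely, if $F(\ll)=\ll$ with $\ll \in [0,1]^k$, then evaluating $C$ on input $\ll$ produces output $\ll$, and Lemmas~\ref{lem.max} and~\ref{lem.xy} promote this evaluation to a solution $\xx$ of $LP(\ll)$ with $(x_{n+2l})_{l\in[k]} = \ll$, i.e.\ $F^{lp}(\ll)=\ll$. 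The only delicate point is uniqueness of the output coordinates across possibly multiple optima of $LP(\ll)$; this is handled above by observing that the \emph{entire} primal $\xx$ is forced to agree with the deterministic circuit evaluation, so no non-degeneracy hypothesis on $LP(\ll)$ is required.
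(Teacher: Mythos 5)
Your proof is correct and follows essentially the same path as the paper's: chain Lemma \ref{lem.max} with Lemma \ref{lem.xy} to get the dictionary ``circuit evaluation $\leftrightarrow$ solutions of (\ref{eq.m1})\&(\ref{eq.m2}) $\leftrightarrow$ solutions of $LP(\ll)$,'' deduce uniqueness of the $LP(\ll)$ solution and containment of $F^{lp}(\ll)$ in $[0,1]^k$ from the clamping gates of (\ref{eq.m}), and then read off the fixed-point equivalence. The extra remarks you add (attainment of the LP optimum since $\cc>0$, and that uniqueness of the whole primal vector dispenses with any non-degeneracy worries) are accurate but not genuinely different from what the paper leaves implicit.
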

\begin{proof}
For any given $\ll \in \R^k$, using Lemmas \ref{lem.max} and \ref{lem.xy}, it follows that $LP(\ll)$ has a unique solution, and if it
is $\xx$ then $0\le x_{n+2l}\le 1,\ \forall l \in[k]$. Thus, $F^{lp}$ is well defined.

For the second part, we know that $F^{lp}(\ll)\in [0,\ 1]^k$. Since, $\ll$ is a fixed point, this also forces $\ll \in [0,\
1]^k$. Further, since circuit $C$ represents the function $F$ in range $[0,\ 1]^k$, it suffices to show that $F^{lp}(\ll)=C(\ll)$.

In other words, when vector $\ll$ is the input to circuit $C$, then $i^{th}$ $\max$ gate evaluates to $x_i$,
$\forall i \in [m]$, where $\xx=LP(\ll)$. This follows using Lemmas \ref{lem.max} and \ref{lem.xy}. 
\end{proof}

\begin{lemma}\label{lem.size}
Size of matrix $A$, and vectors $\cc$, $\bb$ and $\uu^l,\ \forall l \in [k]$ are polynomial in $size[C]$.
\end{lemma}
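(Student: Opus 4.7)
The plan is to bound the bit-size of every entry of $A$, $\bb$, and each $\uu^l$ via the structure of the sub-circuits feeding each $\max$ gate, and then bound the entries of $\cc$ by induction on the ConstructCost recurrence. Because $A$ is $m\times m$ with $m=\text{poly}(size[C])$, it suffices to show each individual entry has bit-size $\text{poly}(size[C])$.

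First I would prove a generic bit-complexity bound: any rational number $v$ produced by a straight-line sub-circuit using only the operations $\{+,*\zeta\}$ (with rational constants $\zeta$ of bit-size $\le s$), starting from inputs of bit-size $\le B_0$, has bit-size $O((N+B_0)s)$ where $N$ is the number of gates. The two ingredients are: (i) the denominator of $v$ always divides the LCM of the denominators of the $\zeta$'s appearing in the sub-circuit, so its bit-size is $O(Ns)$; (ii) the absolute value $|v|$ grows at most additively under $+$ (bit-size of $|v+v'|$ is at most $\max(\log|v|,\log|v'|)+1$) and by $|\zeta|$ under $*\zeta$, hence $\log|v|=O((N+B_0)s)$. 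Together, numerator and denominator each have polynomial bit-size. It is crucial here that multiplication is only by \emph{constants}; this is precisely what rules out the repeated-squaring style of exponential blow-up one gets from general $*$ gates.

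Next I would apply this to the sub-circuit of $C$ that computes, for each $i\in[m]$, the input wire feeding the $i^{th}$ $\max$ gate. This sub-circuit contains only $+$ and $*\zeta$ gates, and its sources are the outputs $x_1,\dots,x_{i-1}$ of earlier $\max$ gates, the inputs $\l_1,\dots,\l_k$, and rational constants of $C$. The value at its output is the linear expression $L_i$. The coefficient of $x_j$ in $L_i$ equals the value at that output when we set $x_j=1$ and all other sources to $0$; the coefficient of $\l_l$ and the constant term arise similarly. Each such evaluation is a run of the bound above with $B_0=1$ and $N\le size[C]$, so every coefficient has bit-size $\text{poly}(size[C])$. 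Collecting them gives polynomial bit-size for $A$, $\bb$, and $\uu^l$.

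Finally I would analyze $\cc$ from Table~\ref{tab2}. Let $M=\max_{j,i}|a_{ji}|$, which has polynomial bit-size by the previous step. From the recurrence $c_i=1+\sum_{j>i}|a_{ji}|\b_j$ and $\b_i=2c_i-1$, a downward induction on $i$ from $i=m$ gives $c_i,\b_i\le (2Mm+1)^{m-i}$. Therefore $\log c_i=O(m\log(Mm))$, which is polynomial in $size[C]$. The denominators of $c_i$ and $\b_i$ divide the product of the denominators of the $a_{ji}$ used in the recurrence, so they too have bit-size bounded by $m$ times a polynomial, which is polynomial. Hence every entry of $\cc$ has polynomial bit-size, completing the lemma. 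The main obstacle is the first step: proving the bit-size bound for $\{+,*\zeta\}$ sub-circuits requires separately controlling the denominator (via the fixed pool of $\zeta$ denominators) and $|v|$ (via additive/multiplicative growth), rather than applying a naive bound on rational addition which would suggest doubling and hence exponential blow-up.
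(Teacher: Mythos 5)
Your proof is correct and takes essentially the same route as the paper: bound the coefficients of the linear expressions $L_i$ by exploiting that only $+$ and $*\zeta$ gates appear (so magnitudes grow at most like a product of the circuit's constants and denominators stay within a fixed pool), then bound $\cc$ by unrolling the ConstructCost recurrence to get $c_i\le(2A_{max}+1)^{O(m)}$. The only difference is one of care, not of substance --- you treat the effect of the $+$ gates on magnitudes and denominators explicitly, which the paper's one-line appeal to $size(\zeta_1*\zeta_2)=size(\zeta_1)+size(\zeta_2)$ leaves implicit.
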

\begin{proof}
By construction $A$, $\bb$ and $\uu^l,\ \forall l \in [k]$, are formed by the coefficients of the linear expressions $L_i$s of
(\ref{eq.m1}). These linear expressions are constructed due to the $+$ and $*\zeta$ gates of the circuit $C$, therefore, the absolute
value of any of its co-efficient is at most $\zeta_{max}^v$, where $v$ is the number of $*\zeta$ gates in $C$, and $\zeta_{max}$ is the maximum
absolute rational constant used in $C$. For rational constants $\zeta_1,\zeta_2$, since
$size(\zeta_1*\zeta_2) = size(\zeta_1)+size(\zeta_2)$, we have that the size of every co-efficient of $L_i$s is at most $size[C]$.
Thus, sizes of $A$, $\bb$ and $\uu^l,\ \forall l\in [k]$ are at most polynomial in $size[C]$.  
Let $A_{max}=\max_{i,j \in [m]} A_{ij}$, then by construction $c_1=\max_{j\in [m]} c_j \le (2A_{max}+1)^n$ (see Table \ref{tab2}).
Therefore, the size of $\cc$ is also bounded by a polynomial in $size[C]$.
\end{proof}

From Lemmas \ref{lem.flp} and \ref{lem.size} we can conclude that finding a fixed point of $F$ is equivalent to finding one for
function $F^{lp}$, which can be represented using polynomially many bits in the $size[C]$. 
Next we reduce the fixed-point computation in $F^{lp}$ to Nash equilibrium computation in a rank-$(k+1)$ game, such that the size of
the game is polynomial in size of the parameters of function $F^{lp}$.

\subsection{Constructing Rank-(k+1) Game}\label{sec.lcp}
Since, feasibility and complementary slackness are necessary and sufficient conditions for the solutions of an LP, it is well known
that an LP can be formulated as a linear complementarity problem (LCP). Using this, next we construct an LCP whose solutions are exactly the fixed
points of $F^{lp}$. Before we do this, note that since all the co-ordinates of the cost vector $\cc$ are strictly positive,
we can make it all ones vector by dividing $j^{th}$ column of $A$ by $c_j$.  Let $H$ be the transformed matrix, {\em i.e.,}
\[H_{ij}=\nfrac{A_{ij}}{c_j}\] 

To reflect this transformation in LPs of (\ref{eq.lp}) define,

\begin{equation}\label{eq.tlp}
\begin{array}{ll}
LP'(\ll): 
\begin{array}{l}
\min:\ \sum_i x_i \\
s.t.,\ \ \ H\xx \ge \sum_{l\in [k]} {\l_l \uu^l} + \bb \\ 
\ \ \ \ \ \ \ \ \ \xx \ge 0
\end{array}
&
\hspace{1.5cm}
DLP'(\ll): 
\begin{array}{l}
\max:\ (\sum_{l\in [k]} {\l_l \uu^l} + \bb)^T\cdot \yy \\
s.t.,\ \ \ H^T\yy \le \ones \\
\ \ \ \ \ \ \ \ \ \yy \ge 0
\end{array}
\end{array}
\end{equation}

The next lemma follows by construction.

\begin{lemma}\label{lem.equiv}
Given $\ll \in \R^k$, $\xx$ and $\yy$ are solutions of $LP(\ll)$ and $DLP(\ll)$ respectively iff, $\xx'$, where 
$x'_j=x_jc_j,$ $\forall j \in [m]$, and $\yy$ are solutions of $LP'(\ll)$ and $DLP'(\ll)$ respectively. 
\end{lemma}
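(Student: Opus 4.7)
The plan is to observe that the substitution $x'_j = c_j x_j$ (equivalently $x_j = x'_j / c_j$, which is well-defined because the procedure \textsc{ConstructCost} produces $c_j \ge 1 > 0$ for every $j$) is simply a positive diagonal rescaling of the primal variables, and then verify that this rescaling maps each LP in \eqref{eq.lp} to the corresponding LP in \eqref{eq.tlp} while preserving feasibility, objective values, and hence optimality.

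First I would check primal feasibility. For any $i \in [m]$,
\[
(A\xx)_i = \sum_{j} A_{ij} x_j = \sum_j A_{ij} \frac{x'_j}{c_j} = \sum_j H_{ij} x'_j = (H\xx')_i,
\]
so $A\xx \ge \sum_l \l_l \uu^l + \bb$ holds iff $H\xx' \ge \sum_l \l_l \uu^l + \bb$ holds, and since all $c_j > 0$, $\xx \ge 0 \iff \xx' \ge 0$. Thus $\xx$ is feasible in $LP(\ll)$ iff $\xx'$ is feasible in $LP'(\ll)$. For dual feasibility, the $j$-th row of $A^T \yy \le \cc$ is $\sum_i A_{ij} y_i \le c_j$; dividing both sides by $c_j > 0$ yields $\sum_i H_{ij} y_i \le 1$, i.e., the $j$-th row of $H^T \yy \le \ones$. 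Hence $\yy$ is feasible in $DLP(\ll)$ iff it is feasible in $DLP'(\ll)$.

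Next I would compare the objective values. The primal objective becomes
\[
\cc^T \xx = \sum_j c_j x_j = \sum_j x'_j = \ones^T \xx',
\]
so the two primals attain the same value at $(\xx,\xx')$. The dual objective $(\sum_l \l_l \uu^l + \bb)^T \yy$ is identical in $DLP(\ll)$ and $DLP'(\ll)$, so it is trivially preserved.

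Combining the two paragraphs, the feasible regions of the two primals (respectively duals) are in bijection via the rescaling, and the bijection preserves objective values. By LP duality, $(\xx,\yy)$ is primal-dual optimal for $(LP(\ll), DLP(\ll))$ iff both are feasible and the objectives agree, which by the equivalences above holds iff $(\xx',\yy)$ is feasible in $(LP'(\ll), DLP'(\ll))$ with matching objectives, i.e., iff they are primal-dual optimal there. This gives the stated equivalence in both directions. There is no real obstacle here; the lemma is a routine change-of-variables argument, and the only care needed is to invoke $c_j > 0$ when inverting the scaling and dividing dual inequalities.
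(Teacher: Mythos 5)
Your proof is correct, and since the paper itself disposes of this lemma with the single remark ``The next lemma follows by construction,'' your change-of-variables argument is precisely the routine verification the authors are leaving to the reader. You correctly note that $c_j \ge 1 > 0$ from \textsc{ConstructCost} (so the rescaling is invertible and dividing the dual constraints is legitimate), that $A\xx = H\xx'$, that dividing row $j$ of $A^T\yy \le \cc$ by $c_j$ yields $H^T\yy \le \ones$, and that $\cc^T\xx = \ones^T\xx'$ while the dual objective is literally unchanged. One small remark: you do not actually need to appeal to strong duality at the end. Since the primal feasible regions are in objective-preserving bijection and the two duals have the \emph{same} feasible region and objective, optimality transfers separately for the primal pair and the dual pair; the duality-gap characterization is an equivalent but slightly heavier way to package the same fact. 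Either way the argument is sound.
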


For $l \in [k]$, let $\vv^l$ be an $m$-dimensional vector with $n+2l^{th}$ co-ordinate set to $\nfrac{1}{c_{n+2l}}$ and
rest all set to zero, then $\vv^{l^T}\cdot \xx' = \nfrac{x'_{n+2l}}{c_{n+2l}}$. 
Lemma \ref{lem.equiv} implies that, 
at a fixed-point of $F^{lp}$ we have $\l_l=x_{n+2l}=\nfrac{x'_{n+2l}}{c_{n+2l}}=\vv^{l^T}\cdot \xx',\ \forall l \in [k]$, where
$\xx=LP(\ll)$ and $\xx'=LP'(\ll)$. Using this as a motivation, 
we replace $\l_l$ with $(\vv^{l^T}\cdot \xx)$ in the constraints of $LP'(\ll)$. The resulting matrix
will be \[H'=H-\sum_{l=1}^k \uu^l\cdot \vv^{l^T},\ \forall (i,j)\]

Using the above observation, and feasibility and complementary slackness conditions for (\ref{eq.tlp}),
we construct the following LCP, called $\lcp_C$,
\begin{equation}\label{eq.lcp}
\begin{array}{cll}
& H'\xx \ge \bb;\ \  & H^T\yy \le \ones \\
& \xx \ge \zeros;\ \  & \yy \ge \zeros \\
\forall i \in [m],& y_i(H'\xx -b)_i=0; \ \ & x_i(H^T\yy- \ones)_i=0
\end{array}
\end{equation}

Before we connect the solutions of $\lcp_C$ with the fixed-points of $F^{lp}$, we need to establish a few properties
about $H$, $H'$, $\bb$ and $\uu^l$s. For this we need to understand (\ref{eq.m1}) for the last $2k$ $\max$ gates that we
added in (\ref{eq.m}) to ensure that the outcome of the circuit is in $[0,\ 1]^k$.  Due to Lemma \ref{lem.max0} we have
assumed that one of the inputs of every $\max$ gate is zero. For this to be the case, the (\ref{eq.m}) has to be transformed
as follows,
\[
\forall l \in [k],\ \ \ \max\{0,-1*(\max\{0,-\tau_l+1\}-1)\}
\]

Here, $\forall l\in [k], \ \tau_l$ is a linear expression in $x_1,\dots,x_n,\ll$, and in turn so is $L_{n+2l-1}=1-\tau_l$.
Recall that $x_{n+2l-1}$ captures the output of the inner $\max$ gate and $x_{n+2l}$ captures the output of the outer $\max$
gate. Therefore, we have
\begin{equation}\label{eq.m3}
\begin{array}{lll}
\forall l \in [k],& x_{n+2l-1}\ge0, \ & x_{n+2l-1} \ge L_{n+2l-1}(x_1,\dots,x_n,\ll)\\
\forall l \in [k],& x_{n+2l}\ge 0,\ & x_{n+2l} \ge 1-x_{n+2l-1} \Rightarrow x_{n+2l-1}+x_{n+2l}\ge 1
\end{array}
\end{equation}

The following properties are easy to obtain using (\ref{eq.m3}). 
\begin{itemize}
\item[$P_1$.] $\forall l\in [k]$, $(A\xx)_{n+2l}=x_{n+2l-1}+x_{n+2l}$, $b_{n+2l}=1$, and $u^{l'}_{n+2l} =0,\ \forall l'\in[k]$.
\item[$P_2$.] $\forall l \in [k]$, note that $x_{n+2l}$ appears only in one constraint. Thus $n+2l^{th}$ column of $A$ is a unit vector
with $n+2l^{th}$ co-ordinate set to one, and hence $(A^T\yy)_{n+2l} = y_{n+2l},\ \forall l\in [k]$.
\item[$P_3$.] From ($P_2$) and the ConstructCost procedure it follows that $c_{n+2l}=1,\ \forall l \in [k]$.
Therefore, the non-zero co-ordinate of $\vv^l$, namely $\nfrac{1}{c_{n+2l}}$, is $1$, for all $l\in[k]$.
\item[$P_4$.] Since $H_{ij}=\nfrac{A_{ij}}{c_j}$, we get that $\forall l \in[k]$,
$(H^T\yy)_{n+2l} = y_{n+2l}$ (using ($P_2$) and ($P_3$)), and  $(H'\xx)_{n+2l} \ge b_{n+2l} \equiv
\nfrac{x_{n+2l-1}}{c_{n+2l-1}}+x_{n+2l}\ge 1$
(using ($P_1$) and ($P_3$)).
\end{itemize}

The above properties are crucial to the over all reduction.

\begin{lemma}\label{lem.lcp}
Vector $(\xx',\yy')$ is a solution of $\lcp_C$ of (\ref{eq.lcp}) if and only if $\ll \in [0,\ 1]^k$, where 
$\l_l=x'_{n+2l},\ \forall l \in [k]$, is a fixed-point of $F^{lp}$.  
\end{lemma}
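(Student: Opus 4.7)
The plan is to show that the LCP conditions, once interpreted under the substitution $\l_l := \vv^{l^T}\xx' = x'_{n+2l}$, become exactly the KKT optimality conditions for the pair $(LP'(\ll), DLP'(\ll))$, which in turn encode the fixed-point condition of $F^{lp}$ via Lemmas \ref{lem.equiv}, \ref{lem.xy}, and \ref{lem.flp}.

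The key algebraic observation is that, for any $\xx'$,
\[
H'\xx' \;=\; H\xx' \;-\; \sum_{l=1}^k \bigl(\vv^{l^T}\xx'\bigr)\,\uu^l.
\]
So if I set $\l_l := \vv^{l^T}\xx'$, then the primal feasibility constraint $H'\xx' \ge \bb$ is identical to $H\xx' \ge \sum_l \l_l \uu^l + \bb$, i.e.\ feasibility in $LP'(\ll)$. The dual feasibility $H^T\yy' \le \ones$ is precisely $DLP'(\ll)$-feasibility, and the two complementarity blocks of $\lcp_C$ are exactly the primal-dual complementary slackness conditions for $(LP'(\ll), DLP'(\ll))$.

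\textbf{Forward direction.} Let $(\xx',\yy')$ solve $\lcp_C$ and define $\l_l := x'_{n+2l}$. By property $(P_3)$, $c_{n+2l}=1$, so $\vv^{l^T}\xx' = x'_{n+2l}/c_{n+2l} = x'_{n+2l} = \l_l$, matching my substitution. By the observation above, $\xx'$ and $\yy'$ are feasible for $LP'(\ll)$ and $DLP'(\ll)$ and satisfy complementary slackness, hence are optimal. Using Lemma \ref{lem.equiv}, the vector $\xx$ with $x_j = x'_j/c_j$ is optimal in $LP(\ll)$. Lemma \ref{lem.xy} then says $(\xx,\ll)$ satisfies \eqref{eq.m1}--\eqref{eq.m2}, which by Lemma \ref{lem.max} means circuit $C$ evaluated on $\ll$ produces $x_i$ at its $i$th $\max$ gate. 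Thus $F^{lp}(\ll) = (x_{n+2l})_{l\in[k]}$; but $x_{n+2l} = x'_{n+2l}/c_{n+2l} = x'_{n+2l} = \l_l$ (again by $(P_3)$), so $\ll$ is a fixed point of $F^{lp}$. Lastly, Lemma \ref{lem.flp} gives $F^{lp}(\ll)\in[0,1]^k$, so $\ll\in[0,1]^k$.

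\textbf{Reverse direction.} Let $\ll\in[0,1]^k$ be a fixed point of $F^{lp}$ and let $\xx=LP(\ll)$, so $\l_l = x_{n+2l}$. Form $\xx'$ with $x'_j = c_j x_j$ and let $\yy'$ be the corresponding dual-optimal vector from Lemma \ref{lem.xy}. By Lemma \ref{lem.equiv}, $(\xx',\yy')$ is primal-dual optimal for $(LP'(\ll), DLP'(\ll))$. Since $c_{n+2l}=1$, $x'_{n+2l}=x_{n+2l}=\l_l=\vv^{l^T}\xx'$, so substituting $\l_l \mapsto \vv^{l^T}\xx'$ into the $LP'$ feasibility and complementarity conditions recovers $H'\xx'\ge \bb$ together with $y'_i(H'\xx'-\bb)_i=0$; dual feasibility and the second complementarity block are unchanged. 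Hence $(\xx',\yy')$ solves $\lcp_C$, and clearly $\l_l = x'_{n+2l}$ as required.

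The main step to be careful about is the bookkeeping around the column-scaling $H_{ij}=A_{ij}/c_j$ and the identification $\l_l = x'_{n+2l}$; everything hinges on property $(P_3)$ ($c_{n+2l}=1$) so that the outer-$\max$-gate coordinates of $\xx$ and $\xx'$ coincide and match $\vv^{l^T}\xx'$. Once this is in place, the result is essentially the LP duality/KKT translation between $\lcp_C$ and the parameterised pair $(LP'(\ll),DLP'(\ll))$, with Lemmas \ref{lem.xy} and \ref{lem.flp} providing the bridge to fixed points of $F^{lp}$.
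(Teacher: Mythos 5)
Your proof is correct and follows essentially the same route as the paper's: identify $\l_l = \vv^{l^T}\xx' = x'_{n+2l}$ via property $(P_3)$, observe that the $\lcp_C$ conditions under this substitution are precisely the primal--dual feasibility and complementary slackness conditions for $(LP'(\ll),DLP'(\ll))$, and then pass to $(LP(\ll),DLP(\ll))$ and the fixed points of $F^{lp}$ via Lemmas~\ref{lem.equiv}, \ref{lem.xy}, and~\ref{lem.flp}. The only difference is cosmetic: you spell out the algebraic identity $H'\xx' = H\xx' - \sum_l(\vv^{l^T}\xx')\uu^l$ and the LP/KKT translation more explicitly, and you explicitly note at the end that $\ll\in[0,1]^k$ follows from the range of $F^{lp}$, both of which the paper leaves implicit.
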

\begin{proof}
($\Rightarrow$)
Let $(\xx',\yy')$ be a solution of $\lcp_C$. Then by construction of $\lcp_C$, clearly $\xx'$ and $\yy'$ are solutions of
$LP'(\ll)$ and $DLP'(\ll)$ respectively, where $\l_l=\vv^{l^T}\xx'=x'_{n+2l},\ \forall l \in [k]$ (using $P_3$). 
Set $\yy=\yy'$, and $\xx$ be such that $x_j = \nfrac{x'_j}{c_j}$, then using Lemma \ref{lem.equiv} we get that, $\xx$ and
$\yy$ are solutions of $LP(\ll)$ and $DLP(\ll)$. Further, property $P_3$ ensures that $x_{n+2l}=x'_{n+2l}=\l_l, \forall l \in [k]$.
Thus, $\ll$ is a fixed-point of $F^{lp}$.

($\Leftarrow$) Let $\ll$ be a fixed-point of $F^{lp}$ and let $\xx$ and $\yy$ be the solutions of $LP(\ll)$ and $DLP(\ll)$.
Let $\yy'=\yy$ and $x'_j = c_jx_j, \forall j \in [m]$, then using Lemma \ref{lem.equiv} we get that $\xx'$ and $\yy'$ are
solutions of $LP'(\ll)$ and $DLP'(\ll)$ respectively. 
Using the fact that $\ll$ is a fixed-point of $F^{lp}$ and property $P_3$, we get $\vv^{l^T}\cdot\xx'=x'_{n+2l}=x_{n+2l}=\l_l,\
\forall l \in [k]$. In that case, feasibility and complementary slackness of $LP'(\ll)$ and $DLP'(\ll)$, ensures that
$(\xx',\yy')$ is a solution of $\lcp_C$.
\end{proof}

Next, we capture solutions of $\lcp_C$ as Nash equilibria of a bimatrix game.
Consider the following game:
\begin{equation}\label{eq.game}
\begin{array}{ll}
\tilde{A} = \left[\begin{array}{cc} H^T & \zeros \\ \zeros^T & 1\end{array}\right],\ \ \ \ \ \ \ \ 
&
\tilde{B} = \left[\begin{array}{cc} -H'^T & \zeros \\
\bb^T+\ones^T & 1\end{array}\right]
\end{array}
\end{equation}
where $\ones$ and $\zeros$ are $m$-dimensional vectors of $1$s and $0$s respectively.  Number of strategies of both the
players is $m+1$. Let $(\txx,s)$ and $(\tyy,t)$ denote mixed-strategy vectors of the first player and the second player,
then we have, 
\begin{equation}\label{eq.ne1}
(\txx,s)\ge 0;\ \ \ (\tyy,t)\ge0;\ \ \ s+\sum_{i=1}^m \tx_i=1;\ \ \ t+\sum_{j=1}^m \ty_j=1\\
\end{equation}

\begin{remark}
Adler and Verma \cite{adler}, used this idea of adding an extra column/row to handle the r.h.s., in their reduction from `solving' some
special LCPs to symmetric game.  \end{remark}

The property that matrix of $\lcp_C$ is semi-monotone, shown in the next lemma, is important to derive equivalence
between the NE of $(\tA,\tB)$ and the solutions of $\lcp_C$.

\begin{lemma}\label{lem.semimon}
Let $M=\left[\begin{array}{cc} \zeros & H^T \\ -H' & \zeros\end{array}\right]$ be the matrix of $\lcp_C$. For any
$\qq\in \R^{2m}$ with $\qq>0$, the only solution of LCP $\{M\zz \le \qq;\ \zz\ge 0;\ \zz^T(M\zz-\qq)=0\}$ is $\zz=0$.
\end{lemma}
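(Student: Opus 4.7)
The plan is to exploit the special role played by the ``output coordinates'' $S := \{n+2l : l \in [k]\}$ isolated by properties $(P_1)$--$(P_4)$. First I would force $x_{n+2l}=y_{n+2l}=0$ for every $l\in[k]$ using the rigid row/column structure at indices in $S$, and then close the argument with a single scalar identity coming from complementary slackness.

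Write $\zz=(\xx,\yy)$ with $\xx,\yy\in\R^m$, and split $\qq=(\qq_1,\qq_2)$ accordingly. Expanding $M\zz\le\qq$ yields $H^T\yy\le\qq_1$ and $H'\xx\ge -\qq_2$, while complementary slackness reads $x_i(H^T\yy-\qq_1)_i=0$ and $y_i(H'\xx+\qq_2)_i=0$ for every $i\in[m]$. Assume for contradiction that $\zz\ne\zeros$ solves the LCP for some $\qq>\zeros$.

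\textbf{Step 1 (kill $y_{n+2l}$).} By $(P_1)$, $u^{l'}_{n+2l}=0$ for all $l,l'\in[k]$, so row $n+2l$ of $H'=H-\sum_{l'}\uu^{l'}\vv^{l'^T}$ coincides with row $n+2l$ of $H$. Combined with the shape of row $n+2l$ of $A$ (zero except for $1$'s at columns $n+2l-1$ and $n+2l$), this gives $(H'\xx)_{n+2l}=x_{n+2l-1}/c_{n+2l-1}+x_{n+2l}\ge 0$. Since $q_{2,n+2l}>0$, complementarity forces $y_{n+2l}=0$.

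\textbf{Step 2 (kill $x_{n+2l}$).} By $(P_2)$ and $(P_3)$, column $n+2l$ of $H$ equals the unit vector with $1$ at position $n+2l$, hence $(H^T\yy)_{n+2l}=y_{n+2l}=0$ after Step~1. Complementarity with $q_{1,n+2l}>0$ then forces $x_{n+2l}=0$.

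\textbf{Step 3 (global identity closes the argument).} Complementary slackness gives $\zz^T(M\zz-\qq)=0$, i.e.\ $\zz^TM\zz=\zz^T\qq$. The left-hand side expands in blocks as $\xx^TH^T\yy-\yy^TH'\xx=\yy^T(H-H')\xx$; using $H-H'=\sum_l\uu^l\vv^{l^T}$ and $\vv^{l^T}\xx=x_{n+2l}$ (property $(P_3)$), this equals $\sum_{l=1}^k(\yy^T\uu^l)\,x_{n+2l}$, which vanishes by Step~2. The right-hand side is $\xx^T\qq_1+\yy^T\qq_2$, and since $\qq>\zeros$ and $\xx,\yy\ge\zeros$ this is zero only when $\xx=\yy=\zeros$, contradicting $\zz\ne\zeros$.

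The main obstacle is spotting Step 1: one has to notice that properties $(P_1)$--$(P_4)$ were engineered so that exactly the $2k$ coordinates in $S$ can be pinned to zero directly from the sign structure, after which Step~3 removes any need for a delicate induction on the (lower-triangular) body of $H$. Such an induction does not in fact close cleanly, because off-diagonal entries $H_{ij}$ with $j<i$ can have either sign, so the $S$-based trick is what makes the argument go through.
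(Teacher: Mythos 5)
Your proof is correct, and it takes a slightly but genuinely different route from the paper's. Both arguments live on the same structural facts about the coordinates $n+2l$ (rows of $H'$ and $H$ agree there by $(P_1)$; those columns are unit vectors by $(P_2)$--$(P_3)$; $(H'\xx)_{n+2l}=x_{n+2l-1}/c_{n+2l-1}+x_{n+2l}$ by $(P_4)$; and $H-H'=\sum_l\uu^l\vv^{l^T}$ so $\zz^T M\zz=\sum_l(\yy^T\uu^l)x_{n+2l}$). The paper frames the goal as ``exhibit a coordinate $d$ with $z_d>0$ and $(M\zz)_d\le0$,'' and then splits into two cases: if all $x_{n+2l}=0$ it uses the scalar identity $\zz^T M\zz=0$ to find such a $d$, and otherwise it chases a violating index at $n+2l$ or $m+n+2l$. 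You instead run the complementarity constraints forward unconditionally: $(M\zz)_{m+n+2l}=-(H'\xx)_{n+2l}\le0<q_{2,n+2l}$ kills $y_{n+2l}$, then $(M\zz)_{n+2l}=y_{n+2l}=0<q_{1,n+2l}$ kills $x_{n+2l}$, and only then invoke the global identity $\zz^T M\zz=\zz^T\qq$, which now forces $\zz=\zeros$ because $\qq>\zeros$. This avoids the case split entirely and makes explicit that \emph{every} solution has all of $x_{n+2l},y_{n+2l}$ equal to zero, a stronger intermediate fact than the paper states. Your closing remark about an induction on the lower-triangular structure is moot --- the paper does not attempt such an induction here (it does use one in Lemma~\ref{lem.xy}) --- but it doesn't detract from the correctness of the argument.
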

\begin{proof}
It suffices to show that for any $\zz \ge 0, \zz\neq 0$, there is a $d \in [2m]$ such that $z_d>0$ and $(M\zz)_d\le 0$.
Partition $\zz$ as $(\xx,\yy)$.
If $\forall l\in [k],\ z_{n+2l}=x_{n+2l}=0$, then $H'\xx=H\xx$. Therefore, $\zz^TM\zz=\xx^TH^T\yy - \yy^TH\xx =0$. For all
$d \in [m]$, if we have, $z_d>0 \Rightarrow (M\zz)_d >0$ then $\zz^TM\zz>0$, a contradiction.

On the other hand, $\exists l \in [k]$ with $z_{n+2l}>0$ and $(M\zz)_{n+2l}\le 0$ then done. Otherwise, we have $z_{n+2l}>0$
and $(M\zz)_{n+2l}> 0$. This gives $(M\zz)_{n+2l}=y_{n+2l}=z_{m+n+2l}>0$ and \\
$(M\zz)_{m+n+2l}=-(H'\xx)_{n+2l}=-\nfrac{x_{n+2l-1}}{c_{n+2l-1}}-x_{n+2l}= \nfrac{x_{n+2l-1}}{c_{n+2l-1}}-z_{n+2l}<0$ (Using
($P_4$)).
\end{proof}

If $((\txx,s),(\tyy,t))$ is a Nash equilibrium of game $(\tilde{A},\tilde{B})$, the following have to be satisfied (see
Lemma \ref{lem.nash} for the NE characterization), where $\pi_1$ and $\pi_2$ are the scalars capturing payoffs of the first
and the second player respectively.

\begin{equation}\label{eq.ne2}
\begin{array}{lll}
& t \le \pi_1; & s(t-\pi_1)=0 \\
& s \le \pi_2; & t(s-\pi_2)=0 \\
\forall i \in [m],& (H^T\tyy)_i \le \pi_1;& \tx_i((H^T\tyy)_i-\pi_1)=0  \\ 
\forall j \in [m], & (-\txx^TH')_j + b_j s +s \le \pi_2;  &  \ty_j((-\txx^TH')_j+ b_j s +s-\pi_2)=0 \\ 
\end{array}
\end{equation}

\begin{lemma}\label{lem.netolcp}
If $((\tx,s),(\ty,t))$ is a Nash equilibrium of game $(\tilde{A},\tilde{B})$ with $s>0$ and $t>0$, then
$(\frac{\txx}{s},\frac{\tyy}{t})$ is a solution of $\lcp_C$. Further, if $(\xx,\yy)$ is a solution of $\lcp_C$ then
$(\frac{(\xx,1)}{1+\sum_i x_i}, \frac{(\yy,1)}{1+\sum_i y_i})$ is a NE of game $(\tilde{A},\tilde{B})$.
\end{lemma}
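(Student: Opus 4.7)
The plan is to apply the NE characterization from Lemma \ref{lem.nash} to game $(\tilde A,\tilde B)$, which gives exactly the system (\ref{eq.ne2}), and then match it against the LCP system (\ref{eq.lcp}) by a straightforward rescaling, using the extra pure strategy as a normalization coordinate.

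For the forward direction, I would first use the scalar complementarity conditions $s(t-\pi_1)=0$ and $t(s-\pi_2)=0$ together with the hypothesis $s,t>0$ to conclude $\pi_1=t$ and $\pi_2=s$. Substituting these into the remaining vector conditions of (\ref{eq.ne2}) and dividing through by $t$ and $s$ respectively, the player-$1$ conditions become $H^T(\tyy/t)\le\ones$ and $(\tx_i/s)((H^T(\tyy/t))_i-1)=0$, while the player-$2$ conditions become $H'(\txx/s)\ge\bb$ and $(\ty_j/t)((H'(\txx/s))_j-b_j)=0$. Setting $\xx=\txx/s$ and $\yy=\tyy/t$ reproduces the LCP system (\ref{eq.lcp}) verbatim, and non-negativity is preserved since $s,t>0$.

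For the reverse direction, given an LCP solution $(\xx,\yy)$, let $\alpha=1+\sum_i x_i$ and $\beta=1+\sum_j y_j$, so that $(\txx,s)=(\xx,1)/\alpha$ and $(\tyy,t)=(\yy,1)/\beta$ are valid mixed strategies. I would take $\pi_1=1/\beta=t$ and $\pi_2=1/\alpha=s$ as the candidate equilibrium payoffs and verify each of the four blocks of (\ref{eq.ne2}) by direct substitution. The LCP inequality $H^T\yy\le\ones$ scales to the best-response inequality for player $1$ on the first $m$ pure strategies, while the extra row achieves the payoff $t=\pi_1$ exactly; complementarity $x_i((H^T\yy)_i-1)=0$ translates to $\tx_i((H^T\tyy)_i-\pi_1)=0$ after clearing the common factor. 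Symmetric calculations using $H'\xx\ge\bb$ and $y_j((H'\xx)_j-b_j)=0$ handle player $2$, where the entry $\bb^T+\ones^T$ in the last row of $\tilde B$ is arranged precisely so that the LCP right-hand side $\bb$ and the target payoff $s$ appear together in the payoff formula for columns $1,\dots,m$.

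There is no substantive obstacle; both directions amount to a change of variables between LCP coordinates and mixed-strategy coordinates, with the extra strategy introduced in (\ref{eq.game}) absorbing the affine constant of the LCP. The only subtle point is the positivity of the normalization factors, which is exactly why $s,t>0$ is hypothesized in the forward direction, and why in the reverse direction the bounds $\alpha,\beta\ge 1>0$ guarantee that the constructed profile is a well-defined mixed strategy and that the division/clearing operations used in the verification are legal.
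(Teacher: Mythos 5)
Your proof is correct and follows essentially the same approach as the paper: derive $\pi_1=t$ and $\pi_2=s$ from the scalar complementarity conditions in (\ref{eq.ne2}), substitute into the remaining blocks, and scale to recover $\lcp_C$; the reverse direction is a direct verification that the scaled LCP solution satisfies (\ref{eq.ne1}) and (\ref{eq.ne2}) with $\pi_1=t=1/\beta$ and $\pi_2=s=1/\alpha$. The only difference is that the paper leaves the reverse direction as ``easy to verify'' while you spell out the substitutions, which is a welcome expansion rather than a different route.
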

\begin{proof}
Since, $s>0$ and $t>0$, we have $\pi_1=t$ and $\pi_2=s$ respectively (using (\ref{eq.ne2})). 
Replacing $\pi_1$ and $\pi_2$ accordingly in the inequalities of (\ref{eq.ne2}), we get
\[
\begin{array}{lll}
\forall i \in [m],& (H^T\tyy)_i \le t;\ \ \ \ & \tx_i((H^T\tyy)_i-t)=0 \\ 
\forall j \in [m], & (H'\txx)_j \ge  b_j s; \ \ \ \ \ \ \ \ &  
\ty_j((H'\txx)_j-b_js)=0 \\ 
\end{array}
\]
Dividing the first expression of first line by $t$ and of second line by $s$, and the second expression in both lines by $s
* t$, we get constraints of $\lcp_C$.Thus $(\frac{\txx}{s},\frac{\tyy}{t})$ is a solution of the LCP.
The second part is easy to verify using the formulation of $\lcp_C$ (\ref{eq.lcp}) and NE conditions (\ref{eq.ne1}) and (\ref{eq.ne2}).
\end{proof}

Lemma \ref{lem.netolcp} shows that NE of game $(\tA,\tB)$ with $s>0,t>0$ exactly capture the solutions of
$\lcp_C$. Next lemma shows that these are the only NE of this game.

\begin{lemma}\label{lem.tnonzero}
If $((\txx,s),(\tyy,t))$ is a Nash equilibrium of game $(\tilde{A},\tilde{B})$ then $s>0$ and $t>0$. 
\end{lemma}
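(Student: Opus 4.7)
\medskip
\noindent\textbf{Proof plan for Lemma~\ref{lem.tnonzero}.}

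The plan is to argue by contradiction, separating into sub-cases on which of $s,t$ vanishes. The three structural tools I will exploit are (i) the Nash complementarity conditions~(\ref{eq.ne2}); (ii) the fact that $A$ is lower-triangular with unit diagonal and $H_{ij}=A_{ij}/c_j$ with each $c_j>0$ (so that $H^T$ is upper-triangular with strictly positive diagonal entries $1/c_j$); and (iii) the properties $P_1$--$P_4$, especially the clean identities $(H'\txx)_{n+2l}=\tx_{n+2l-1}/c_{n+2l-1}+\tx_{n+2l}$ and $(H^T\tyy)_{n+2l}=\ty_{n+2l}$.

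First I dispose of the case $t=0$, $s>0$. Here the complementarity $s(t-\pi_1)=0$ forces $\pi_1=0$, hence $(H^T\tyy)_i\leq 0$ for every $i\in[m]$. Since $t=0$ gives $\sum_{j\in[m]}\ty_j=1$, $\tyy$ is nonzero, so I can set $j^\star=\max\{j\in[m]:\ty_j>0\}$. Upper-triangularity of $H^T$ kills every off-diagonal contribution at the row $j^\star$, leaving $(H^T\tyy)_{j^\star}=\ty_{j^\star}/c_{j^\star}>0$, which contradicts $(H^T\tyy)_{j^\star}\le 0$.

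For $t>0$, $s=0$ I use the mirror complementarity $t(s-\pi_2)=0$ to conclude $\pi_2=0$, hence $(H'\txx)_j\geq 0$ and $\ty_j(H'\txx)_j=0$ for all $j$. Property $P_4$ then pins this down on the added indices: whenever $\ty_{n+2l}>0$, $(H'\txx)_{n+2l}=0$ forces $\tx_{n+2l-1}=\tx_{n+2l}=0$. On the other side, $\pi_1\geq t>0$, and $(H^T\tyy)_{n+2l}=\ty_{n+2l}$, so any $\tx_{n+2l}>0$ would need $\ty_{n+2l}=\pi_1>0$, which by the previous sentence would force $\tx_{n+2l}=0$, a contradiction. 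Thus $\tx_{n+2l}=0$ for every $l$, and therefore $(H'\txx)_j=(H\txx)_j$. After this collapse, $H$ is lower-triangular with positive diagonal, so applying the mirror triangular argument (at the largest index in the support of $\txx$) together with the constraint $(H\txx)_j\geq 0$ everywhere and $=0$ on the support of $\tyy$ contradicts $\sum\ty_j=1-t>0$.

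The remaining case $s=t=0$ is the main obstacle and must be handled by combining \emph{both} triangular arguments. Here the easy half still applies --- the upper-triangularity trick gives $\pi_1\geq\ty_{j^\star}/c_{j^\star}>0$ --- but by itself this is only consistent with $s=0$, not contradictory. The difficulty is that $H'$ is genuinely not triangular: its entries $H'_{j,n+2l}=-u^l_j$ for $j<n+2l$ can be arbitrary. To get past this I restrict the clean argument to $H^T$ while using $P_4$ to control the rows of $H'$ indexed by $n+2l$. Propagating these identities iteratively forces $\tx_{n+2l}=0$ and $\ty_{n+2l}=0$ for all $l$; once the max-gate indices are eliminated, $(H'\txx)_j=(H\txx)_j$ on the remaining indices, and the parallel triangular argument on $H$ (at the largest surviving index in the support of $\txx$) contradicts one of $\sum\tx_i=1$ or $\sum\ty_j=1$. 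In each case the contradiction traces back to the positive diagonal of $H$ (equivalently $H^T$), so the proof hinges precisely on the construction of $\cc$ in Table~\ref{tab2} that gave $c_j>0$.
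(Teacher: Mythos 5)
Your case split and the first two‑thirds of each case track the paper's proof: Case~1 ($t=0$, $s>0$) is handled identically, and in the other cases you correctly use $P_4$ and complementarity to force $\tx_{n+2l}=0$ for all $l$, whence $H'\txx=H\txx$. The gap is in how you close Cases~2 and~3. A ``mirror triangular argument on $H$ at the largest index in the support of $\txx$'' does not exist: $H$ is \emph{lower}-triangular, so the index at which the diagonal entry is isolated is the \emph{smallest} element of the support of $\txx$, and even there one only obtains $(H\txx)_{i_\star}=\tx_{i_\star}/c_{i_\star}>0$, which is perfectly consistent with all available constraints (in these cases they have the form $(H\txx)_j\ge -\pi_2$ with $\pi_2\ge 0$, with equality on the support of $\tyy$). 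Triangularity yields no contradiction here, and your closing claim that every case ``traces back to the positive diagonal of $H$'' is not right for Cases~2 and~3 --- the positive diagonal only matters for the degenerate sub-cases where $\tyy$ is forced to vanish.

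What actually kills Cases~2 and~3 is a bilinear/complementarity identity. After the collapse $H'\txx=H\txx$, the first player's complementarity gives $\txx^TH^T\tyy=\pi_1\sum_i\tx_i=\pi_1$ (using $s=0$), and $\pi_1>0$ (from $t>0$ in Case~2, and from your upper-triangular observation applied to $\tyy\neq\zeros$ in Case~3); the second player's complementarity gives $\tyy^TH\txx=-\pi_2\sum_j\ty_j\le 0$ since $\pi_2\ge s=0$. As $\txx^TH^T\tyy=\tyy^TH\txx$, this is the contradiction. The paper packages exactly this mechanism as Lemma~\ref{lem.semimon} (semi-monotonicity of the LCP matrix $M$, proved via $\zz^TM\zz=0$) and invokes it for the sub-case $\pi_1,\pi_2>0$ of Case~3; your proposal is missing this ingredient. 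A smaller point: in Case~2 you assert $\sum_j\ty_j=1-t>0$ without justification; it does hold, but only because $s=0$ forces some $\tx_i>0$, hence $(H^T\tyy)_i=\pi_1\ge t>0$ for that $i$, hence $\tyy\neq\zeros$.
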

\begin{proof}
We will derive a contradiction for each of the three cases separately.
\medskip

\noindent{\em Case 1:} $s>0$ and $t=0$ \\
Then, we have $\pi_1=t=0$ and therefore, $H^T\tyy \le 0$. Since $H^T$ is upper-triangular with strictly positive values on the
diagonal, the only solution of $H^T\tyy\le 0$ is $\tyy=0$, which contradicts the fact that co-ordinates of vector
$(\tyy,t)$ sums to one (see (\ref{eq.ne1})).
\medskip

\noindent{\em Case 2:} $s=0$ and $t>0$ \\
Then, we have $\pi_2=s=0$ and therefore, $-H'\txx \le 0$. Recall that $H'=H-\sum_{l=1}^k \uu^l\cdot \vv^{l^T}$ and
$\vv^{l^T}\cdot \txx = \tx_{n+2l}$. Further, due to property ($P_4$), $\forall l \in [k],\
(H'\txx)_{n+2l}=\nfrac{\tx_{n+2l-1}}{c_{n+2l-1}}+\tx_{n+2l}$. And, due to ($P_2$) we have $(H^T\tyy)_{n+2l}=\ty_{n+2l}$. 

Now, for an $l \in [k]$ if $\tx_{2+nl}>0$, then $(H^T\tyy)_{n+2l}=\pi_1 \Rightarrow \ty_{n+2l} = \pi_1 > 0$
(using (\ref{eq.ne2}) and $t>0$). However, the $n+2l^{th}$ strategy of the second player is not fetching the maximum
payoff, because $(-H'\txx)_{n+2l}\le -\tx_{n+2l}<0$, a contradiction. 

Thus, we have $\tx_{2+nl}=0,\forall l \in [k]$. Then $H'\txx=H\txx$. Further, the best response condition of the first player
gives $(\txx,s)^T\tA(\tyy,t)=\pi_1 \Rightarrow \txx^TH^T\tyy = \pi_1>0$, and the best
response condition of the second player gives $(\txx,s)^T\tB(\tyy,t)=\pi_2 \Rightarrow
\txx^TH^T\tyy =0$ a contradiction.
\medskip

\noindent{\em Case 3:} $s=0$ and $t=0$ \\
If $\pi_1>0$ and $\pi_2>0$, then due to conditions (\ref{eq.ne1}) and (\ref{eq.ne2}), vector $\tzz=(\txx,\tyy)\neq 0$ is a
solution of LCP $M\zz\le \qq, \zz\ge0, \zz^T(M\zz-\qq)=0$, where $M=\left[\begin{array}{cc} \zeros & H^T \\ -H' &
\zeros\end{array}\right]$ and $\qq=(\pi_1*\ones,\pi_2*\ones)>0$. This contradicts Lemma \ref{lem.semimon}.

If $\pi_1=0$, then the argument is similar to {\em Case 1}. If $\pi_1>0$ and $\pi_2=0$, then it is similar to
{\em Case 2}.
\end{proof}

Now, we have established all the required facts to obtain the main theorems. 
Using Lemmas \ref{lem.tnonzero}, \ref{lem.netolcp}, \ref{lem.lcp}, \ref{lem.flp}, and 
\ref{lem.size}, we show the next theorem.

\begin{theorem}\label{thm.lftogame}
Given a $kD$-Linear-FIXP function $F$ defined by circuit $C$, there exists a bimatrix game $(\tA,\tB)$ with $rank(\tA+\tB)
\le (k+1)$, and $\tA$ upper-triangular, such that the Nash equilibrium strategies
of the first player in game $(\tA,\tB)$ are in one-to-one correspondence with the fixed-points of function $F$, where $size[\tA]+size[\tB]
\le poly(size[C])$.
\end{theorem}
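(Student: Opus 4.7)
The plan is to chain together the four correspondence lemmas already established (Lemmas~\ref{lem.tnonzero}, \ref{lem.netolcp}, \ref{lem.lcp}, \ref{lem.flp}) to obtain the bijection between NE strategies of the first player in $(\tA,\tB)$ and fixed-points of $F$, and then separately verify the three structural claims: the rank bound, the upper-triangularity of $\tA$, and the polynomial size bound. First I would observe that, by Lemma~\ref{lem.tnonzero}, every NE $((\txx,s),(\tyy,t))$ of $(\tA,\tB)$ has $s>0$ and $t>0$; Lemma~\ref{lem.netolcp} then turns this into a bijection with solutions of $\lcp_C$ (via scaling); Lemma~\ref{lem.lcp} turns solutions of $\lcp_C$ into fixed-points $\ll \in [0,1]^k$ of $F^{lp}$; and Lemma~\ref{lem.flp} identifies these with fixed-points of $F$. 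In particular, the first-player strategy $\txx/s$ determines $\ll$ by $\l_l = \tx_{n+2l}/s$, so the map is one-to-one on first-player strategies.

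The main substantive check is the rank bound. I would use $H' = H - \sum_{l=1}^k \uu^l\vv^{l^T}$, which gives $H^T - H'^T = \sum_{l=1}^k \vv^l\uu^{l^T}$, a matrix of rank at most $k$. Then
\[
\tA+\tB \;=\; \left[\begin{array}{cc} H^T - H'^T & \zeros \\ \bb^T+\ones^T & 2\end{array}\right]
\;=\; \left[\begin{array}{cc} \sum_{l=1}^k \vv^l\uu^{l^T} & \zeros \\ \bb^T+\ones^T & 2\end{array}\right],
\]
and the last row contributes at most one additional rank, so $rank(\tA+\tB) \le k+1$.

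For the upper-triangularity of $\tA$, recall from Section~\ref{sec.lp} that $A$ is lower-triangular with ones on the diagonal, so $H_{ij}=A_{ij}/c_j$ is still lower-triangular with positive diagonal entries $1/c_i$; thus $H^T$ is upper-triangular with positive diagonal. Combined with the zero off-diagonal blocks and the entry $1$ in position $(m{+}1,m{+}1)$, the full matrix $\tA$ is upper-triangular as claimed. For the size bound, Lemma~\ref{lem.size} gives $size[A], size[\bb], size[\uu^l], size[\cc] = poly(size[C])$; since $H,H'$ are obtained by column-scaling and a rank-$k$ correction involving $\uu^l$ and $\vv^l$ (the latter being coordinate unit vectors with entry $1/c_{n+2l}=1$ by property $(P_3)$), the sizes of $\tA$ and $\tB$ remain polynomial in $size[C]$.

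I expect the only delicate step to be the rank computation, since it depends on recognizing that the ``parametric'' part of the LP --- the substitution $\l_l \mapsto \vv^{l^T}\xx$ --- was precisely engineered to manifest as $k$ rank-one corrections $\uu^l \vv^{l^T}$, so that after the LP-to-LCP transformation and the NE gadget, the rank of $\tA+\tB$ is controlled by $k$ (plus one for the appended row handling the right-hand side $\bb$ in the LCP). Everything else is routine bookkeeping once the earlier lemmas are in place.
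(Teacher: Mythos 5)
Your proposal is correct and follows essentially the same route as the paper: chain Lemmas~\ref{lem.tnonzero}, \ref{lem.netolcp}, \ref{lem.lcp}, and \ref{lem.flp} for the correspondence, observe that $\tA+\tB$ is the rank-$\le k$ matrix $\sum_l \vv^l\uu^{l^T}$ plus one extra row, and invoke Lemma~\ref{lem.size} for the size bound (your explicit display of $\tA+\tB$ is in fact more detailed than the paper's one-line rank claim, and you correctly state that $H$ is lower-triangular where the paper has a small typo). The only point treated more carefully in the paper is injectivity: to see that distinct first-player NE strategies $(\txx,s)\neq(\txx',s')$ yield distinct fixed points one uses the normalization $s+\sum_i\tx_i=1$ to conclude $\txx/s\neq\txx'/s'$, but this is the routine bookkeeping you anticipated.
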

\begin{proof}
From circuit $C$ of $F$ construct $F^{lp}$ of (\ref{eq.flp}), then $\lcp_C$ of (\ref{eq.lcp}) from $F^{lp}$, and finally
game $(\tA,\tB)$ of (\ref{eq.game}) from the LCP.
Using Lemmas \ref{lem.flp} and \ref{lem.lcp} it follows that solution vectors $\xx$ of $\lcp_C$ are in
one-to-one correspondence with the fixed point of $F^{lp}$, which are exactly the fixed points of function $F$
in Linear-FIXP that we started with. 

Further, the Nash equilibrium $(\txx,s),(\tyy,t))$ of game $(\tA,\tB)$ maps to a solution $(\frac{\txx}{s},\frac{\tyy}{t})$ of $\lcp_C$
(due to Lemmas \ref{lem.netolcp} and \ref{lem.tnonzero}). And, two NE with distinct first players strategies $(\txx,s)\neq(\txx',s')$
can not map to the same $\xx$ in a solution of $\lcp_C$. If they do, then we have $\frac{\txx}{s}=\frac{\txx'}{s'} \Rightarrow s'\sum_i
\tx_i =s\sum_i \tx'_i \Rightarrow s'(1-s)=s(1-s') \Rightarrow s'=s \Rightarrow \txx=\txx'$, a contradiction.

Thus we get a game $(\tA,\tB)$ whose Nash equilibrium strategies of the first player are in
one-to-one correspondence with the fixed points of $F$.  Since $H'=H-\sum_{l=1}^k \uu^l\vv^{l^T}$, $rank(\tA+\tB)\le k+1$,
and since $H$ is upper-triangular, $\tA$ is also upper-triangular. The size of matrices $\tA$ and $\tB$ is bounded by
polynomial in size of $A$, $\bb$, $\cc$ and $\uu^l,\ \forall l \in [k]$, and hence the theorem follows using Lemma
\ref{lem.size}.
\end{proof}

Using Theorems \ref{thm.kdlf} and \ref{thm.lftogame}, we get the next theorem.

\begin{theorem}\label{thm.hard}
Nash equilibrium computation in bimatrix games with rank-$k$, $k > 2$ is PPAD-hard.
\end{theorem}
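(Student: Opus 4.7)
The plan is to obtain Theorem \ref{thm.hard} as a direct corollary by composing Theorems \ref{thm.kdlf} and \ref{thm.lftogame}, with no new machinery required. For any fixed $k \ge 3$, I would set $k' = k-1 \ge 2$ and invoke Theorem \ref{thm.kdlf} to conclude that $k'D$-Linear-FIXP is PPAD-hard. Then I would apply Theorem \ref{thm.lftogame} to the circuit $C$ of a $k'D$-Linear-FIXP instance to obtain, in polynomial time, a bimatrix game $(\tA,\tB)$ with $\mathrm{rank}(\tA+\tB) \le k'+1 = k$, whose first-player Nash equilibrium strategies are in one-to-one correspondence with fixed points of $F$. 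Composing the two polynomial-time many-one reductions yields a polynomial-time reduction from a PPAD-hard problem to NE computation in rank-$k$ bimatrix games, which is exactly what Theorem \ref{thm.hard} asserts.

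The one mild subtlety I would flag is the reading of ``rank-$k$'': Theorem \ref{thm.lftogame} only guarantees $\mathrm{rank}(\tA+\tB) \le k$, not equality. If the statement is interpreted as ``rank at most $k$'', no further work is needed. If strict rank $k$ is desired for every $k \ge 3$ separately, I would pad $(\tA,\tB)$ with a small block of auxiliary pure strategies whose payoffs are sufficiently negative to be strictly dominated (and so receive zero mass in every NE), while using diagonal entries in the added block of $\tA+\tB$ to contribute precisely the missing rank. This padding is polynomial in the size of the game and preserves the equilibrium structure, so the reduction goes through unchanged.

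Because all the substantive work, namely the reduction from $2D$-Brouwer to $2D$-Linear-FIXP in Section \ref{sec.2dlf} and the parametric-LP/LCP construction that turns a $kD$-Linear-FIXP circuit into a rank-$(k+1)$ game in Sections \ref{sec.lp} and \ref{sec.lcp}, has already been carried out in the earlier theorems, I do not expect any genuine obstacle in deducing Theorem \ref{thm.hard}; the proof is essentially a one-line chaining argument, and the only thing to be careful about is verifying (as noted above) that the polynomial size bound from Lemma \ref{lem.size} propagates cleanly through the composition so that the overall reduction remains polynomial-time.
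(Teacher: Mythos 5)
Your proposal matches the paper's own argument exactly: the paper derives Theorem~\ref{thm.hard} precisely by chaining Theorem~\ref{thm.kdlf} (with $k'=2$, and more generally $k'\ge 2$) through Theorem~\ref{thm.lftogame}, with the polynomial size bound from Lemma~\ref{lem.size} guaranteeing the reduction is polynomial-time. The rank subtlety you flag is real but left implicit in the paper; the intended reading is ``rank at most $k$'' (as is standard for constant-rank game classes), so no padding is needed, though your dominated-block padding would also resolve it cleanly if exact rank were required.
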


Since, matrix $\tA$ is upper-triangular, we get the following corollary,

\begin{corollary}\label{cor.lu}
Nash equilibrium computation in constant rank bimatrix games with one of the matrix being lower/upper-triangular is PPAD-hard.
\end{corollary}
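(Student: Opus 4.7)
The plan is to observe that Corollary \ref{cor.lu} is an immediate byproduct of the construction already carried out in Theorem \ref{thm.lftogame}, rather than requiring any new reduction. Recall that the matrix $A$ used to encode the linear inequalities (\ref{eq.m1}) arose from an ordering $g_1,\dots,g_m$ of the $\max$ gates in a topological sort of the DAG $C$. Because $L_i$ only involves $x_1,\dots,x_{i-1}$ and the parameters $\ll$, matrix $A$ is lower-triangular with ones on the diagonal. Scaling columns by the positive cost coefficients $c_j$ preserves this, so $H$ is lower-triangular as well. Hence $H^T$ is upper-triangular, and the payoff matrix
\[
\tA = \left[\begin{array}{cc} H^T & \zeros \\ \zeros^T & 1\end{array}\right]
\]
of the constructed game is upper-triangular by inspection.

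Combining this structural observation with Theorem \ref{thm.hard}, we obtain PPAD-hardness for constant rank games in which the first player's payoff matrix is upper-triangular. The lower-triangular case follows by a trivial symmetry: given any game $(A,B)$, swapping the two players yields the game $(B^T,A^T)$ with the same set of Nash equilibria (up to relabeling). Applying this swap to the game produced by Theorem \ref{thm.lftogame} gives a game whose \emph{second} player has an upper-triangular matrix; equivalently, by transposing both matrices and relabeling rows as columns, one produces a game whose first player's matrix is lower-triangular while the rank of the sum is unchanged. Either way, the rank bound $\mathrm{rank}(\tA+\tB)\le k+1$ is preserved.

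Putting these pieces together completes the proof, with no step requiring new machinery: the main content is already baked into the LP-to-LCP-to-game construction of Section \ref{sec.lcp}, where the triangularity of $A$ (and hence $H$) was deliberately engineered via the topological ordering of $\max$ gates. The only thing to verify carefully is that the swap preserves both the PPAD-hardness and the rank bound; this is immediate because the Nash equilibria of $(A,B)$ and $(B^T,A^T)$ correspond, and $\mathrm{rank}(B^T + A^T) = \mathrm{rank}((A+B)^T) = \mathrm{rank}(A+B)$. No obstacle arises beyond bookkeeping.
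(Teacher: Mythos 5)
Your proposal is correct and matches the paper's reasoning: the paper derives Corollary \ref{cor.lu} in exactly this way, noting that $\tA$ in Theorem \ref{thm.lftogame} is upper-triangular (because $A$, hence $H$, is lower-triangular by the topological ordering of $\max$ gates, so $H^T$ is upper-triangular), and the lower-triangular variant is the trivial player-swap you describe. Your write-up is just a more explicit version of the paper's one-line justification.
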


NE computation in a bimatrix game $(A,B)$ can be reduced to computing a symmetric NE
of a symmetric bimatrix game $(S,S^T)$ where $S=\left[\begin{array}{cc} 0 & A \\ B^T &  0\end{array}\right]$ \cite{agt.ch2}.
Note that if, $rank(A+B)$ is $k$ then $rank(S+S^T)$ is $2k$, and therefore using Theorem \ref{thm.hard} we get,

\begin{corollary}\label{cor.symm}
Computing a symmetric Nash equilibrium of a symmetric game with rank-$k$, $k>5$, is PPAD-hard.
\end{corollary}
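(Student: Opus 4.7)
The plan is to combine Theorem \ref{thm.hard} with the standard symmetrization reduction from \cite{agt.ch2} cited just above the corollary. Specifically, I start with a rank-$r$ bimatrix game $(A,B)$ with $r = k - 2 \ge 3$ for which NE computation is PPAD-hard (Theorem \ref{thm.hard}), and build the symmetric game $(S,S^T)$ with
\[
S = \left[\begin{array}{cc} 0 & A \\ B^T & 0 \end{array}\right].
\]
By the result of \cite{agt.ch2}, the Nash equilibria of $(A,B)$ are in one-to-one correspondence with the symmetric Nash equilibria of $(S,S^T)$, so computing the latter solves the former.

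The only nontrivial step is the rank computation: I need to show that $\mathrm{rank}(S+S^T) = 2\,\mathrm{rank}(A+B)$. Observe that
\[
S + S^T = \left[\begin{array}{cc} 0 & A+B \\ (A+B)^T & 0 \end{array}\right],
\]
and the rank of a block-antidiagonal matrix $\bigl[\begin{smallmatrix} 0 & M \\ M^T & 0 \end{smallmatrix}\bigr]$ equals $2\,\mathrm{rank}(M)$ (e.g., because its nonzero singular values are the singular values of $M$, each appearing twice). Thus starting from $r = \mathrm{rank}(A+B) \ge 3$ yields $\mathrm{rank}(S+S^T) = 2r \ge 6$, giving the required bound $k > 5$.

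There is no real obstacle here; the step to watch is simply that the symmetrization of \cite{agt.ch2} precisely doubles the rank rather than collapsing it, which the block structure above makes transparent. Putting these observations together gives a polynomial-time reduction from PPAD-hard rank-$3$ bimatrix NE (Theorem \ref{thm.hard}) to computing a symmetric NE of a symmetric game of rank $6$, and the same construction clearly works for any $k > 5$ by padding the original game to higher rank (e.g., by adding independent zero-sum rows/columns) before symmetrizing. This establishes Corollary \ref{cor.symm}.
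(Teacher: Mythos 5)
Your approach is exactly the paper's: symmetrize via $S=\bigl[\begin{smallmatrix}0&A\\B^T&0\end{smallmatrix}\bigr]$, note that $S+S^T=\bigl[\begin{smallmatrix}0&A+B\\(A+B)^T&0\end{smallmatrix}\bigr]$ has rank $2\,\mathrm{rank}(A+B)$, and invoke Theorem~\ref{thm.hard}; the singular-value justification for the rank-doubling step is a nice explicit touch where the paper merely asserts it. One small slip: you initially set $r = k-2$, which is inconsistent with the relation $k = 2r$ that your own conclusion uses (for $k=6$ you need $r=3$, not $r=4$), and the suggestion of padding with \emph{zero-sum} rows/columns would not raise $\mathrm{rank}(A+B)$ since such additions contribute zero rows to $A+B$; but since ``rank-$k$'' here means rank at most $k$, no padding is needed and hardness for rank $6$ already covers all $k>5$.
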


Etessami and Yannakakis \cite{EY07} showed that solving a simple stochastic games reduces to computing a unique fixed-point of a
Linear-FIXP problem.  Note that if the Linear-FIXP instance that we start with has a unique fixed-point then the resulting
game in Theorem \ref{thm.lftogame}
will have a unique Nash equilibrium strategy of the first player. In that case, the NE strategies of the second player
should form a convex set because they are essentially solutions of a feasibility lp (follows Lemma \ref{lem.nash}). 
Using this together with the result of \cite{EY07}, we get the following.

\begin{corollary}\label{cor.ssg}
Nash equilibrium computation in bimatrix games with a convex set of Nash equilibria is as hard as solving a simple stochastic game.
\end{corollary}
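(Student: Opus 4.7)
The plan is to chain together two reductions: the Etessami--Yannakakis reduction from simple stochastic games (SSG) to computing the unique fixed-point of a Linear-FIXP instance, and the reduction of Theorem \ref{thm.lftogame} from Linear-FIXP to bimatrix games. So first I would invoke \cite{EY07} to produce, from an arbitrary SSG instance, a Linear-FIXP circuit $C$ defining a function $F$ that is guaranteed to have a \emph{unique} fixed-point, from which the value of the SSG can be read off in polynomial time. Then I would feed this $C$ into the construction of Theorem \ref{thm.lftogame} to obtain a bimatrix game $(\tA,\tB)$ whose first-player Nash equilibrium strategies are in one-to-one correspondence with the fixed-points of $F$.

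Next I would argue that the resulting game has a convex set of Nash equilibria. Because $F$ has a unique fixed-point, the one-to-one correspondence of Theorem \ref{thm.lftogame} forces every NE of $(\tA,\tB)$ to share the same first-player mixed strategy, call it $(\bar\txx,\bar s)$. Thus the set of all Nash equilibria has the product form $\{(\bar\txx,\bar s)\}\times Y^*$, where $Y^*$ is the set of second-player best responses to $(\bar\txx,\bar s)$. But once the opponent's strategy is fixed, the best-response set for the other player is exactly the feasible set of the linear system in Lemma \ref{lem.nash} (the inequalities $(\bar\txx^T \tB)_j\le \pi_2$ together with the complementarity $y_j((\bar\txx^T\tB)_j-\pi_2)=0$, where the complementarity simplifies to a support restriction once $\pi_2$ is determined). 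Since this is a polytope, $Y^*$ is convex, and hence so is $\{(\bar\txx,\bar s)\}\times Y^*$.

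Finally I would observe that any algorithm that computes a Nash equilibrium of a bimatrix game whose NE set is convex, when applied to $(\tA,\tB)$, returns a pair from which the first-player strategy $(\bar\txx,\bar s)$ can be extracted; then by Theorem \ref{thm.lftogame} the fixed-point of $F$ is recovered by a linear function of $\bar\txx/\bar s$, which in turn yields the SSG solution. All transformations are polynomial time in the input size (using Lemma \ref{lem.size} and the polynomial bound in Theorem \ref{thm.lftogame}), completing the reduction.

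I do not anticipate a serious obstacle: the only subtle point is checking that uniqueness of the fixed-point really translates into uniqueness of the first-player NE strategy (not merely of some induced quantity), and that the second-player best-response set for a fixed $(\bar\txx,\bar s)$ is genuinely convex rather than merely a union of faces; both follow directly from Theorem \ref{thm.lftogame} and Lemma \ref{lem.nash}, respectively, so the corollary is essentially a packaging of already-proved facts.
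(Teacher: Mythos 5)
Your proposal is correct and follows essentially the same route as the paper: chain the Etessami--Yannakakis reduction from SSG to a unique-fixed-point Linear-FIXP instance with Theorem \ref{thm.lftogame}, note that uniqueness of the fixed point forces a unique first-player NE strategy, and observe that the remaining second-player equilibrium strategies form the solution set of a linear feasibility system (Lemma \ref{lem.nash}) and hence a convex set. The paper's justification is exactly this packaging, stated more tersely in the paragraph preceding the corollary.
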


Chen et. al. \cite{CDT} showed PPAD-hardness for NE computation in bimatrix games ($2$-Nash), which also implies that
symmetric NE computation in symmetric bimatrix game is PPAD-hard (symmetric $2$-Nash) as the former reduces to the latter
(discussed in Section \ref{sec.prel}).  Theorem \ref{thm.hard} gives an alternative proof of these facts.  The Chen et. al.
reduction goes through generalized circuit (similar to Linear-FIXP circuit) with fuzzy gates, graphical games, and game
gadgets to simulate each gate of the generalized circuit separately.  Our reduction bypasses all of these completely, and
provides a simpler reduction using the connections between LPs, LCPs and bimatrix games.
In the next section we give further simplified proof for PPAD-hardness of $2$-Nash and symmetric $2$-Nash, bypassing even
the parameterized LP.

\subsection{Hardness of symmetric and non-symmetric $2$-Nash}\label{sec.simple}
Lemma \ref{lem.max} shows that (\ref{eq.m1}) and (\ref{eq.m2}) are enough to capture execution of the circuit in $\xx$ for
given $\ll$.  As discussed in Section \ref{sec.lp} for any $\ll \in \R^k$ and $\xx \in \CP(\ll)$ (the polyhedron defined in
(\ref{eq.1})), vector $(\xx,\ll)$, satisfies (\ref{eq.m1}).  However, enforcing (\ref{eq.m2}) requires quadratic
complementarity-type constraints. Using these facts, in this section we directly construct an LCP and then a symmetric
bimatrix game (without going through the parameterized LP). This will give a further simplified proofs for PPAD-hardness of
symmetric $2$-Nash, and also for $2$-Nash using the reduction from the former to the later through imitation games
\cite{MT}.

Consider the $A$, $\bb$ and $\uu^l, \forall l \in [k]$ of (\ref{eq.1}).  Recall that $A$, $\bb$ and $\uu^l$ satisfies
properties ($P_1$) and ($P_2$) described in Section \ref{sec.lcp}. Further, since evaluating circuit $C$ is equivalent to
satisfying (\ref{eq.m1}) and (\ref{eq.m2}), where $x_i$ captures the output of $i^{th}$ max gate, $x_{n+2l}, \ \forall l \in
[k]$ captures the outputs of the circuit (Lemma \ref{lem.max}).  Let $\vv^l \in \R^m$ be a unit vector with $1$ on
$(n+2l)^{th}$ co-ordinate and zeros otherwise, i.e., $\vv^{l^T} \xx = x_{n+2l}$.  Let $A'=A-\sum_{l \in [k]} \uu^l
\vv^{l^T}$, and consider the following LCP.

\begin{equation}\label{eq.lcp2}
\begin{array}{lc}
& \xx\ge 0; \ \ \ \  A'\xx\ge \bb \\
\forall i \in [m], & x_i((A'\xx)_i-b_i)=0
\end{array}
\end{equation}

\begin{lemma}\label{lem.lcp2}
Vector $\xx \in \R^m$ is a solution of LCP \ref{eq.lcp2} iff $\ll$, where $\l_l=x_{n+2l}, \forall l \in [k]$, is a
fixed-point of the Linear-FIXP function $F$.  
\end{lemma}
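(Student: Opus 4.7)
The plan is to show that under the identification $\l_l = x_{n+2l}$, the constraints of LCP (\ref{eq.lcp2}) are literally the constraints (\ref{eq.m1}) and (\ref{eq.m2}), so that Lemma \ref{lem.max} does all the heavy lifting. The crucial observation is that substituting $\l_l = \vv^{l^T}\xx = x_{n+2l}$ into the parameterized inequality $A\xx \ge \sum_l \l_l\uu^l + \bb$ of (\ref{eq.1}) yields exactly $A'\xx \ge \bb$, since $A' = A - \sum_l \uu^l \vv^{l^T}$. Thus the feasibility of the LCP matches (\ref{eq.m1}) verbatim with $\ll$ chosen to be the appropriate coordinates of $\xx$.

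For the complementarity piece, I would check that, coordinatewise, $(A'\xx)_i - b_i = x_i - L_i(x_1,\dots,x_{i-1},\tilde\ll)$ where $\tilde\ll = (x_{n+2},x_{n+4},\dots,x_{n+2k})$: this follows because row $i$ of $A$ encodes $x_i - L_i$ as a function of $\xx$ and $\ll$, and the correction $\sum_l \uu^l\vv^{l^T}$ substitutes $\l_l$ by $x_{n+2l}$. Consequently the LCP condition $x_i((A'\xx)_i - b_i)=0$ is precisely (\ref{eq.m2}) with $\ll=\tilde\ll$.

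Given this correspondence, the forward direction proceeds as follows: if $\xx$ solves (\ref{eq.lcp2}) and we set $\l_l = x_{n+2l}$, then $(\xx,\ll)$ satisfies (\ref{eq.m1}) and (\ref{eq.m2}). Lemma \ref{lem.max} then implies that feeding $\ll$ into circuit $C$ causes the $i$-th $\max$ gate to output $x_i$; in particular the output of $C$ is $(x_{n+2l})_{l \in [k]} = \ll$, so $\ll$ is a fixed-point of $F$. One small point to confirm is that $\ll \in [0,1]^k$, which is automatic because the outer $\max$ gates of (\ref{eq.m}) force $0 \le x_{n+2l} \le 1$ whenever (\ref{eq.m1}) and (\ref{eq.m2}) hold. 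For the converse, given a fixed-point $\ll$ of $F$, I would let $x_i$ be the value of the $i$-th $\max$ gate on input $\ll$; Lemma \ref{lem.max} gives that $(\xx,\ll)$ satisfies (\ref{eq.m1}) and (\ref{eq.m2}), and because $\ll$ is a fixed-point, $x_{n+2l} = \l_l$, so the substitution is consistent and $\xx$ solves the LCP.

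I expect no serious obstacle here—this lemma is really a bookkeeping exercise that repackages the existing machinery of Section \ref{sec.lp} without going through the LP duality. The only care needed is to verify that the row-by-row identification $(A'\xx)_i - b_i = x_i - L_i(\xx,\tilde\ll)$ is in fact exact (using properties ($P_1$) and ($P_2$) that force $\uu^l_{n+2l'}=0$ and the correct unit-vector structure), so that the complementarity constraints of the LCP line up coordinatewise with the quadratic constraints (\ref{eq.m2}) without any extra slack.
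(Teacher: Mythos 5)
Your proof is correct and follows essentially the same route as the paper's: both reduce the lemma to Lemma \ref{lem.max} by observing that the substitution $\l_l=\vv^{l^T}\xx=x_{n+2l}$ turns the LCP constraints of (\ref{eq.lcp2}) into exactly (\ref{eq.m1}) and (\ref{eq.m2}). The paper states this identification more tersely ("by construction"), whereas you spell out the row-by-row matching and the $[0,1]^k$ containment, but the argument is the same.
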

\begin{proof}
Since, $\forall l \in [k],\ \l_l = x_{n+2l}=\vv^{l^T} \xx$ the forward direction follows, because $(\xx,\ll)$
satisfies both (\ref{eq.m1}) and (\ref{eq.m2}) by construction (Lemma \ref{lem.max}).
For the reverse direction let $\ll$ be a fixed-point of $F$ and $\xx$ be a vector such that $x_i$ is the output value of $i^{th}$ max
gate when $\ll$ is the input to the circuit $C$. 
Clearly $(\xx,\ll)$ satisfies (\ref{eq.m1}) and (\ref{eq.m2}) (Lemma \ref{lem.max}). 
The lemma follows using the fact that $\l_l = x_{n+2l} = \vv^{l^T} \xx, \forall l \in [k]$ because $\ll$ is a fixed point.
\end{proof}

Next, we construct a symmetric bimatrix game whose symmetric NE are in one-to-one correspondence with the solutions of LCP
(\ref{eq.lcp2}). Let $S$ be the following $(m+1)\times (m+1)$-dimensional matrix.

\[
S=\left[\begin{array}{cc}
-A' & b+\ones\\
\zeros^T & 1
\end{array}\right]
\]

Consider the symmetric game $(S,S^T)$. Using Lemma \ref{lem.symnash} we get that a mixed strategy vector $\zz=(\xx,t) \in \R^{(m+1)}$ is a
symmetric NE of game $(S,S^T)$ if and only if 

\begin{equation}\label{eq.sne1}
\begin{array}{lcc}
& \xx\ge0;\ \ \  t\ge 0; &  t+\sum_{i \in [m]} x_i=1\\
& -A'\xx+\bb t+t \le \pi;&  t \le \pi; \\
\forall i \in [m],&  x_i((-A'\xx)_i+b_i t+ t-\pi)=0; & t(t-\pi)=0
\end{array}
\end{equation}

where $\pi$ is the payoff $\zz^T S\zz$ of both the agents at NE $(\zz,\zz)$. 

\begin{lemma}\label{lem.sne}
Strategy $\zz=(\xx,t)$, with $t>0$, is a symmetric NE of game $(S,S^T)$ iff $\xx'=\frac{\xx}{t}$ is a solution of LCP
(\ref{eq.lcp2}).  Further, if $\xx$ is a solution of LCP (\ref{eq.lcp2}) then
$\frac{(\xx,1)}{1+\sum_i x_i}$ is a symmetric NE of game $(S,S^T)$.
\end{lemma}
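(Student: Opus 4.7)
The plan is to use the symmetric-NE characterization in (\ref{eq.sne1}) directly, exploiting that the extra strategy of $(S,S^T)$ (the $(m+1)^{\rm st}$ row/column) behaves like a ``normalizer'' that converts the mixed-strategy constraints into the LCP feasibility/complementarity constraints after dividing by $t$. The construction of $S$ has been engineered precisely so that the diagonal entry $1$ in the last row/column gives $t(t-\pi)=0$, and the block $(\bb+\ones)$ in the last column together with the $\zeros^T$ last row of $S$ forces the payoff constant $\pi$ to match $t$ when $t>0$, which is the crux.

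For the forward direction I would start from a symmetric NE $\zz=(\xx,t)$ with $t>0$. From the final equation $t(t-\pi)=0$ in (\ref{eq.sne1}) I conclude $\pi = t$. Plugging $\pi=t$ into the inequality $-A'\xx+\bb t+t\le \pi$ yields $A'\xx\ge \bb t$, and dividing through by $t>0$ gives $A'\xx'\ge \bb$ where $\xx'\defeq \xx/t$. Nonnegativity $\xx'\ge 0$ is immediate from $\xx\ge 0$ and $t>0$. The complementarity equation $x_i((-A'\xx)_i+b_it+t-\pi)=0$ becomes $x_i\bigl(b_it - (A'\xx)_i\bigr)=0$; dividing by $t^2$ gives $x'_i\bigl((A'\xx')_i-b_i\bigr)=0$, which is exactly the complementarity condition of LCP (\ref{eq.lcp2}). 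Hence $\xx'$ solves (\ref{eq.lcp2}).

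For the reverse direction, let $\xx$ be any solution of LCP (\ref{eq.lcp2}) and set $\alpha\defeq 1+\sum_i x_i$, $\xx^*\defeq \xx/\alpha$, $t^*\defeq 1/\alpha$, so $\zz^*\defeq(\xx^*,t^*)$ lies in the $(m+1)$-dimensional simplex with $t^*>0$. I would take $\pi\defeq t^*$ and verify each line of (\ref{eq.sne1}) in turn: the last equation $t^*(t^*-\pi)=0$ holds trivially; the inequality $t^*\le \pi$ holds with equality; for $i\in[m]$, the LCP feasibility $A'\xx\ge \bb$ scaled by $1/\alpha$ gives $A'\xx^*\ge \bb t^*$, which rearranges to $(-A'\xx^*)_i + b_i t^* + t^* \le t^* = \pi$; and the LCP complementarity $x_i((A'\xx)_i-b_i)=0$ scaled by $1/\alpha^2$ gives $x^*_i((-A'\xx^*)_i+b_i t^*+t^*-\pi)=0$. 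Thus $\zz^*$ satisfies (\ref{eq.sne1}) and is a symmetric NE.

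There is no real obstacle; the only delicate point is bookkeeping the normalization factor $\alpha$ and using the specific structure of $S$ (the last column equals $\bb+\ones$ and the diagonal $(m+1,m+1)$ entry equals $1$) to make the ``$+t$'' terms on both sides of the best-response inequality cancel exactly, which is what forces the LCP picture to emerge from the NE conditions.
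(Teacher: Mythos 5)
Your proof is correct and follows the same approach as the paper: in both directions you exploit the condition $t(t-\pi)=0$ from (\ref{eq.sne1}) to pin $\pi=t$, then divide through by $t$ (resp.\ scale by the normalization $\alpha=1+\sum_i x_i$) to translate between the NE conditions of $(S,S^T)$ and the LCP conditions in (\ref{eq.lcp2}). The bookkeeping of the $b_i t + t$ terms and the $\alpha$-normalization matches the paper's argument.
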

\begin{proof}
If $t>0$ then the third condition of (\ref{eq.sne1}) ensures that $\pi=t$. In that case, the second inequality becomes
$A'\frac{\xx}{t} \ge \bb$, and the third equality becomes $\frac{x_i}{t}(A'\xx-b)_i=0,\ \forall i \in [m]$, which are
exactly the conditions of LCP (\ref{eq.lcp2}). Therefore, $\xx'=\frac{\txx}{t}$ is a
solution of the LCP
(\ref{eq.lcp2}).

Further, if $\xx'$ is a solution of the LCP, then for $t=\frac{1}{1+\sum_i x'_i}$, $x_i=t x'_i$ and $\pi=t$, $((\xx,t),\pi)$
satisfies all the conditions of (\ref{eq.sne1}), and hence the lemma follows.
The second part follows using the conditions of LCP formulation (\ref{eq.lcp2}) and symmetric NE (\ref{eq.sne1}).
\end{proof}

Lemma \ref{lem.sne} shows that symmetric NE of game $(S,S^T)$ with $t>0$ are in one-to-one correspondence with the solutions of LCP
(\ref{eq.lcp2}). One-to-one because clearly no two symmetric NE of game $(S,S^T)$ maps to the same solution
of LCP (\ref{eq.lcp2}). Next, we show that these are the only Symmetric NE of this game.

\begin{lemma}\label{lem.tzero}
If $\zz=(\xx,t)$ is a symmetric NE of game $(S,S^T)$ then $t>0$.
\end{lemma}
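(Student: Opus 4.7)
The plan is to argue by contradiction: suppose $t=0$ and show no such symmetric NE can exist. Writing $\zz=(\xx,0)$, the mixed-strategy condition $\sum_i x_i + t =1$ forces $\xx\ge 0$, $\xx\ne 0$. The dummy strategy $m{+}1$ pays $(S\zz)_{m+1} = \zeros^T\xx + 1\cdot t = t = 0$, and since $\pi$ is the maximum payoff, $\pi\ge 0$. The idea is then to use property $(P_1)$ to eliminate the $x_{n+2l}$ coordinates, reducing the problem to the triangular system governed by $A$ alone, where the smallest index in the support of $\xx$ yields an immediate contradiction.

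Concretely, I would first inspect the payoff of strategy $n+2l$ for $l\in[k]$. Since $t=0$, this payoff equals $(-A'\xx)_{n+2l}$; by property $(P_1)$, $u^{l'}_{n+2l}=0$ for all $l'$, so $(A'\xx)_{n+2l} = (A\xx)_{n+2l} = x_{n+2l-1}+x_{n+2l}\ge 0$, making this payoff $\le 0$. If some $x_{n+2l}>0$, the best-response condition in \eqref{eq.sne1} would require $(-A'\xx)_{n+2l}=\pi\ge 0$, forcing $x_{n+2l-1}+x_{n+2l}=0$, which contradicts $x_{n+2l}>0$. Hence $x_{n+2l}=0$ for every $l\in[k]$.

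With $\vv^{l^T}\xx = x_{n+2l}=0$ for all $l$, we get $A'\xx = A\xx - \sum_l \uu^l(\vv^{l^T}\xx) = A\xx$. Now pick $i^{*} = \min\{i : x_i > 0\}$, which exists because $\xx\ne 0$. Since $A$ is lower triangular with ones on the diagonal, we have
\[
(A\xx)_{i^*} = x_{i^*} + \sum_{j<i^*} A_{i^*,j}\,x_j = x_{i^*} > 0,
\]
where the middle sum vanishes by minimality of $i^*$. On the other hand, the best-response condition for $i^*$ (using $t=0$ and $A'\xx=A\xx$) gives $(-A\xx)_{i^*} = \pi$, i.e.\ $(A\xx)_{i^*} = -\pi \le 0$. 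This contradicts $(A\xx)_{i^*} > 0$, so $t=0$ is impossible.

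The main subtleties are packaged in the two structural inputs: property $(P_1)$, which simultaneously kills the $\uu^l$ contributions at the $n+2l$ rows and gives the explicit formula $(A\xx)_{n+2l}=x_{n+2l-1}+x_{n+2l}$, and the lower-triangularity of $A$ with unit diagonal, which lets the smallest-index argument go through. Neither step is computationally heavy, so the proof should be short; the only place to be careful is making sure the inequality $\pi\ge 0$ (from the dummy strategy) is combined with the complementary-slackness form of \eqref{eq.sne1} rather than with its strict version, so that the equality $x_{i^*}=-\pi$ arises crisply.
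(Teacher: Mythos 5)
Your proof is correct and follows essentially the same route as the paper's: assume $t=0$, use $(P_1)$ and complementarity to force $x_{n+2l}=0$ for all $l$ so that $A'\xx=A\xx$, then derive a contradiction from the minimal support index $i^*$ and the lower-triangularity of $A$. You even handle the sign at the last step more cleanly than the paper's own write-up (which has a small typo, writing $(-A\xx)_{i^*}=x_{i^*}$ instead of $-x_{i^*}$); no gaps.
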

\begin{proof}
To the contrary suppose $t=0$, then $\pi\ge t =0$ and $-A'\xx \le \pi$. Recall that $(-A'\xx)_{n+2l}=-x_{n+2l-1}-x_{n+2l},\
\forall l \in [k]$ using ($P_1$). Now, if $x_{n+2l}>0$ then $-(A'\xx)_{n+2l}<0$ which contradicts the third condition of
(\ref{eq.sne1}). Therefore, we have $\forall l\in [k],\ x_{n+2l}=0$ implying that $A'\xx=A\xx$ because $\vv^{l^T} \xx
=x_{n+2l}=0,\ \forall l \in [k]$. Let $i^*$ be the first
strategy played with the non-zero probability, i.e., $i^*=\mbox{argmin}_{x_i>0, i\in [m]} i$. The payoff from $i^*$
should be maximum and hence $(-A'\xx)_{i^*}=\pi$. Since, $A$ is lower-triangular we have
$\pi=(-A'\xx)_{i^*}=(-A\xx)_{i^*} = x_{i^*} <0$, a contradicting $0 = t \le \pi$.
\end{proof}

The next theorem follows using Theorem \ref{thm.kdlf} and Lemmas \ref{lem.size}, \ref{lem.lcp2}, \ref{lem.sne} and
\ref{lem.tzero}.

\begin{theorem}\label{thm.sym2nash}
The problem of computing a symmetric Nash equilibrium of a symmetric bimatrix game is PPAD-hard.
\end{theorem}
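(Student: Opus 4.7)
The plan is to chain together the reductions already established in this section with Theorem \ref{thm.kdlf}. Given an instance of $2D$-Linear-FIXP (a circuit $C$ with two inputs and two outputs defining $F\colon [0,1]^2 \to [0,1]^2$), I would first apply the construction of Section \ref{sec.lp} to obtain the matrix $A$, vectors $\bb$ and $\uu^1,\uu^2$ satisfying properties $(P_1)$ and $(P_2)$, and then build the LCP (\ref{eq.lcp2}) using $A' = A - \sum_{l\in[k]} \uu^l \vv^{l^T}$. Finally, I would assemble the symmetric game $(S, S^T)$ from the block matrix
\[
S \;=\; \left[\begin{array}{cc} -A' & \bb + \ones \\ \zeros^T & 1 \end{array}\right].
\]

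Next I would invoke the three lemmas already proved to obtain the chain of bijective (or at least surjective) correspondences. By Lemma \ref{lem.lcp2}, solutions of LCP (\ref{eq.lcp2}) correspond exactly to fixed-points of $F$, via reading off $\l_l = x_{n+2l}$. By Lemma \ref{lem.sne}, symmetric Nash equilibria of $(S, S^T)$ with $t > 0$ correspond exactly (after rescaling by $1/t$) to solutions of LCP (\ref{eq.lcp2}), and distinct equilibria map to distinct LCP solutions. By Lemma \ref{lem.tzero}, every symmetric equilibrium of $(S, S^T)$ necessarily has $t > 0$, so the correspondence of Lemma \ref{lem.sne} covers all symmetric equilibria with no exceptional cases. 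Composing these, from any symmetric NE of $(S, S^T)$ one can in polynomial time extract a fixed-point of $F$.

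For the hardness conclusion, I would take $k = 2$ and invoke Theorem \ref{thm.kdlf}, which asserts PPAD-hardness of $2D$-Linear-FIXP. It remains only to verify that the reduction itself is polynomial-time: the size of $A$, $\bb$, and $\uu^l$ is polynomial in $\mathrm{size}[C]$ by Lemma \ref{lem.size} (the vector $\cc$ of Lemma \ref{lem.size} is not needed here since LCP (\ref{eq.lcp2}) does not use it), and the symmetric game $(S, S^T)$ has dimension $(m+1)\times(m+1)$ with entries drawn from $A'$, $\bb$, $\ones$, and constants, so $\mathrm{size}[S]$ is polynomial in $\mathrm{size}[C]$ as well. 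This gives a polynomial-time reduction from $2D$-Linear-FIXP to symmetric $2$-Nash, establishing the theorem.

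The plan is essentially bookkeeping: the technical work has been front-loaded into Lemmas \ref{lem.lcp2}, \ref{lem.sne}, and \ref{lem.tzero}, and into Theorem \ref{thm.kdlf}. The only point requiring any care is ensuring that the LCP (\ref{eq.lcp2}) produced from the Linear-FIXP circuit indeed satisfies the structural hypotheses ($P_1$), ($P_2$), and the lower-triangularity of $A$ used in the proof of Lemma \ref{lem.tzero}; but these follow by construction from the padding (\ref{eq.m}) and from Lemma \ref{lem.max0}. Hence no step constitutes a real obstacle beyond correctly citing the appropriate lemma.
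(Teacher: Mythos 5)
Your proposal matches the paper's proof exactly: the paper derives Theorem \ref{thm.sym2nash} by combining Theorem \ref{thm.kdlf} with Lemmas \ref{lem.size}, \ref{lem.lcp2}, \ref{lem.sne}, and \ref{lem.tzero}, which is precisely the chain you construct (and your observation that $\cc$ is not needed for LCP (\ref{eq.lcp2}) is a correct and apt refinement of the size bound). No gaps.
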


As discussed in Section \ref{sec.lcp}, \cite{EY07} showed that solving simple stochastic games \cite{condon} reduces to
finding a unique fixed-point of a Linear-FIXP problem. Using this together with Theorem \ref{thm.sym2nash} we get the next
corollary.

\begin{corollary}\label{cor.ssg_sym}
Computing a unique symmetric NE of a symmetric game is as 
hard as solving a simple stochastic game.
\end{corollary}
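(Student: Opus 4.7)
The plan is to chain together the reduction built in Section~\ref{sec.simple} with the Etessami--Yannakakis reduction from simple stochastic games (SSG) to unique fixed-points in Linear-FIXP. Concretely, the Etessami--Yannakakis result states that solving an SSG instance can be reduced in polynomial time to computing \emph{the} fixed-point of a Linear-FIXP circuit $C$ that is promised to have a unique fixed-point. I would take such a $C$ (viewed as $1D$-Linear-FIXP, or padded up if needed) and then feed it into the construction of Section~\ref{sec.simple} to produce the symmetric game $(S,S^T)$ defined in~(\ref{eq.lcp2}) and the subsequent matrix $S$. The size of $(S,S^T)$ is polynomial in $\mathrm{size}[C]$ by Lemma~\ref{lem.size}, so the composite transformation is polynomial-time.

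The core of the proof is to track the solution correspondences already established and verify that ``unique'' is preserved at each step. First, Lemma~\ref{lem.lcp2} gives a bijection between fixed-points of the function $F$ defined by $C$ and solutions $\xx$ of the LCP in~(\ref{eq.lcp2}), via $\l_l = x_{n+2l}$. Next, Lemmas~\ref{lem.sne} and~\ref{lem.tzero} together give a bijection between the solutions $\xx$ of that LCP and the symmetric Nash equilibria $\zz=(\xx',t)$ of $(S,S^T)$: Lemma~\ref{lem.tzero} rules out $t=0$, so every symmetric NE has $t>0$ and hence arises as $\frac{(\xx,1)}{1+\sum_i x_i}$ for a unique LCP solution $\xx$, and the map is injective since distinct NE give distinct rescaled vectors $\xx/t$.

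Composing the two bijections yields a one-to-one correspondence between fixed-points of $F$ and symmetric NE of $(S,S^T)$. If $F$ has a unique fixed-point (as guaranteed by the SSG-to-Linear-FIXP reduction), then $(S,S^T)$ has a unique symmetric NE, and that NE encodes the unique fixed-point of $F$ and hence the solution of the original SSG instance. Consequently, any polynomial-time algorithm for computing the unique symmetric NE of a symmetric bimatrix game (under the promise that it is unique) would yield a polynomial-time algorithm for SSG, proving the corollary.

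I do not anticipate a real obstacle here, since the structural work was done in Lemmas~\ref{lem.lcp2}, \ref{lem.sne}, and~\ref{lem.tzero}; the only thing to be careful about is the \emph{uniqueness} direction of each bijection (in particular, verifying that distinct symmetric NE cannot collapse to the same LCP solution, which follows from the same $s'(1-s)=s(1-s')$ argument used in the proof of Theorem~\ref{thm.lftogame}). The remaining subtlety is purely bookkeeping: ensuring that the Etessami--Yannakakis reduction produces a Linear-FIXP instance in the form consumed by Section~\ref{sec.simple}, which is immediate as $kD$-Linear-FIXP is a subclass of Linear-FIXP for every $k$.
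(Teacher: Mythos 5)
Your proposal is correct and follows essentially the same route as the paper: the paper's own proof of this corollary is exactly the composition of the Etessami--Yannakakis SSG-to-unique-fixed-point reduction with the Section~\ref{sec.simple} construction, relying on the one-to-one correspondence established in Lemmas~\ref{lem.lcp2}, \ref{lem.sne}, and~\ref{lem.tzero}. You have merely spelled out the uniqueness bookkeeping that the paper leaves implicit.
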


McLannen and Tourky \cite{MT} showed that the symmetric Nash equilibria of a symmetric game $(S,S^T)$ are in one-to-one correspondence
with the Nash equilibrium strategies of the second player of game $(S,I)$, where $I$ is an identity matrix. Thus the next theorem
follows using Theorem \ref{thm.sym2nash}.

\begin{theorem}\label{thm.2nash}
The problem of computing a Nash equilibrium of a bimatrix game is PPAD-hard.
\end{theorem}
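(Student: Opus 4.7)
The plan is to derive Theorem \ref{thm.2nash} directly from Theorem \ref{thm.sym2nash} via the imitation-game reduction of McLennan and Tourky \cite{MT}, which is the argument already flagged in the sentence preceding the statement. Given a symmetric game $(S,S^T)$ on $n$ strategies, I would form the $n\times n$ bimatrix game $(S,I)$, where $I$ is the identity matrix. This transformation is clearly polynomial-time in $\mathrm{size}[S]$ and produces a game of the same dimension, so it preserves the instance size.

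The substantive step is to recall (from \cite{MT}) why the second player's Nash equilibrium strategies in $(S,I)$ are in one-to-one correspondence with the symmetric NE strategies of $(S,S^T)$. Against mixed strategy $\xx$ of the first player, the second player's payoff from pure strategy $j$ under $I$ is $x_j$, so any best response $\yy$ must be supported on $\argmax_j x_j$; by Lemma \ref{lem.nash} this is equivalent to saying that the support of $\yy$ is contained in the support of $\xx$, i.e., $y_j>0 \Rightarrow x_j>0$. Combined with the first player's best-response condition $(S\yy)_i\le \pi_1$ with equality on the support of $\xx$, this forces $\yy$ to satisfy $(S\yy)_j\le \pi_1$ for all $j$ with equality on $\mathrm{supp}(\yy)$, which is exactly the characterization in Lemma \ref{lem.symnash} for $\yy$ to be a symmetric NE of $(S,S^T)$. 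Conversely, if $\yy$ is a symmetric NE of $(S,S^T)$, then $(\yy,\yy)$ is a NE of $(S,I)$.

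Since the mapping $\yy \mapsto \yy$ is a polynomial-time one-to-one correspondence between second-player NE strategies of $(S,I)$ and symmetric NE strategies of $(S,S^T)$, any algorithm that computes a NE of a bimatrix game yields one that computes a symmetric NE of a symmetric game. Combined with Theorem \ref{thm.sym2nash}, this establishes PPAD-hardness of $2$-Nash.

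The only potentially subtle point is verifying the direction $y_j>0 \Rightarrow x_j>0$ (that every NE of $(S,I)$ indeed projects onto a symmetric NE of $(S,S^T)$, rather than just producing NE of $(S,I)$ from symmetric NE of $(S,S^T)$); but this is a standard calculation using that $I$ has full rank and strictly positive diagonal, which makes imitation strict. No new PPAD machinery is required beyond Theorem \ref{thm.sym2nash}, so this proposal is essentially a one-line appeal to \cite{MT}.
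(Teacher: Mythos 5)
Your overall plan matches the paper's own proof exactly: Theorem~\ref{thm.2nash} follows from Theorem~\ref{thm.sym2nash} via the McLennan--Tourky imitation game $(S,I)$, and the forward direction you spell out (any NE $(\xx,\yy)$ of $(S,I)$ has $\mathrm{supp}(\yy)\subseteq\mathrm{supp}(\xx)$, hence $\yy$ is a symmetric NE of $(S,S^T)$) is what the reduction actually needs, and your derivation of it is sound. Two small inaccuracies, neither fatal: first, ``$\yy$ supported on $\arg\max_j x_j$'' is not \emph{equivalent} to $\mathrm{supp}(\yy)\subseteq\mathrm{supp}(\xx)$ as you write, it merely implies it (since the maximum coordinate of a probability vector is positive); the one-way implication is all you use, so this is just loose phrasing. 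Second, and more substantively, your stated converse ``if $\yy$ is a symmetric NE of $(S,S^T)$, then $(\yy,\yy)$ is a NE of $(S,I)$'' is false in general: Player 2's best-response condition in $(S,I)$ forces $y_j>0\Rightarrow x_j=\max_k x_k$, i.e.\ $\xx$ must be \emph{uniform} on $\mathrm{supp}(\yy)$. For instance with $S=\left(\begin{smallmatrix}0&3\\2&0\end{smallmatrix}\right)$ the unique symmetric NE is $\yy=(3/5,2/5)$, yet $((3/5,2/5),(3/5,2/5))$ is not a NE of $(S,I)$ since strategy $2$ of the column player is not a best response. The correct witness is $\xx$ uniform on $\mathrm{supp}(\yy)$. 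This mistake does not compromise the reduction, because hardness only requires the forward direction (existence of a NE of $(S,I)$ is free by Nash's theorem), but you should fix the converse if you intend to claim the one-to-one correspondence.
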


\section{Linear-FIXP: Hardness of Approximation}\label{sec.approx}
Chen et. al. \cite{CDT} showed that higher dimensional discrete fixed-point problem (defined below) is PPAD-hard even when
the grid has a constant length in each dimension. Using this result, in this section we show inapproximability results for
Linear-FIXP, by reducing a discrete fixed point problems to finding an approximate solution of a Linear-FIXP
problem; the reduction is similar to that of Section \ref{sec.2dlf}. An approximate fixed point can be
defined as follows:

\begin{definition}
Vector $\xx \in [0,\ 1]^k$ is an $\epsilon$-approximate fixed point of function $F:[0,\ 1]^k\rightarrow [0,\ 1]^k$ if $\|\xx
- F(\xx)\|_\infty \le \epsilon$.
\end{definition}

Similar to $2D$-Brouwer, let $kD$-Brouwer represent the class of $k$-dimensional discrete fixed-point problems.
An instance of $kD$-Brouwer consists of a grid $G^k_{n}=\{0,\dots,2^n-1\}^k$, and a valid coloring function
$g:G^k_{n}\rightarrow\{0,1,\dots,k\}$, which satisfies the following: Let $\partial(G^k_{n})$ denote the set of points $\pp \in G^k_{n}$
with $p_i \in \{0,2^n-1\}$ for some $i$, i.e., boundary points, then,
\[
\mbox{For $\pp \in \partial(G^k_{n})$, if } p_i>0, \forall i \in [k]  \mbox{ then } g(\pp)=0, \mbox{ otherwise }
g(\pp)=\max\{i\ |\ p_i=0,
i \in [k]\}
\]

Let $K_\pp=\{\qq\ |\ q_i \in \{p_i,p_i+1\}\}$ be the set of vertices of a unit hyper-cube with $\pp$ at the lowest-corner. As discussed
in \cite{CDT}, given any valid coloring $g$ of $G^k_{n}$, $\exists \pp \in G^k_{n}$ such that the vertices of hyper-cube $K_\pp$ have
all $k+1$ colors; $K_\pp$ is called a panchromatic cube.  However, since there are $2^k$ vertices in a hyper-cube, given $\pp$ there is
no efficient way to check if $K_\pp$ is panchromatic. Therefore, Chen et. al. introduces the following notion of discrete fixed points.

\begin{definition}[Panchromatic Simplex \cite{CDT}]
A subset $P \subset G^k_{n}$ is {\em accommodated} if $P\subset K_\pp$ for some point $\pp \in G^k_{n}$. It is a {\em panchromatic
simplex} of a color assignment $g$ if it is accommodated and contains exactly $k+1$ points with $k+1$ distinct colors.
\end{definition}

From the above discussion it follows that for any valid coloring $g$ on $G^k_{n}$, there exists a panchromatic simplex in
$G^k_{n}$ \cite{CDT}.  Similar to $2D$-Brouwer the coloring function $g$ is specified by a $kD$-Brouwer mapping circuit $C^b$.
\medskip

\noindent{\bf $kD$-Brouwer Mapping Circuit:}
The circuit has $kn$ input bits, $n$ bits for each of the $k$ integers representing a grid point, and $2k$ output bits
$\D_i^+,\D_i^-,\ \forall i \in [k]$. It is a {\em valid Brouwer-mapping circuit} if the following is true:

\begin{itemize}
\item For every $\pp \in G_{n}$, the $2k$ output bits of $C^b$ satisfies one of the following $k+1$ cases: 
\begin{itemize}
\item Case $0$: $\forall i \in [k] $, $\D^-_i=1$ and $\D^+_i=0$.
\item Case $i$, $i \in [k]$: $\D^+_i=1$ and all the other $2k-1$ bits are zero.
\end{itemize}
\item For every $\pp\in \partial{G^k_n}$, if $\exists i \in [k]$ with $p_i=0$ then letting $i_{max}=\max\{i\ |\ p_i=0\}$, the output bits
satisfy {\em Case $i_{max}$}, otherwise they satisfy {\em Case 0}.
\end{itemize}

Such a circuit $C^b$ defines a valid color assignment $g_{C^b}: G^k_{n} \rightarrow \{0,1,\dots,k\}$ by setting $g_{C^b}(\pp)=i$, if
the output bits of $C^b$ evaluated at $\pp$ satisfy Case $i$. 
Let $\ee^i$ be a $k$-dimensional unit vector with $1$ on $i^{th}$
coordinate, and $\ee^0$ be a vector with all $k$ coordinates set to $-1$. 
Then, a $k$-dimensional vector $I$ set to $I_i=\D^+_i - \D^-_i,\ \forall i \in [k]$ is $\ee^i$ for {\em Case $i$}. This defines a 
discrete function $H:G^k_n\rightarrow G^k_n$ where $H(\pp)=\pp+\ee^{g_{C^b}(\pp)}$. 

Given a $kD$-Brouwer mapping circuit $C^b$ on grid $G^k_n$, next we 
construct a $kD$-Linear-FIXP circuit $C$ defining a function $F:[0,\ 2^n-1]^k \rightarrow [0,\ 2^n-1]^k$, which is an extension of
function $H$.  We show that all the $\frac{1}{poly(\CL)}$-approximate fixed-points of $F$ 
are in panchromatic cubes of $G^k_{n}$, where $\CL$ is the size of circuit $C^b$. 
Further, we give a polynomial time procedure to
compute a panchromatic simplex from an approximate fixed-point. When we reduce the range from $[0,\ 2^n-1]^k$ to $[0,\ 1]^k$, to bring
the function in to a standard form of Linear-FIXP, the approximation factor becomes $\frac{1}{2^npoly(\CL)}$.

Recall that circuit $C$ has $k$ real inputs and outputs, $\{\max,+,*\zeta\}$ operations, and rational constants.
The construction is almost same as that in Section \ref{sec.2dlf}. Let $L > k^4 $ be a large integer with value being a power of $2$,
and at most polynomial in $size[C^b]$, {\em i.e.}, $L=2^l \le poly(size[C^b])$. As in Definition \ref{def.well-poor} {\em
well-positioned} and {\em poorly-positioned} points of $\Real_+^k$ may be defined. Further, for $\pp \in [0,\ 2^n)^k$, let
$\pi(\pp)=\lfloor \pp \rfloor$ and $\zeta(\pp)=\ee^{g_{C^b}(\pi(\pp))}$.

For a well-positioned point $\pp \in [0,\
2^n)$ the bit representation of each coordinate of $\pi(\pp)$ can be computed in $C$ using
ExtractBits procedure of Table \ref{tab1} (due to Lemma \ref{lem.be}). This bit representation, when fed to a simulation of $C^b$ where
$\land$, $\lor$ and $\land$ are replaced by $\min$, $\max$ and $(1-x)$ respectively, outputs $2k$ values which is exactly
$C^b(\pi(\pp))$. However, it is still not clear how to efficiently check if hyper-cube $K_\qq$ is panchromatic because it has $2^k$
vertices. Further, if $\pp$ is poorly-positioned to start with, then it is not clear how to compute even the bit representation of
$\pi(\pp)$ using operations of Linear-FIXP.
To circumvent these issues we use a geometric lemma proved by Chen et. al. \cite{CDT}, described next.
For a finite set $S \subset \Real_+^k$, let $I_w(S)$ contain the indices of the well-positioned points of $S$ and $I_p(S)$ contain
indices of poorly-positioned points.

\begin{lemma}\label{lem.sampling2}
\cite{CDT} Given $\pp \in [0,\ 2^n-1]^k$, consider the set $S=\{\pp^1,\dots,\pp^{k^4}\}$ such that
\[
\pp^j = \pp + \frac{(j-1)}{L}\sum_{i\in[k]} \ee^i,\ \ \ \ j \in [k^4] 
\]

For each $j \in I_p(S)$, let $\rr^j \in \R^k$ be a vector with $\|\rr^j\|_\infty \le 1$.
And for each $j \in I_w(S)$, let $\rr^j=\zeta(\pp^k)$.
If $\| \sum_{j=1}^{n^4} \rr^j\|_\infty< 1$ then $Q_w=\{\pi(\pp^j)\ |\ j \in I_w(S)\}$ is panchromatic simplex.
\end{lemma}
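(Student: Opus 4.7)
The plan is to adapt the $k=2$ argument of Lemma~\ref{lem.sampling} to general $k$, via three ingredients.

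\emph{Geometry.} Since the diagonal displacement from $\pp^1$ to $\pp^{k^4}$ is $\tfrac{k^4-1}{L}\ones$ with $L > k^4$, each coordinate of $\pi(\pp^j)$ increments at most once as $j$ ranges over $[k^4]$. Hence $Q := \{\pi(\pp^j) : j \in [k^4]\}$ consists of at most $k+1$ distinct vertices of the single unit hyper-cube $K_\qq$ with $\qq = \pi(\pp^1)$, so $Q_w \subseteq Q$ is accommodated.

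\emph{Sparsity of poor samples.} A sample $\pp^j$ is poorly positioned only if some coordinate lies in one of the length-$1/L^2$ windows just below an integer. Consecutive samples differ by $1/L > 1/L^2$ per coordinate, so at most one $j$ per coordinate can hit such a window. A union bound over the $k$ coordinates gives $|I_p(S)| \le k$, hence $|I_w(S)| \ge k^4 - k$ and $\|\sum_{j \in I_p(S)} \rr^j\|_\infty \le k$.

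\emph{Counting contradiction for missing colors.} Let $W_c = |\{j \in I_w(S) : \zeta(\pp^j) = \ee^c\}|$ for $c \in \{0,\dots,k\}$. Using $\ee^0 = -\ones$, the $i$-th coordinate of $\sum_{j \in I_w(S)} \zeta(\pp^j)$ equals $W_i - W_0$ for each $i \in [k]$. The hypothesis $\|\sum_j \rr^j\|_\infty < 1$ combined with the sparsity bound forces $|W_i - W_0| \le k$ for every $i \in [k]$. If $W_0 = 0$, then $W_i \le k$ and $|I_w(S)| = \sum_i W_i \le k^2$, contradicting $|I_w(S)| \ge k^4 - k$ for $k \ge 2$. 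If $W_c = 0$ for some $c \in [k]$, combining $W_{c'} \le W_0 + k$ for $c' \in [k]$ with $\sum_{c' \ne c} W_{c'} \ge k^4 - k$ yields $kW_0 + k^2 - k \ge k^4 - k$, i.e.\ $W_0 \ge k^3 - k$; but then the $c$-th coordinate of $\sum_{j\in I_w(S)} \zeta(\pp^j)$ equals $-W_0$, of magnitude $> k$, again contradicting the norm bound. Hence every one of the $k+1$ colors appears in $Q_w$, giving $|Q_w| \ge k + 1$, which together with $Q_w \subseteq Q$ and $|Q| \le k+1$ makes $Q_w$ a panchromatic simplex.

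The main obstacle is calibrating the parameters so that the three bounds are simultaneously effective: with step $1/L$, window width $1/L^2$, $L > k^4$, and sample count $k^4$, Step~1 confines all samples to one hyper-cube, Step~2 keeps $|I_p(S)| \le k$, and Step~3 enjoys an $\Omega(k^3)$ vs.\ $O(k^2)$ gap that makes the missing-color hypothesis strictly infeasible for every $k \ge 2$.
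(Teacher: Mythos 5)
Your proof is correct and follows essentially the same route as the paper's: confine all $k^4$ samples to one unit hyper-cube, bound $|I_p(S)|\le k$, and derive a pigeonhole contradiction on the color counts $W_c$ from the bound $\|\rr^G\|_\infty\le k$. The only difference is cosmetic — in the missing-color case you deduce $W_0\ge k^3-k$ and contradict via coordinate $c$, whereas the paper bounds $W_0\le k$ and contradicts via the most frequent color $i^*$; both are the same counting argument.
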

\begin{proof}
Let $Q=\{\qq^j=\pi(\pp^j) \ |\ \pp^j \in S\}$. Since $\frac{k^4}{L} <<1$ the set crosses boundaries of the unit cells at most
$k$ times. In other words, for each $i \in [k]$, there is at most one $j_i$ such
that $q_i^{j_i} = q_i^{j_i-1}+1$. Therefore, set $Q$ can have at most $k+1$ elements, and they are part of the same unit hyper-cube,
which has to be $K_{\pi(\pp)}$. Clearly, $Q_w \subset Q$.

Further, since $\frac{1}{L^2}<<\frac{1}{L}<<1$, there can be at most $k$ poorly-positioned points in $S$. So, we have
$|I_w(S)|\ge k^4-k$. Let $\rr^G = \sum_{j\in I_w(S)} \rr^j$, then we have
$\|\rr^G + \sum_{j \in I_p(S)} \rr^j\|_\infty < 1 \Rightarrow \|\rr^G\|_\infty < 1+\|\sum_{j \in I_p(S)} \rr^j\|_\infty < k+1$, 
because $|I_p(S)|\le k$, and $\|r^j\|_\infty \le 1$ for each $j \in I_p(S)$.

Let $\forall i \in [k],\ W_i$ be the number of indices of $I_w(S)$ with $\rr^k=\ee^i$. Using the above fact, we will show that $W_i\neq
0, \forall i$, to prove the lemma.  

If $W_0=0$ then $W_i > k^2$ for some $i \in [k]$. In that case, $r^G_i\ge k^2$, a contradiction. 
If $W_t=0$ for a $t \in [k]$, then $W_0 < k+1$ or else $r^G_t \ge k+1$. Let $i^* = \argmax_{0 \le i \le k} W_i$, then clearly, $W_{i^*}
\ge k^3-1$ and $i^*\neq 0$. Then, $r^G_{i^*} \ge k^3-1-k$, again a contradiction. 
\end{proof}

Using Lemma \ref{lem.sampling2} we can construct circuit $C$ as done in steps $(S_1)$ to $(S_6)$ in Section \ref{sec.2dlf}, where instead
of $16$, $k^4$ points have to be sampled, and finally in step $(S_6)$ the incremental vector $\sum_j \rr^j$ has to be divided by $k^4$
in order to take an average. This circuit will define a piecewise-linear function $F:[0,\ 2^n-1]^k \rightarrow [0,\
2^n-1]^k$. Next, we show that it suffices to compute a $\frac{1}{L}$-approximate fixed point of $F$ in order to find a panchromatic
simplex.

\begin{lemma}\label{lem.approxfp}
Every $\frac{1}{L}$-approximate fixed point of $F$ is in a panchromatic hyper-cube of $G_{n,k}$.
\end{lemma}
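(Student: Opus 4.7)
The plan is to mimic the structure of Lemma \ref{lem.red1}, replacing exact equalities by inequalities that track the $\frac{1}{L}$ slack, and to invoke the remark after Lemma \ref{lem.sampling2} (really the strict inequality version allowed by Lemma \ref{lem.sampling2} itself) instead of the equality case.

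First I would unpack the definition. Let $\pp$ be a $\frac{1}{L}$-approximate fixed-point of $F$, so $\|\pp - F(\pp)\|_\infty \le \frac{1}{L}$. By the construction of $C$ (the $k$-dimensional analogue of steps $(S_1)$–$(S_6)$), we have $F(\pp)_i = \max\{\min\{p_i + r_i,\, 2^n-1\},\, 0\}$ where $r_i = \frac{1}{k^4}\sum_{j=1}^{k^4} r^j_i$ and each $\rr^j \in [-1,1]^k$. Treat the interior case first: if $p_i \in (0,2^n-1)$ for every $i$ then the clipping is inactive, so $F(\pp) = \pp + \rr$ and hence $\|\rr\|_\infty \le \frac{1}{L}$. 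This gives
\[
\Bigl\|\sum_{j=1}^{k^4} \rr^j\Bigr\|_\infty \le \frac{k^4}{L} < 1,
\]
using $L > k^4$. Since for each $j \in I_w(S)$ Lemma \ref{lem.be} forces $\rr^j = \zeta(\pp^j)$, and for $j \in I_p(S)$ step $(S_4)$ ensures $\|\rr^j\|_\infty \le 1$, the hypothesis of Lemma \ref{lem.sampling2} is met with strict inequality. Therefore $Q_w = \{\pi(\pp^j)\mid j\in I_w(S)\}$ is a panchromatic simplex inside $K_{\pi(\pp)}$, so $K_{\pi(\pp)}$ is panchromatic.

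It remains to rule out the boundary case, i.e. $p_i \in \{0, 2^n-1\}$ for some $i$. Here I will follow the case analysis of Lemma \ref{lem.red1} adapted to $k$ dimensions: by the validity conditions of $C^b$, every vertex of $K_{\pi(\pp^j)}$ lying on the boundary has a prescribed incremental vector $\zeta(\pp^j) \in \{\ee^0,\ee^1,\dots,\ee^k\}$ pointing strictly into the grid along some coordinate. Since $|I_w(S)| \ge k^4 - k$ and all well-positioned samples share this property (the sampling offset $\frac{k^4}{L}$ is too small to let $\pi(\pp^j)$ leave the boundary face containing $\pi(\pp)$), there is a coordinate $i^\star$ and a sign $\sigma\in\{\pm 1\}$ such that $\sigma\, r^j_{i^\star} \ge 0$ for every $j\in I_w(S)$ with at least a constant fraction being strictly $\sigma$. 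Averaging yields $|r_{i^\star}| \ge \frac{k^4-k - k}{k^4} \cdot \frac{1}{k+1}$ (or better), a constant bounded away from $0$. A routine check of each boundary sub-case shows that the coordinate with the large $|r_{i^\star}|$ is not clipped, so $|F(\pp)_{i^\star} - p_{i^\star}|$ remains bounded away from $0$, which violates $\|\pp - F(\pp)\|_\infty \le \frac{1}{L}$ for our (sufficiently large, polynomial) $L$.

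The main obstacle is the bookkeeping of the boundary case: making sure that, after the movement $r_{i^\star}$ is computed, clipping in step $(S_6)$ does not absorb it. This requires verifying on each face (and lower-dimensional face) of $\partial G^k_n$ that the dominant direction of $\rr$ among well-positioned samples points into the unclipped side, exactly as in Lemma \ref{lem.red1}, but with the extra care that poorly-positioned samples, of which there are at most $k$, cannot flip the sign of the average. Once this is done, the lemma follows, and the remark on reducing the range $[0,2^n-1]^k \to [0,1]^k$ in Section \ref{sec.2dlf} shows that the approximation factor degrades to $\frac{1}{(2^n-1)L} = \frac{1}{2^n\,\mathrm{poly}(\CL)}$, yielding the claimed inapproximability.
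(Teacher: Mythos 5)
Your high-level plan (interior vs.\ boundary, use Lemma \ref{lem.sampling2} in the interior, rule out the boundary by hand) is the same as the paper's, but the dichotomy you set up is incorrect in a way that creates a genuine gap. You assert that if $p_i \in (0,2^n-1)$ for every $i$ then the clipping in step $(S_6)$ is inactive, so $F(\pp)=\pp+\rr$. That is false: clipping is governed by where $p_i+r_i$ lands, not by $p_i$ alone. If, say, $p_{i}=\frac{1}{2L}$ and $r_i=-1$, then $p_i+r_i<0$ is clipped to $0$ and yet $|F(\pp)_i-p_i|=p_i<\frac{1}{L}$, so such a $\pp$ is a $\frac1L$-approximate fixed point lying strictly in the open cube to which your interior argument does not apply. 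Correspondingly, restricting the ``boundary case'' to the measure-zero set $p_i\in\{0,2^n-1\}$ leaves the entire strip where $p_i\in(0,\frac1L)$ or $p_i\in(2^n-1-\frac1L,2^n-1)$ uncovered. The paper's split is $\frac1L\le p_i\le 2^n-1-\frac1L$ for all $i$ (where it shows a \emph{contradiction} from $\|\rr\|_\infty\ge\frac{1}{k^4}>\frac1L$, with or without clipping, because the distance to either face is at least $\frac1L$) versus some $p_i$ within $\frac1L$ of a face; you need exactly that split.

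The second, independent, gap is that the boundary analysis is left as a sketch. Your claim that there is a coordinate $i^\star$ with $\sigma r^j_{i^\star}\ge 0$ for all well-positioned $j$ with a constant fraction strict, and that the resulting $|r_{i^\star}|$ escapes clipping, is precisely the nontrivial content of the lemma and is the part that requires case-by-case verification of the boundary-validity of $C^b$. The paper handles this with three explicit sub-cases (some $p_i<\frac1L$; all $p_i>1$ with some $p_{i'}>2^n-1-\frac1L$; both a small and a large coordinate), using $i_{\max}$ to identify the coordinate whose increment is forced nonnegative/nonpositive and cannot be eaten by clipping. Your proposal acknowledges this is the hard part but does not carry it out, and without the correct dichotomy above the cases you would have to check are not even the right ones. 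To repair the proof: replace the interior condition by $\frac1L\le p_i\le 2^n-1-\frac1L$ for all $i$ (then \emph{either} clipping is inactive and $\|\rr\|_\infty\le\frac1L$, or clipping is active and $|F(\pp)_i-p_i|\ge\frac1L$, so the contrapositive of Lemma \ref{lem.sampling2} applies), and then do the explicit boundary case analysis on the complement set.
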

\begin{proof}
Let $\pp$ be a $\frac{1}{L}$-approximate fixed point of $F$, and $\pp'=F(\pp)$.
Then, the set $S$ of sampled points is $S=\{\pp^j =\pp +\frac{(j-1)}{L}\sum_{i\in[k]} \ee^i\ |\ j \in [k^4]\}$, and $\rr^j$ is the
outcome vector in step $(S_4)$ for $\pp^j$. By construction, we have $\rr^j=\zeta(\pp^j), \forall j \in I_w(S)$, and $\|\rr^j\|_\infty
\le 1,\ \forall j \in [k^4]$. Further, $\rr$ is the average of $\rr^j$s, and hence $\|\rr\|_\infty\le 1$.

Suppose, $\pp$ is not inside a panchromatic hyper-cube, then $\|\rr\|_\infty \ge\frac{1}{k^4} > \frac{1}{L}$ by Lemma \ref{lem.sampling2}.
If $\pp$ is at least $\frac{1}{L}$ distance away from the boundary of
$[0,\ 2^n-1]^k$, i.e., $\frac{1}{L} \le p_i \le 2^n-1-\frac{1}{L},\ \forall i \in [k]$, then clearly $\|\pp-F(\pp)\|\ge \frac{1}{L}$, a
contradiction. 

For the points near boundary it may happen that $\|\rr\|_\infty \ge\frac{1}{k^4}$, but still due to rounding in step $(S_6$), they
generate dummy fixed-points. Using the fact that $C^b$ generates a valid coloring, we show that this can never happen.
Let $\pp$ be such that for some $i$ either $p_i<\frac{1}{L}$ or $p_i>2^n-1-\frac{1}{L}$.

\begin{itemize}
\item $\exists i \in [k],\ p_i < \frac{1}{L}$: Let $i_{max}=\max\{i\ |\ p_i < \frac{1}{L}\}$, then $\forall j \in [k^4]
p^j_{i_{max}} <1$. Therefore, $\exists i' \ge i_{max}$ such that $p_i < 1$ and $r_{i'}>0$, implying that $p'_i > p_i$.
\item $\forall i \in [k], p_i>1$: Since $\exists i'$ with $p_{i'}>2^n-1-\frac{1}{L}$, except for $\pp^1$ all other $\pp^j$ are outside
of $[0,\ 2^n-1]^k$, and $\pi(\pp^j)>0$. Therefore, $\forall j \in I_w(S), j \neq 1$, we have $\rr^j=\ee^0<0$. Hence $\exists i$, such
that $p'_i<p_i$.
\item $\exists i, i' \in[k], p_i<1$ and $p_{i'}>2^n-1-\frac{1}{L}$: Let $i_{max}=\max\{i\ | p_i <1\}$, then $\exists i'' \le i_{max}$
such that either $p_{i''}<1$ and $r_{i''}>0$ implying that $p'_{i''}>p_i$, or $r_{i'}<0$ implying that $p'_{i'}<p_{i'}$.
\end{itemize}
\end{proof}

If $\pp$ is a $\frac{1}{L}$-approximate fixed point of $F$, then it is in panchromatic hyper-cube of $G_{n,k}$ (Lemma
\ref{lem.approxfp}), and the panchromatic simplex containing $\pp$ is $\{\pi(\pp^j)\ |\ \pp^j = \pp + \frac{(j-1)}{L}\sum_{i\in[k]}
\ee^i\}$ (Lemma \ref{lem.sampling2}). Therefore given a $\frac{1}{L}$-approximate fixed point of $F$ a panchromatic simplex
of $G^k_n$ can be computed in polynomial time.

We can shrink the range of function $F$ from $[0,\ 2^n-1]$ to $[0,\ 1]^k$ by multiplying and dividing the inputs and outputs
respectively by $2^n-1$. 
For the modified function, $\frac{1}{2^n L}$-approximate fixed-points are guaranteed to be in panchromatic hyper-cubes.
Note that, Lemmas \ref{lem.sampling2} and \ref{lem.approxfp} holds for any $L$ strictly greater than $k^4$, hence
$\frac{1}{2^nL}=\frac{1}{2^npoly(k)}$.
Further, by construction $size[C]$ = (\#inputs + \# gates + total size of the constant used in $C$), 
is polynomial in $size[C^b]$

Therefore, a $kD$-Brouwer problem of computing a panchromatic simplex reduces to finding a
$\frac{1}{\gamma poly(k)}$-approximate fixed-point of a $kD$-Linear-FIXP function, where $\gamma$ is the largest absolute 
constant used in the circuit.
Chen et. al. \cite{CDT} proved that $kD$-Brouwer with $n=3$ and $k$ not a constant is PPAD-hard (Brouwer$^{f_1}$ in \cite{CDT}). 
Since, the largest absolute constant used in the $kD$-Linear-FIXP circuit constructed from such an instance is of $O(1)$, the
next theorem follows,

\begin{theorem}\label{thm.approx}
Let $F$ be a piecewise-linear function defined by a Linear-FIXP circuit $C$, and let $\CL=size[C]$. Then
Computing a $\frac{1}{poly(\CL)}$-approximate fixed-point of $F$ is PPAD-hard. 
\end{theorem}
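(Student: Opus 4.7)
The plan is to reduce $kD$-Brouwer with grid length $n=3$ (a constant) and $k$ an input parameter, shown PPAD-hard by Chen et al., to the problem of finding a $\frac{1}{poly(\CL)}$-approximate fixed-point of a Linear-FIXP function $F:[0,\ 1]^k \to [0,\ 1]^k$. The construction mirrors the $2D$ reduction of Section~\ref{sec.2dlf}, replacing the constant $16$ with $k^4$ sample points and using the $k$-dimensional sampling lemma (Lemma~\ref{lem.sampling2}) in place of Lemma~\ref{lem.sampling}. In particular, I would take steps $(S_1)$--$(S_6)$ verbatim with $k^4$ shifts $\pp^j = \pp + \frac{j-1}{L}\sum_{i \in [k]} \ee^i$, invoke \emph{ExtractBits} (Lemma~\ref{lem.be}) on each coordinate of each $\pp^j$, simulate the Brouwer-mapping circuit $C^b$ via $\{\max,\min,1-x\}$ gates to obtain $\rr^j$, average, and output the clipped $\pp + \rr$.

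Next I would establish the key inapproximability claim: for $L > k^4$ chosen as a power of $2$ with $\log L = poly(size[C^b])$, every $\frac{1}{L}$-approximate fixed-point of the resulting $F:[0,\ 2^n-1]^k \to [0,\ 2^n-1]^k$ lies in a panchromatic hypercube of $G^k_n$, and the well-positioned points among the $k^4$ samples even give an explicit panchromatic simplex (Lemma~\ref{lem.approxfp}). The interior case is immediate: if $\pp$ were not in a panchromatic hypercube, Lemma~\ref{lem.sampling2} combined with the remark following Lemma~\ref{lem.sampling} forces $\|\rr\|_\infty \ge 1/k^4 > 1/L$, contradicting approximate fixedness. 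The boundary case requires a careful enumeration, exactly as in Lemma~\ref{lem.red1} generalized to $k$ dimensions, using the fact that $C^b$ produces a valid coloring so that on boundary faces all well-positioned $\rr^j$ point into the interior in a consistent direction that survives clipping in $(S_6)$.

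To land in the standard Linear-FIXP format $[0,\ 1]^k \to [0,\ 1]^k$, I would rescale by multiplying inputs by $2^n-1$ and dividing outputs by $2^n-1$; this shrinks the approximation threshold from $\frac{1}{L}$ to $\frac{1}{(2^n-1)L}$. Since the hard instances of $kD$-Brouwer use $n = 3$, this is $\Theta(1/L) = 1/poly(k)$, and since the constants used in $C$ (including $L$ itself and the $*\zeta$ rationals from the scaling) are all of size $poly(k)$, we have $\CL = poly(k)$, so $1/poly(k) = 1/poly(\CL)$. Recovering a panchromatic simplex from a $\frac{1}{poly(\CL)}$-approximate fixed-point is polynomial-time by Lemma~\ref{lem.sampling2}: just compute the $k^4$ shifts and return the well-positioned ones.

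The main obstacle is the boundary analysis in the approximate setting: unlike Lemma~\ref{lem.red1}, an approximate fixed-point need not satisfy $\pp = F(\pp)$ exactly, so I must verify that at every boundary configuration the clipped output differs from $\pp$ by at least $1/L$ in some coordinate. This is delicate because near a boundary the sampled $\pp^j$'s may themselves straddle a face of $[0,\ 2^n-1]^k$ and mix well- and poorly-positioned points; the case split in the proof of Lemma~\ref{lem.approxfp} on $p_i < 1/L$, $p_i > 2^n-1-1/L$, and the mixed case is the technical heart of the argument. Once this is handled, Theorem~\ref{thm.approx} follows by chaining the reduction with the PPAD-hardness of Brouwer$^{f_1}$ from~\cite{CDT}.
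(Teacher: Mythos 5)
Your proposal is correct and follows essentially the same route as the paper: reduce the PPAD-hard $kD$-Brouwer instances with $n=3$ via the $k^4$-point sampling construction and Lemma~\ref{lem.sampling2}, show every $\frac{1}{L}$-approximate fixed point lies in a panchromatic hypercube with the same interior/boundary case split as Lemma~\ref{lem.approxfp}, and rescale to $[0,\ 1]^k$ noting that $n$ and the circuit constants are $O(1)$ so the threshold is $\frac{1}{poly(\CL)}$. No substantive differences from the paper's argument.
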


\begin{remark}
We note that Theorem \ref{thm.approx} may also follow from the $2$-Nash to Linear-FIXP reduction shown by Etessami and
Yannakakis \cite{EY07}.
\end{remark}

\section{Discussion}\label{sec.disc}
In this paper we show that Nash equilibrium computation in bimatrix games with rank$\ge 3$ is PPAD-hard by reducing
$2D$-Brouwer to rank-$3$ games. Given an instance of $2D$-Brouwer first we reduce it to $2D$-Linear-FIXP, a $2$-dimensional
fixed-point problem defined by a Linear-FIXP circuit with two inputs. Next we replace the circuit by a parameterized linear
program with two parameters, and finally using the connections between LPs and LCPs, and LCPs and bimatrix games, we
construct a rank-$3$ game. This, last step of the reduction uses the fact that the parameterized linear program was
constructed from a Linear-FIXP circuit. It will be interesting to reduce a fixed-point problem, defined by a
parameterized LP, to a bimatrix game in general. This will extend the classical construction of zero-sum games from linear
programs by Dantzig \cite{dantzig}. 
If fixed-point problem with $k$-parameter LP can be reduced to a rank-$k$ game, then it will
imply that rank-$2$ games are also PPAD-hard, settling the only unresolved case.

As corollaries of our reduction, we get that $2D$-Linear-FIXP = PPAD = Linear-FIXP, and in turn a sharp dichotomy on
complexity of Linear-FIXP problems; $1D$-Linear-FIXP is in P, while for $k\ge 2$, $kD$-Linear-FIXP is PPAD-complete. 
We also give an explicit construction of a
rank-$(k+1)$ game from a $kD$-Linear-FIXP problem. This construction (almost) preserves the number of solutions (in
terms of NE strategies of the first player), and is different from all the previous approaches. This should be of
useful to understand the connections between problems that reduces to Linear-FIXP, and bimatrix games.
One such example is Corollary \ref{cor.ssg} which shows that even if Nash equilibrium set of a bimatrix game is guaranteed to
be convex, finding one is as hard as solving a simple stochastic game.

In Section \ref{sec.approx} we show hardness of approximation for Linear-FIXP problems. It is not immediately clear how to
extend this to bimatrix games through our reduction.  If done, it will provide an alternate (simpler) proof of
inapproximability in $2$-Nash \cite{CDT}.

For the case of symmetric games, the PPAD-hardness of rank-$3$ games imply that computing a symmetric Nash equilibrium in a
symmetric rank-$6$ games is PPAD-hard. The polynomial time algorithm for computing symmetric NE of a rank-$1$ symmetric
games by Mehta et. al. \cite{MVY} leaves the status of symmetric games with rank-$2$ to rank-$5$ unresolved.
\medskip
\medskip
\medskip

\noindent{\bf Acknowledgments.}
I wish to thank Vijay V. Vazirani, Milind Sohoni, Bernhard von Stengel, Mihalis Yannakakis, and Jugal Garg 
for valuable discussions. 

\bibliographystyle{abbrv}
\bibliography{kelly,sigproc}

\end{document}